\DeclareMathAlphabet{\mathpzc}{OT1}{pzc}{m}{it}
\numberwithin{equation}{section}
\renewcommand{\theequation}{\arabic{section}.\arabic{equation}}
\theoremstyle{plain}
\newtheorem{theorem}{Theorem}[section]
\newtheorem{lemma}{Lemma}[section]
\newtheorem{proposition}{Proposition}[section]
\newtheorem{corollary}{Corollary}[section]
\theoremstyle{definition}
\newtheorem{definition}{Definition}[section]
\theoremstyle{remark}
\newtheorem{remark}{\bf Remark}[section]
\newcommand{\oversetc}[1]{\overset{\text{\tiny}{\text{\tiny}\circ}}#1}
\newcommand\undersym[2]{\raisebox{-6pt}{\tiny$#2$}{\kern-5pt}\mbox{$#1$}}
\newcommand\overcirc[1]{\raisebox{10pt}{\tiny$\circ$}{\kern-7pt}\mbox{$#1$}}
\title{ \bf \Huge{A Global Approach to Absolute \\ Parallelism Geometry}\footnote{arXiv: 1209.1379 [gr-qc]}}
\author{Nabil L. Youssef $^{1,\,2}$ and Waleed A. Elsayed $^{1,\,2}$}
\date{ }
\begin{document}

\bibliographystyle{plain}
\maketitle                     % Produces the title.
\vspace*{-.95cm}

\begin{center}
${^1}$Department of Mathematics, Faculty of Science, \\
Cairo University, Giza, Egypt \\
\vspace*{.5cm}
${^2}$Center for Theoretical Physics (CTP),\\
British University in Egypt (BUE)
\end{center}

\begin{center}
$
\begin{array}{ll}
\text{Emails:}&\text{nlyoussef@sci.cu.edu.eg, nlyoussef2003@yahoo.fr} \\
  &\text{waleedelsayed@sci.cu.edu.eg, waleed.a.elsayed@gmail.com}
\end{array}
$
\end{center}

\maketitle \bigskip

%%%%%%%%%%%%%%%%%%%%%%%%%%%%%%%%%%%%%%%%%%%%%%%%%%%%%%%%%%%%%%%% Abstract %%%%%%%%%%%%%%%%%%%%%%%%%%%%%%%%%%%%%%%%%%%%%%%%%%%%%%%%%%%%

%\begin{abstract}

\noindent\textbf{Abstract}. In this paper we provide a \emph{global}
investigation of the geometry of parallelizable manifolds  (or
absolute parallelism geometry) frequently used for application. We
discuss the different linear connections and curvature tensors from
a global point of view. We give an existence and uniqueness theorem
for a remarkable linear connection, called the canonical connection.
Different curvature tensors are expressed in a compact form in terms
of the torsion tensor of the canonical connection only. Using the
Bianchi identities, some interesting identities are derived. An
important special fourth order tensor, which we refer to as Wanas
tensor, is globally defined and investigated. Finally a ``double-view"
for the fundamental geometric objects of an absolute parallelism
space is established: The expressions of these geometric objects are
computed in the parallelization basis and are compared with the
corresponding local expressions in the natural basis. Physical
aspects of some geometric objects considered are pointed out.
%\end{abstract}

\bigskip
\noindent{\bf Keywords:} Absolute parallelism geometry,
Parallelization vector field, Parallelization basis, Canonical
connection, Dual connection, Bianchi identities, Wanas tensor.

\bigskip
\noindent{\bf MSC 2010}: 53C05, 53A40, 51P05.

\vspace{5pt} \noindent{\bf PACS 2010}: 02.04.Hw, 45.10.Na, 04.20.-q,
04.50.-h.

\newpage

%%%%%%%%%%%%%%%%%%%%%%%%%%%%%%%%%%%%%%%%%%%%%%%%%%%%%%%%%%%%%%% Section0: Introduction %%%%%%%%%%%%%%%%%%%%%%%%%%%%%%%%%%%%%%%%%%%%%%%%%%%%%%%%%%%%%%%

\section{Introduction}

\hspace*{.4cm} At the beginning of the $20^{th}$ century, the importance of
geometry in physical applications has been illuminated by Albert
Einstein. He has advocated a new philosophy known as ``The
Geometerization Philosophy". This philosophy can be summarized in
the following statement: ``To understand nature, one has to start
with geometry and end with physics" \cite{Wanasphilo}. In 1915,
Einstein used this philosophy to understand the essence of Gravity,
starting with a 4-dimensional Riemannian geometry, ending with a
successful theory for gravity; the General theory of Relativity (GR)
\cite{Einstein}. After the success of the theory, by testing its
predictions and applications, many authors have directed their
attention to the use of geometry to solve physical problems.

Einstein in his continuous attempts to understand more physical
interactions, has searched for a wider geometry to unify gravity and
electromagnetism. The problem with Riemannian geometry, however, is
that it has only ten degrees of freedom (the components of the
metric tensor in four dimensions) which are just sufficient to
describe gravity. Thus to construct a successful geometric theory
that would encompass both gravity and electromagnetism, one needs to
enlarge the number of degrees of freedom. This can be done in two
different ways: either by increasing the dimension of the underlying
space (\`{a} la-Kaluza-Klein) or by replacing the Riemannian
structure by another geometric structure having more degrees of
freedom (without increasing the dimension of the underlying space).
In tackling the problem of unification, Einstein has chosen the
second alternative. This led him to consider Absolute Parallelism
geometry (AP-geomety) \cite{AP-Ein} which has sixteen degrees of
freedom (the number of components of the vector fields forming the
parallelization); six extra degrees of freedom are gained. Many
developments of AP-geometry have been achieved  (e.g.,
\cite{unificationT, Wanashis, local}). Theories
constructed in this geometry (e.g., \cite{unificationP, Moller, unificationT}) together with applications
(e.g., \cite{Nashed1, Nashed2, Wanas}) show the
advantages of using AP-geometry in physics. Moreover, absolute parallelism characterizes the generalized Berwald spaces
among the Finsler spaces \cite{Tamassy1, Tamassy2}.

\vspace{6pt} In this paper, we establish a global approach to
AP-geometry. The global formulation of the different geometric
aspects of AP-geometry has many advantages. Some advantages of the
global formalism are:
\begin{itemize}
\item It could give more insight into the infra-structure of physical
theories constructed in the context of AP-geometry. Moreover, it may
offer the opportunity to unify field theories in a more economic
scheme.

\item It helps better understand the meaning and the essence of the
geometric objects and formulae without being trapped into the
complexity of indices. As a consequence, it reduces the probability
of mistake

\item It connects AP-geometry with the modern language of the
differential geometry.

\item In local coordinates some important expressions, such as the
Lie bracket $[\frac{\partial}{\partial x^i},\frac{\partial}{\partial
x^j}]$, disappear. Consequently, the contribution, geometrical or
physical, of all Lie brackets are completely hidden. Such
expressions do not vanish in global formalism. This may produce new
geometric or physical information.

\item The local formalism represents roughly a \textbf{micro}
viewpoint or a micro approach whereas the global formalism
represents a \textbf{macro} viewpoint. The two viewpoints are not
alternatives but rather complementary and are indispensable both for
geometry and physics.

\item As global results hold on the entire manifold (not only
on coordinate neighborhoods), they also hold locally. The converse
is not true; a result may hold locally but not globally. Moreover,
one can easily shift from global to local; it suffices to view the
global result in a coordinate neighborhood.
\end{itemize}

These are the main motivations of the present work, where all
results obtained are formulated in a prospective modern coordinate
free form.

\vspace{6pt} The paper is organized in the following manner. In
section 1, we define globally the basic elements of the AP-geometry
and prove an existence and uniqueness theorem for a remarkable
linear connection which we call the canonical connection \big(a
(flat) connection for which the parallelization vector fields are
parallel\big). We also study some properties of this connection. In
section 2, we define three other natural connections (the dual,
symmetric and Levi-Civita connections) and investigate their
properties together with the tensor fields associated to them. In
section 3, we express the curvature tensors of the above mentioned
three connections in a simple and compact form in terms of the
torsion tensor of the canonical connection only. We then use the
Bianchi identities to derive some further interesting identities. In
section 4, we give a global treatment of the W-tensor and
investigate some of its properties. In section 5, we present a
double-view for the fundamental geometric objects of AP-geometry: On
one hand, we consider the local expressions of these geometric
objects in the natural basis and, on the other hand, we compute
their expressions in the parallelization basis, and then compare
between the two sets
of expressions.\\
It should finally be noted that this work is based
mainly on \cite{local}.

Throughout the present paper we use the following notation:\\
$M$: an n-dimensional smooth real manifold,\\
$\mathfrak{F}(M)$: the $\mathds{R}$-algebra of $C^{\infty}$ functions on $M$,\\
$\mathfrak{X}(M)$: the $\mathfrak{F}(M)$-module of vector fields on $M$,\\
$T_xM$: the tangent space to $M$ at $x\in M$,\\
${T_x}^*M$: the cotangent space to $M$ at $x\in M$.\\
We make the assumption that all geometric objects we consider are of
class $C^{\infty}$.

%%%%%%%%%%%%%%%%%%%%%%%%%%%%%%%%%%%%%%%%%%%%%%%%%% Section1: Canonical connection %%%%%%%%%%%%%%%%%%%%%%%%%%%%%%%%%%%%%%%%%%%%%%%%%%%%%%

\section{Canonical connection}

\hspace*{.4cm} In this section, we give the definition of an AP-space and prove an existence and uniqueness theorem for a remarkable linear connection, which we call the canonical connection. Also, we prove some properties concerning this connection.

\begin{definition}\cite{Brickell}
A parallelizable manifold is a pair $(M,\;\undersym{X}{i})$, where $M$ is an n-dimensional smooth manifold and $\;\undersym{X}{i}\,(i = 1,\,...,\,n)$ are  n independent vector fields defined globally on $M$. The vector fields $\;\undersym{X}{1},\,...,\;\undersym{X}{n}$ are said to form a parallelization on $M$.
\end{definition}

Such a space is also known in the literature as an \emph{Absolute Parallelism space} or a \emph{Teleparallel space}. For simplicity, we will rather use the expressions ``\emph{AP-space} and \emph{AP-geometry}".

Since $\;\undersym{X}{i}$ are n independent vector fields on $M$, $\{\;\undersym{X}{i}(x): i=1,\,...,\,n\}$ is a basis of $T_xM$ for every $x\in M$. Any vector field $Y\in\mathfrak{X}(M)$ can be written globally as $Y=Y^{i}\,\undersym{X}{i}$, where $Y^{i}\in\mathfrak{F}(M)$. Here we use the notation $Y^{i}$ to denote the components of $Y$ with respect to $\,\undersym{X}{i}$. Einstein summation convention will be applied on Latin indices whatever their position is (even if the two repeated indices are downword).

\begin{definition}
The n differential $1$-forms $\;\undersym{\Omega}{i}:\mathfrak{X}(M)\longrightarrow\mathfrak{F}(M)$ defined by
\begin{equation} \label{1form}
\;\undersym{\Omega}{i}(\;\undersym{X}{j})=\delta_{ij}
\end{equation}
are called the parallelization forms.
\end{definition}

\hspace*{-.6cm}Clearly, if $Y=Y^{i}\;\undersym{X}{i}$, then
\begin{equation} \label{base}
\undersym{\Omega}{i}(Y)=Y^{i},\qquad\;\undersym{\Omega}{i}(Y)\;\undersym{X}{i}=Y.
\end{equation}

It follows directly from (\ref{1form}) that $\{\;\undersym{\Omega}{i}_{x}=\;\undersym{\Omega}{i}|_{T_xM}:i=1,\,...,\,n\}$ is the dual basis of the parallelization basis $\{\;\undersym{X}{i}(x):i=1,\,...,\,n\}$ for every $x\in M$. We call $\{\;\undersym{\Omega}{i}_{x}:i=1,\,...,\,n\}$ the dual parallelization basis of ${T_x}^*M$. The parallelization forms $\;\undersym{\Omega}{i}$ are independent in the $\mathfrak{F}(M)$-module $\mathfrak{X}^*(M)$.

\begin{lemma}\label{AP-condition}
Let $D$ be a linear connection on $M$. The $D$-covariant derivative of $\;\undersym{\Omega}{i}$ vanishes if and only if the $D$-covariant derivative of $\;\undersym{X}{i}$ vanishes.
\end{lemma}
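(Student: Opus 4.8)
The plan is to exploit the duality between the parallelization vector fields and the parallelization forms encoded in (\ref{1form}), together with the compatibility of any linear connection with the natural pairing of $1$-forms and vector fields. Recall that for a linear connection $D$, the covariant derivative of a $1$-form $\omega$ is characterized by the Leibniz rule
\[
(D_Z\,\omega)(Y) = Z\big(\omega(Y)\big) - \omega(D_Z\,Y), \qquad Z,Y\in\mathfrak{X}(M).
\]
First I would specialize this to $\omega = \undersym{\Omega}{i}$ and $Y = \undersym{X}{j}$. Since $\undersym{\Omega}{i}(\undersym{X}{j}) = \delta_{ij}$ is a \emph{constant} function on $M$ by (\ref{1form}), its derivative along any $Z$ vanishes, and we obtain the single key identity
\[
(D_Z\,\undersym{\Omega}{i})(\undersym{X}{j}) = -\,\undersym{\Omega}{i}(D_Z\,\undersym{X}{j}), \qquad i,j=1,\dots,n,
\]
valid for every $Z\in\mathfrak{X}(M)$. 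The whole lemma then follows by reading this identity in the two directions.

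For the implication ``$D\undersym{X}{i}=0 \Rightarrow D\undersym{\Omega}{i}=0$'', I assume $D_Z\,\undersym{X}{j}=0$ for all $Z$ and all $j$; the right-hand side of the key identity then vanishes, so $(D_Z\,\undersym{\Omega}{i})(\undersym{X}{j})=0$ for every $j$. Because $\{\undersym{X}{j}(x)\}_{j=1}^{n}$ is a basis of $T_xM$ at each point, a $1$-form annihilating all $\undersym{X}{j}$ is identically zero; hence $D_Z\,\undersym{\Omega}{i}=0$, and as $Z$ is arbitrary, $D\undersym{\Omega}{i}=0$.

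Conversely, for ``$D\undersym{\Omega}{i}=0 \Rightarrow D\undersym{X}{i}=0$'', I assume $D_Z\,\undersym{\Omega}{i}=0$ for all $Z$ and all $i$, so now the left-hand side vanishes and the identity gives $\undersym{\Omega}{i}(D_Z\,\undersym{X}{j})=0$ for every $i$. Expanding the vector field $D_Z\,\undersym{X}{j}$ in the parallelization basis by means of the second formula in (\ref{base}), namely $\undersym{\Omega}{i}(Y)\,\undersym{X}{i}=Y$, shows that all its components vanish, whence $D_Z\,\undersym{X}{j}=0$ and therefore $D\undersym{X}{j}=0$.

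There is no serious obstacle here; the argument is essentially a one-line computation once the correct tool is in place. The only point requiring care is the justification of the Leibniz/compatibility rule for $D$ acting on $1$-forms, together with the dual observations that a $1$-form is zero iff it annihilates a frame and a vector field is zero iff all its frame-components vanish --- both of which rest on $\{\undersym{X}{i}\}$ being a global frame, exactly the defining property of the parallelization.
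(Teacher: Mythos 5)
Your proof is correct and follows essentially the same route as the paper's: both rest on the identity $(D_Y\;\undersym{\Omega}{i})(\;\undersym{X}{j})=-\;\undersym{\Omega}{i}(D_Y\;\undersym{X}{j})$, obtained from the Leibniz rule together with the constancy of the pairing $\;\undersym{\Omega}{i}(\;\undersym{X}{j})=\delta_{ij}$, and then conclude in both directions using the frame/coframe duality of the parallelization. The paper merely packages the two directions into a single displayed identity by expanding an arbitrary $Z$ in the frame and contracting with $\;\undersym{X}{i}$, which is a cosmetic difference only.
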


\begin{proof}
For every $Y,\,Z\in\mathfrak{X}(M)$, we have, by (\ref{base}) and (\ref{1form}),
$$(D_Y\;\undersym{\Omega}{i})(Z)=(D_Y\;\undersym{\Omega}{i})\big(\;\undersym{\Omega}{j}(Z)\;\undersym{X}{j}\big)= -\;\undersym{\Omega}{j}(Z)\;\undersym{\Omega}{i}(D_Y\;\undersym{X}{j}).$$
Consequently, by (\ref{base}),
$$\big((D_Y\;\undersym{\Omega}{i})(Z)\big)\;\undersym{X}{i}=-\;\undersym{\Omega}{j}(Z)D_Y\;\undersym{X}{j},$$
from which the result follows.
\end{proof}

\begin{theorem}
On an AP-space $(M,\;\undersym{X}{i})$, there exists a unique linear connection $\nabla$ for which the parallelization vector fields $\;\undersym{X}{i}$ are parallel.
\end{theorem}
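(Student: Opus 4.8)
The plan is to prove uniqueness first, since that computation simultaneously reveals the only possible formula for $\nabla$, and then to verify that this formula indeed defines a bona fide linear connection. The requirement that the parallelization fields be parallel means precisely that $\nabla_Y\,\undersym{X}{i}=0$ for every $Y\in\mathfrak{X}(M)$ and every $i$. The key structural fact I would exploit is that $\{\undersym{X}{i}\}$ is a global frame, so every vector field admits the global expansion $Z=\undersym{\Omega}{i}(Z)\,\undersym{X}{i}$ furnished by (\ref{base}).

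For uniqueness, I would take an arbitrary $Z\in\mathfrak{X}(M)$, write $Z=\undersym{\Omega}{i}(Z)\,\undersym{X}{i}$, and apply the Leibniz rule of any candidate connection $\nabla$ satisfying $\nabla_Y\,\undersym{X}{i}=0$:
$$\nabla_Y Z=\nabla_Y\big(\undersym{\Omega}{i}(Z)\,\undersym{X}{i}\big)=\big(Y\,\undersym{\Omega}{i}(Z)\big)\undersym{X}{i}+\undersym{\Omega}{i}(Z)\,\nabla_Y\undersym{X}{i}=\big(Y\,\undersym{\Omega}{i}(Z)\big)\undersym{X}{i}.$$
Thus any admissible $\nabla$ is forced to coincide with this expression, so at most one such connection can exist.

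For existence, I would turn the displayed expression into a definition, setting $\nabla_Y Z:=\big(Y\,\undersym{\Omega}{i}(Z)\big)\undersym{X}{i}$, and then check the three defining axioms of a linear connection: $\mathfrak{F}(M)$-linearity in the first argument, additivity in both arguments, and the Leibniz property $\nabla_Y(fZ)=(Yf)Z+f\nabla_Y Z$. Linearity in $Y$ is immediate from $(fY)g=f(Yg)$; additivity is clear; and the Leibniz property follows by distributing $Y$ across the product $f\,\undersym{\Omega}{i}(Z)$ and reassembling $\undersym{\Omega}{i}(Z)\,\undersym{X}{i}=Z$ via (\ref{base}). Finally I would verify the defining property itself: $\nabla_Y\,\undersym{X}{j}=\big(Y\,\undersym{\Omega}{i}(\undersym{X}{j})\big)\undersym{X}{i}=\big(Y\delta_{ij}\big)\undersym{X}{i}=0$, since the $\delta_{ij}$ are constants, so the parallelization fields are indeed parallel with respect to $\nabla$.

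Rather than a genuine obstacle, the point I would emphasize is that because the frame $\undersym{X}{i}$ and the dual forms $\undersym{\Omega}{i}$ are defined on all of $M$, the formula yields a single globally defined connection with no recourse to partitions of unity or local patching; global parallelizability is exactly what makes the construction canonical. The only mild care required is to confirm that the Leibniz computation closes up correctly through (\ref{base}), so that $\nabla_Y Z$ lands back in $\mathfrak{X}(M)$ and respects multiplication by functions.
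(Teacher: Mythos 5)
Your proposal is correct and follows essentially the same route as the paper: uniqueness by expanding $Z=\;\undersym{\Omega}{i}(Z)\;\undersym{X}{i}$ and applying the Leibniz rule to force the formula $\nabla_YZ=\big(Y\cdot\;\undersym{\Omega}{i}(Z)\big)\;\undersym{X}{i}$, then existence by verifying that this formula satisfies the connection axioms and annihilates the frame since $Y\cdot\delta_{ij}=0$. No gaps; this matches the paper's proof step for step.
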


\begin{proof}
First we prove the uniqueness. Assume that $\nabla$ is a linear connection satisfying the condition $\nabla\;\undersym{X}{i}=0$. For all $Y,\,Z\in\mathfrak{X}(M)$, we have
$$\nabla_YZ=\nabla_Y\big(\;\undersym{\Omega}{i}(Z)\;\undersym{X}{i}\big)=\;\undersym{\Omega}{i}(Z)\nabla_Y\;\undersym{X}{i}
+\big(Y\cdot\;\undersym{\Omega}{i}(Z)\big)\;\undersym{X}{i} =\big(Y\cdot\;\undersym{\Omega}{i}(Z)\big) \;\undersym{X}{i}.$$
Hence, the connection $\nabla$ is uniquely determined by the relation
\begin{equation} \label{canonical}
\nabla_YZ=\big(Y\cdot\;\undersym{\Omega}{i}(Z)\big)\;\undersym{X}{i}.
\end{equation}

To prove the existence, let $\nabla:\mathfrak{X}(M)\times\mathfrak{X}(M)\longrightarrow\mathfrak{X}(M)$ be defined by (\ref{canonical}).
We show that $\nabla$ is a linear connection on $M$.
In fact, let $Y,\,Y_1,\,Y_2,\,Z,\,Z_1,\,Z_2\in\mathfrak{X}(M),\;f\in\mathfrak{F}(M)$. It is clear that
$\nabla_{Y_1+Y_2}Z=\nabla_{Y_1}Z+\nabla_{Y_2}Z$ and $\nabla_Y(Z_1+Z_2)=\nabla_YZ_1+\nabla_YZ_2$. Moreover,
\begin{eqnarray*}
\nabla_{fY}Z&=&\big((fY)\cdot\;\undersym{\Omega}{i}(Z)\big)\;\undersym{X}{i}=f\big(Y\cdot\;\undersym{\Omega}{i}(Z)\big)\;\undersym{X}{i}=f\nabla_YZ, \\ \nabla_Y(fZ)&=&\big(Y\cdot\;\undersym{\Omega}{i}(fZ)\big)\;\undersym{X}{i}=\Big(Y\cdot\big(f\;\undersym{\Omega}{i}(Z)\big)\Big)\;\undersym{X}{i} \\
         &=&f\big(Y\cdot\;\undersym{\Omega}{i}(Z)\big)\;\undersym{X}{i} + (Y\cdot f)\;\undersym{\Omega}{i}(Z)\;\undersym{X}{i} \\
         &=&f\nabla_YZ+(Y\cdot f)Z,\;\text{by (\ref{base}) and (\ref{canonical})}.
\end{eqnarray*}
It remains to show that $\nabla$ satisfies the condition $\nabla\;\undersym{X}{i}=0$:
$$\nabla_Y\;\undersym{X}{j}=\big(Y\cdot\;\undersym{\Omega}{i}(\;\undersym{X}{j})\big)\;\undersym{X}{i}=(Y\cdot\delta_{ij})\;\undersym{X}{i}=0.$$
This completes the proof.
\end{proof}

As a consequence of Lemma \ref{AP-condition}, we also have $\nabla\;\undersym{\Omega}{i}=0$. Hence
\begin{equation}\label{AP-cond}
\nabla\;\undersym{X}{i}=0,\quad\quad\nabla\;\undersym{\Omega}{i}=0.
\end{equation}
This property is known (locally) in the literature as the AP-condition.

\begin{definition}
Let $(M,\;\undersym{X}{i})$ be an AP-space. The unique linear connection $\nabla$ on $M$ defined by (\ref{canonical}) will be called the canonical connection of $(M,\;\undersym{X}{i})$.
\end{definition}

The canonical connection is of crucial importance because almost all geometric objects in the AP-space will be built up of it, as will be seen throughout the paper.

\vspace*{.26cm}Now we give an intrinsic formula of the torsion tensor $T$ of $\nabla$.

\begin{proposition}
The torsion tensor $T$ of the canonical connection is given by
\begin{equation} \label{torsion}
T(Y,\,Z)=\;\undersym{\Omega}{i}(Y)\;\undersym{\Omega}{j}(Z)[\;\undersym{X}{j},\;\undersym{X}{i}].
\end{equation}
\end{proposition}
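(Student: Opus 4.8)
The plan is to compute $T(Y,Z)$ directly from its definition $T(Y,Z)=\nabla_Y Z-\nabla_Z Y-[Y,Z]$, feeding in the explicit formula (\ref{canonical}) for the canonical connection and expanding the Lie bracket in the parallelization basis. First I would use (\ref{base}) to write $Y=\;\undersym{\Omega}{i}(Y)\;\undersym{X}{i}$ and $Z=\;\undersym{\Omega}{j}(Z)\;\undersym{X}{j}$, so that the two covariant terms take the compact form $\nabla_Y Z=\big(Y\cdot\;\undersym{\Omega}{i}(Z)\big)\;\undersym{X}{i}$ and $\nabla_Z Y=\big(Z\cdot\;\undersym{\Omega}{i}(Y)\big)\;\undersym{X}{i}$ straight from (\ref{canonical}).

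The decisive step is the expansion of $[Y,Z]=\big[\;\undersym{\Omega}{i}(Y)\;\undersym{X}{i},\;\undersym{\Omega}{j}(Z)\;\undersym{X}{j}\big]$ by the Leibniz rule for the Lie bracket of function-scaled vector fields. This splits $[Y,Z]$ into three groups: a ``pure bracket'' contribution $\;\undersym{\Omega}{i}(Y)\;\undersym{\Omega}{j}(Z)[\;\undersym{X}{i},\;\undersym{X}{j}]$, and two directional-derivative contributions. The point I would emphasize is that, after collapsing the sum $\;\undersym{\Omega}{i}(Y)\;\undersym{X}{i}\cdot(\,\cdot\,)=Y\cdot(\,\cdot\,)$ (and likewise for $Z$), these two derivative groups are \emph{exactly} $\nabla_Y Z$ and $-\nabla_Z Y$ as computed in the previous step. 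Hence $[Y,Z]=\;\undersym{\Omega}{i}(Y)\;\undersym{\Omega}{j}(Z)[\;\undersym{X}{i},\;\undersym{X}{j}]+\nabla_Y Z-\nabla_Z Y$.

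Substituting this into the definition produces a clean cancellation: the $\nabla_Y Z$ and $-\nabla_Z Y$ terms drop out, leaving $T(Y,Z)=-\;\undersym{\Omega}{i}(Y)\;\undersym{\Omega}{j}(Z)[\;\undersym{X}{i},\;\undersym{X}{j}]$, and swapping the order of the bracket (equivalently relabeling the dummy indices $i\leftrightarrow j$) gives precisely (\ref{torsion}). I should also check tensoriality, i.e.\ that the right-hand side is $\mathfrak{F}(M)$-bilinear in $Y$ and $Z$, which is immediate since $\;\undersym{\Omega}{i}$ is $\mathfrak{F}(M)$-linear; this confirms that the expression genuinely defines the torsion tensor and not merely its value on a basis. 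There is no real obstacle here; the only delicate point is the bookkeeping of the function coefficients in the bracket expansion, making sure that the derivative terms arising from the Leibniz rule are matched index-for-index against the two connection terms so that the cancellation is complete.
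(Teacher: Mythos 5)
Your proof is correct, but it takes a different route from the paper's. The paper first invokes the $\mathfrak{F}(M)$-bilinearity of the torsion tensor to pull the coefficients out, writing $T(Y,Z)=\;\undersym{\Omega}{i}(Y)\;\undersym{\Omega}{j}(Z)\,T(\;\undersym{X}{i},\;\undersym{X}{j})$, and then evaluates $T$ on the parallelization fields themselves, where the AP-condition \eqref{AP-cond} (i.e.\ $\nabla\;\undersym{X}{i}=0$) kills both covariant derivative terms at once, leaving $T(\;\undersym{X}{i},\;\undersym{X}{j})=-[\;\undersym{X}{i},\;\undersym{X}{j}]$. You instead keep $Y$ and $Z$ general, expand the bracket $[Y,Z]$ of the function-scaled fields by the Leibniz rule, and observe that the two derivative groups it produces are exactly $\nabla_YZ$ and $-\nabla_ZY$ as given by \eqref{canonical}, so the cancellation happens inside the definition of $T$ rather than term-by-term on a basis. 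Your computation is slightly longer but self-contained: it uses only the explicit formula \eqref{canonical} and never needs the AP-condition or the tensoriality of $T$ as inputs (your closing bilinearity check is a harmless redundancy, since torsion is tensorial for any connection). It also isolates the identity
\begin{equation*}
[Y,\,Z]=\nabla_YZ-\nabla_ZY+\;\undersym{\Omega}{i}(Y)\;\undersym{\Omega}{j}(Z)[\;\undersym{X}{i},\;\undersym{X}{j}],
\end{equation*}
which is of independent interest. The paper's argument, by contrast, is shorter and makes transparent that the result is an immediate consequence of the defining property of the canonical connection together with tensoriality. Both are valid; the bookkeeping of signs and dummy indices in your final step (relabeling $i\leftrightarrow j$ to swap the bracket) is handled correctly.
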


\begin{proof}
The torsion tensor $T$ of $\nabla$ is defined, for all $Y,\,Z\in\mathfrak{X}(M)$, by
$$T(Y,\,Z)=\nabla_YZ-\nabla_ZY-[Y,\,Z].$$
Using the AP-condition (\ref{AP-cond}), we get
\begin{eqnarray*}
T(Y,\,Z)&\overset{\;\text{(\ref{base})}}{=}&T\big(\;\undersym{\Omega}{i}(Y)\;\undersym{X}{i},\;\undersym{\Omega}{j}(Z)\;\undersym{X}{j}\big) =\;\undersym{\Omega}{i}(Y)\;\undersym{\Omega}{j}(Z)T(\;\undersym{X}{i},\;\undersym{X}{j})\\
        &=&\;\undersym{\Omega}{i}(Y)\;\undersym{\Omega}{j}(Z)(\nabla_{\;\undersym{X}{i}}\;\undersym{X}{j}-\nabla_{\;\undersym{X}{j}}\;\undersym{X}{i} -[\;\undersym{X}{i},\;\undersym{X}{j}])= \;\undersym{\Omega}{i}(Y)\;\undersym{\Omega}{j}(Z)[\;\undersym{X}{j},\;\undersym{X}{i}].
\end{eqnarray*}
\vspace*{-1.6cm}\[\qedhere\]
\end{proof}

\begin{theorem}\label{flat}
Let $(M,\;\undersym{X}{i})$ be an AP-space. The canonical connection of $(M,\;\undersym{X}{i})$ is flat.
\end{theorem}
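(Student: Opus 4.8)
The plan is to show that the curvature tensor
$$R(Y,Z)W=\nabla_Y\nabla_ZW-\nabla_Z\nabla_YW-\nabla_{[Y,Z]}W$$
of the canonical connection $\nabla$ vanishes identically. Since it is a standard fact that $R$ is $\mathfrak{F}(M)$-linear in each of its three arguments (it is a tensor field of type $(1,3)$), and since $\{\;\undersym{X}{i}(x):i=1,\ldots,n\}$ is a basis of $T_xM$ at every $x\in M$, it suffices to verify that $R(\;\undersym{X}{i},\;\undersym{X}{j})\;\undersym{X}{k}=0$ for all $i,j,k$. Writing an arbitrary triple $Y=Y^{a}\;\undersym{X}{a}$, $Z=Z^{b}\;\undersym{X}{b}$, $W=W^{c}\;\undersym{X}{c}$ and pulling the components out of $R$ by tensoriality then yields $R(Y,Z)W=0$ at every point.

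First I would compute $R$ on the frame. The decisive tool is the AP-condition (\ref{AP-cond}), namely $\nabla\;\undersym{X}{k}=0$, which says $\nabla_V\;\undersym{X}{k}=0$ for \emph{every} $V\in\mathfrak{X}(M)$. Applying this to the three terms of $R(\;\undersym{X}{i},\;\undersym{X}{j})\;\undersym{X}{k}$ in turn: the inner covariant derivatives $\nabla_{\;\undersym{X}{j}}\;\undersym{X}{k}$ and $\nabla_{\;\undersym{X}{i}}\;\undersym{X}{k}$ both vanish, so the first two terms reduce to $\nabla_{\;\undersym{X}{i}}(0)$ and $\nabla_{\;\undersym{X}{j}}(0)$, which are zero; and the third term $\nabla_{[\;\undersym{X}{i},\;\undersym{X}{j}]}\;\undersym{X}{k}$ is of the form $\nabla_V\;\undersym{X}{k}$ with $V=[\;\undersym{X}{i},\;\undersym{X}{j}]$, hence also zero. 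Therefore $R(\;\undersym{X}{i},\;\undersym{X}{j})\;\undersym{X}{k}=0$, and combining this with the reduction above gives $R\equiv0$, i.e. $\nabla$ is flat.

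The argument is essentially immediate once the AP-condition is in hand, so I do not anticipate a genuine obstacle; the only point demanding a little care is the reduction to the frame, which rests on the curvature operator being tensorial in all three slots. In particular, one must note that the Lie-bracket term causes no trouble precisely because the parallel property $\nabla_V\;\undersym{X}{k}=0$ holds for the general vector field $V=[\;\undersym{X}{i},\;\undersym{X}{j}]$, not merely for frame fields. Alternatively one could avoid invoking tensoriality and work directly from the defining formula (\ref{canonical}), but the frame computation via the AP-condition is the cleanest route, and it also makes transparent the geometric reason for flatness: a connection admitting a global parallel frame can have no holonomy.
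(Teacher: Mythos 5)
Your proof is correct and is essentially the paper's own argument made explicit: the paper likewise derives flatness directly from the definition of $R$ together with the AP-condition $\nabla\;\undersym{X}{i}=0$, which kills all three terms. The only cosmetic difference is that you reduce to the frame in all three slots, whereas tensoriality in the third argument alone already suffices (the AP-condition holds for arbitrary $Y$, $Z$, including the bracket $[Y,Z]$), so no reduction of the first two arguments is needed.
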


\begin{proof}
The result follows from the definition of the curvature tensor $R$ of $\nabla$:
$$R(Y,\,Z)V=\nabla_Y\nabla_ZV-\nabla_Z\nabla_YV-\nabla_{[Y,Z]}V$$
and the AP-condition (\ref{AP-cond}).
\end{proof}

\begin{remark}\cite{local}
It is for this reason that many authors think that the AP-space is a flat space. This is by no means true. In fact, it is meaningless to speak
of curvature without reference to a connection. All we can say is that the AP-space is flat with respect to its canonical connection. However, there are other three natural connections on an AP-space which are nonflat, as will be shown later.
\end{remark}

%%%%%%%%%%%%%%%%%%%%%%%%%%%%%%%%%%%%%%%%%%%%%%%%%%%%%%%%%%%% section 2: Canonical Connection %%%%%%%%%%%%%%%%%%%%%%%%%%%%%%%%%%%%%%%%%%%%%%%%%%%%%%%%%%%%%

\section{Other linear connections on an AP-space}

\hspace*{.4cm} In this section, we define a metric on an AP-space and investigate the properties of three other natural connections on the space. Moreover, we define the contortion tensor and give its relation to the torsion tensor (\ref{torsion}).

\begin{theorem}
Let $(M,\;\undersym{X}{i})$ be an AP-space and $\;\undersym{\Omega}{i}$ the parallelization forms on $M$. Then
\begin{equation} \label{metric}
g:=\;\undersym{\Omega}{i}\otimes\;\undersym{\Omega}{i}
\end{equation}
defines a metric tensor on $M$.
\end{theorem}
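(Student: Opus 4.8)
The plan is to verify directly that $g$ satisfies the three defining properties of a (Riemannian) metric tensor: that it is a well-defined smooth $(0,2)$-tensor field, that it is symmetric, and that it is non-degenerate (in fact positive-definite). First I would observe that, since each parallelization form $\undersym{\Omega}{i}$ is a globally defined $\mathfrak{F}(M)$-linear $1$-form, each tensor product $\undersym{\Omega}{i}\otimes\undersym{\Omega}{i}$ is automatically $\mathfrak{F}(M)$-bilinear, and hence the finite sum (\ref{metric}) is a genuine global $(0,2)$-tensor field on $M$. No patching argument across coordinate neighborhoods is needed, which is precisely the advantage of the global formulation emphasized in the introduction.

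Next I would compute $g$ explicitly in the parallelization basis. Writing $Y=Y^i\undersym{X}{i}$ and $Z=Z^i\undersym{X}{i}$ and invoking (\ref{base}), one obtains $g(Y,Z)=\undersym{\Omega}{i}(Y)\,\undersym{\Omega}{i}(Z)=Y^iZ^i$. Symmetry is then immediate from this expression, since $g(Y,Z)=Y^iZ^i=Z^iY^i=g(Z,Y)$ for all $Y,Z\in\mathfrak{X}(M)$.

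The one substantive point — the \emph{main obstacle}, though it is a mild one — is non-degeneracy, and this is exactly where the hypothesis that the $\undersym{X}{i}$ form a parallelization (i.e.\ are pointwise linearly independent) becomes essential rather than decorative. I would argue pointwise: for any $x\in M$ and any $v=v^i\,\undersym{X}{i}(x)\in T_xM$, the formula above yields $g_x(v,v)=\sum_i (v^i)^2\geq 0$, with equality if and only if every component $v^i$ vanishes. Because $\{\undersym{X}{i}(x):i=1,\dots,n\}$ is a basis of $T_xM$, the vanishing of all the $v^i$ is equivalent to $v=0$. Hence $g_x$ is a positive-definite bilinear form on each tangent space, so $g$ is a (Riemannian) metric on $M$; in particular it is non-degenerate. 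Smoothness of $g$ follows from the standing assumption that the $\undersym{\Omega}{i}$, and therefore $g$, are of class $C^\infty$. This completes the verification.
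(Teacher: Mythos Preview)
Your proof is correct and follows essentially the same approach as the paper: both verify that $g$ is a symmetric $(0,2)$-tensor and then establish positive-definiteness via $g(Y,Y)=\sum_i\big(\undersym{\Omega}{i}(Y)\big)^2\geq 0$, with equality forcing all components $\undersym{\Omega}{i}(Y)=Y^i$ to vanish and hence $Y=0$ by (\ref{base}). Your version is slightly more explicit about the well-definedness of $g$ as a global tensor field and about smoothness, but the substance is the same.
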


\begin{proof}
Clearly $g$ is a symmetric tensor of type $(0,2)$ on $M$. For all $Y\in\mathfrak{X}(M)$, we have
$$g(Y,\,Y)=(\;\undersym{\Omega}{i}\otimes\;\undersym{\Omega}{i})(Y,Y)=\sum_{i=1}^{n}\big(\;\undersym{\Omega}{i}(Y)\big)^2\geq 0.$$
Moreover,
$$g(Y,Y)=0\Longrightarrow\sum_{i=1}^{n}\big(\;\undersym{\Omega}{i}(Y)\big)^2=0\Longrightarrow\;\undersym{\Omega}{i}(Y)=0\;\,\forall i\Longrightarrow\;\undersym{\Omega}{i}(Y)\;\undersym{X}{i}=0\overset{(\ref{base})}{\Longrightarrow} Y=0.$$
Hence, $g$ is a metric tensor on $M$.
\end{proof}

\begin{remark}\label{rem0}
It is clear that:
\begin{description}
  \item[(a)] $g(\;\undersym{X}{i},\;\undersym{X}{j})=\delta_{ij}.\hfill\refstepcounter{equation}(\theequation)\label{orthogonal}$
  \item[(b)] $g(\;\undersym{X}{i},Y)=\;\undersym{\Omega}{i}(Y).\hfill\refstepcounter{equation}(\theequation)\label{base2}$
\end{description}
\end{remark}

Property {\bf{(a)}} shows that the parallelization vector fields $\;\undersym{X}{i}$ are $g$-orthonormal and {\bf{(b)}} provides the duality between $\;\undersym{X}{i}$ and $\;\undersym{\Omega}{i}$ via $g$.

\begin{lemma}
Let $(M,\;\undersym{X}{i})$ be an AP-space. A linear connection $D$ on $M$ is a metric connection if and only if
\begin{equation*} %\label{metric1}
\;\undersym{\Omega}{i}(D_V\;\undersym{X}{j})+\;\undersym{\Omega}{j}(D_V\;\undersym{X}{i})=0.
\end{equation*}
\end{lemma}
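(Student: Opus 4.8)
The plan is to unwind the definition of a metric connection and evaluate it on the parallelization basis, exploiting the two relations collected in Remark \ref{rem0}. Recall that $D$ is a metric connection precisely when $Dg=0$, i.e.
\[
V\cdot g(Y,Z)=g(D_VY,Z)+g(Y,D_VZ)\qquad\forall\,V,Y,Z\in\mathfrak{X}(M).
\]
First I would check that the quantity $(D_Vg)(Y,Z):=V\cdot g(Y,Z)-g(D_VY,Z)-g(Y,D_VZ)$ is $\mathfrak{F}(M)$-bilinear in $Y$ and $Z$: a brief computation using the Leibniz rule $D_V(fY)=fD_VY+(V\cdot f)Y$ and the bilinearity of $g$ shows that the coefficient-derivative terms $(V\cdot f)g(Y,Z)$ cancel. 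Thus $D_Vg$ is a genuine $(0,2)$-tensor, and for each fixed $V$ it vanishes for all $Y,Z$ if and only if it vanishes on the parallelization basis vectors $\;\undersym{X}{i},\;\undersym{X}{j}$.

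Next I would specialize to $Y=\;\undersym{X}{i}$, $Z=\;\undersym{X}{j}$. On the left-hand side, (\ref{orthogonal}) gives $g(\;\undersym{X}{i},\;\undersym{X}{j})=\delta_{ij}$, a constant, so $V\cdot g(\;\undersym{X}{i},\;\undersym{X}{j})=0$. On the right-hand side, the duality relation (\ref{base2}) together with the symmetry of $g$ rewrites the two pairings as parallelization forms:
\[
g(D_V\;\undersym{X}{i},\;\undersym{X}{j})=\;\undersym{\Omega}{j}(D_V\;\undersym{X}{i}),\qquad g(\;\undersym{X}{i},D_V\;\undersym{X}{j})=\;\undersym{\Omega}{i}(D_V\;\undersym{X}{j}).
\]
Hence metricity, tested on the basis, reduces exactly to
\[
\;\undersym{\Omega}{i}(D_V\;\undersym{X}{j})+\;\undersym{\Omega}{j}(D_V\;\undersym{X}{i})=0,
\]
which is the asserted identity. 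Because every step in the chain is an equivalence, both implications of the ``if and only if'' are obtained at once.

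The only step demanding genuine care — and therefore the main obstacle — is the reduction to basis vectors: one must first establish the tensorial ($\mathfrak{F}(M)$-linear) nature of $D_Vg$ in its two slots so as to be sure that testing on $\;\undersym{X}{i},\;\undersym{X}{j}$ really captures the full condition $Dg=0$ over all of $\mathfrak{X}(M)$. After that, the computation is entirely mechanical, driven solely by the constancy of $g(\;\undersym{X}{i},\;\undersym{X}{j})$ from (\ref{orthogonal}) and the duality (\ref{base2}).
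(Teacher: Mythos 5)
Your proposal is correct and takes essentially the same route as the paper: the paper likewise evaluates $(D_Vg)(\;\undersym{X}{i},\;\undersym{X}{j})$ using (\ref{orthogonal}) and (\ref{base2}) to obtain $-\;\undersym{\Omega}{i}(D_V\;\undersym{X}{j})-\;\undersym{\Omega}{j}(D_V\;\undersym{X}{i})$, and concludes from the tensoriality of $D_Vg$. Your explicit verification that $D_Vg$ is $\mathfrak{F}(M)$-bilinear (so that testing on the parallelization basis suffices) only spells out the step the paper leaves implicit in ``from which the result follows.''
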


\begin{proof}
By simple calculation, using (\ref{orthogonal}) and (\ref{base2}), one can show that
$$(D_Vg)(\;\undersym{X}{i},\;\undersym{X}{j})=-\;\undersym{\Omega}{i}(D_V\;\undersym{X}{j})-\;\undersym{\Omega}{j}(D_V\;\undersym{X}{i}),$$
from which the result follows.
\end{proof}

The last lemma together with the AP-condition (\ref{AP-cond}) give rise to the next result.

\begin{proposition}
The canonical connection is a metric connection.
\end{proposition}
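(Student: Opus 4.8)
The plan is to invoke the metric criterion established in the Lemma immediately preceding this Proposition, which reduces the statement to a single pointwise identity, and then to discharge that identity using the AP-condition. First I would recall that, by that Lemma, a linear connection $D$ on $M$ is metric if and only if $\undersym{\Omega}{i}(D_V\undersym{X}{j})+\undersym{\Omega}{j}(D_V\undersym{X}{i})=0$ for all $V\in\mathfrak{X}(M)$. Applying this criterion to the canonical connection $\nabla$, it suffices to verify this relation with $D$ replaced by $\nabla$.

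Next I would substitute the AP-condition $\nabla\undersym{X}{i}=0$ recorded in (\ref{AP-cond}). Since $\nabla_V\undersym{X}{j}=0$ and $\nabla_V\undersym{X}{i}=0$ identically for every $V$, each of the two terms $\undersym{\Omega}{i}(\nabla_V\undersym{X}{j})$ and $\undersym{\Omega}{j}(\nabla_V\undersym{X}{i})$ is the value of a $1$-form on the zero vector field, hence vanishes. The criterion of the Lemma is therefore satisfied trivially, and the Lemma yields that $\nabla$ is a metric connection.

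An equally short alternative would bypass the Lemma entirely: starting from the definition $g=\undersym{\Omega}{i}\otimes\undersym{\Omega}{i}$ in (\ref{metric}) together with the second half of the AP-condition $\nabla\undersym{\Omega}{i}=0$, the Leibniz rule for the covariant derivative of a tensor product gives $\nabla_V g=(\nabla_V\undersym{\Omega}{i})\otimes\undersym{\Omega}{i}+\undersym{\Omega}{i}\otimes(\nabla_V\undersym{\Omega}{i})=0$, so that $\nabla g=0$ directly. I expect no genuine obstacle here: both the metric criterion and the AP-condition have already been proved, so the entire content of the proof is a one-step substitution. The only point meriting a little care is the bookkeeping of which parallelization form is paired with which covariant derivative, so that the symmetric pair of terms in the Lemma's criterion is matched correctly against the vanishing $\nabla\undersym{X}{i}=0$.
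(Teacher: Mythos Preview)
Your proposal is correct and matches the paper's own argument exactly: the paper simply states that the proposition follows from the preceding Lemma together with the AP-condition (\ref{AP-cond}). Your alternative route via $\nabla g=(\nabla_V\undersym{\Omega}{i})\otimes\undersym{\Omega}{i}+\undersym{\Omega}{i}\otimes(\nabla_V\undersym{\Omega}{i})=0$ is also valid and equally immediate, though the paper does not present it.
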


\begin{proposition}
On an AP-space there are three other (built-in) linear connections:
\begin{description}
\item[(a)] The dual connection $\widetilde{\nabla}$ given by
  \begin{equation} \label{dualcon}
  \widetilde{\nabla}_Y Z:=\nabla_Z Y + [Y,Z].
  \end{equation}
\item[(b)] The symmetric connection $\widehat{\nabla}$ given by
  \begin{equation} \label{symcon}
  \widehat{\nabla}_YZ:=\frac{1}{2}(\nabla_YZ+\nabla_ZY+[Y,Z]).
  \end{equation}
\item[(c)] The Levi-Civita connection $\oversetc{\nabla}$ is given by \cite{Kuhnel}
  \begin{eqnarray} \label{riemannian}
  2g(\oversetc{\nabla}_YZ,\,V)
  &=&Y\cdot\,g(Z,\,V)+Z\cdot\,g(V,\,Y)-V\cdot\,g(Y,\,Z) \nonumber \\
  & &-g(Y,\,[Z,\,V])+g(Z,\,[V,\,Y])+g(V,\,[Y,\,Z]).
  \end{eqnarray}
\end{description}
\end{proposition}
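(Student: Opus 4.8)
The plan is to verify, for each of the three prescriptions (a)--(c), that the defining formula yields a map $\mathfrak{X}(M)\times\mathfrak{X}(M)\longrightarrow\mathfrak{X}(M)$ satisfying the four axioms of a linear connection: additivity in each slot, $\mathfrak{F}(M)$-linearity in the first slot, and the Leibniz rule in the second slot. Additivity is immediate in every case from the bilinearity of $\nabla$, of the Lie bracket $[\cdot,\cdot]$, and of $g$, so the real content lies in the function-linearity and Leibniz checks.

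For the dual connection (a), I would compute directly, using the two Lie-bracket identities $[fY,Z]=f[Y,Z]-(Z\cdot f)Y$ and $[Y,fZ]=f[Y,Z]+(Y\cdot f)Z$ together with the properties $\nabla_{fY}Z=f\nabla_YZ$ and $\nabla_Y(fZ)=f\nabla_YZ+(Y\cdot f)Z$ of the canonical connection established in Section~1. The key observation is that the roles of $Y$ and $Z$ are interchanged in the term $\nabla_ZY$, so the anomalous term produced by $\nabla$ pairs against the opposite-sign anomalous term from the bracket. Concretely, in $\widetilde{\nabla}_{fY}Z=\nabla_Z(fY)+[fY,Z]$ the contribution $(Z\cdot f)Y$ coming from $\nabla_Z(fY)$ and the contribution $-(Z\cdot f)Y$ coming from $[fY,Z]$ cancel, leaving $f\widetilde{\nabla}_YZ$; while in $\widetilde{\nabla}_Y(fZ)=\nabla_{fZ}Y+[Y,fZ]$ the first slot $\nabla_{fZ}Y=f\nabla_ZY$ produces no anomaly, so the surviving term $(Y\cdot f)Z$ from the bracket delivers exactly the Leibniz rule $f\widetilde{\nabla}_YZ+(Y\cdot f)Z$. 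These two cancellations are the one mildly delicate point of the whole proposition.

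For the symmetric connection (b), the cleanest route is to observe that $\widehat{\nabla}_YZ=\tfrac12\big(\nabla_YZ+\widetilde{\nabla}_YZ\big)$, the arithmetic mean of the canonical connection and the dual connection. Since any affine combination $\lambda D^1+(1-\lambda)D^2$ of linear connections is again a linear connection (the $\mathfrak{F}(M)$-linearity in the first slot is preserved termwise, and the non-tensorial Leibniz contributions $(Y\cdot f)Z$ recombine with weights summing to $1$), and since part (a) has already shown $\widetilde{\nabla}$ to be a connection, $\widehat{\nabla}$ is automatically a connection. Alternatively one verifies the axioms directly, by the same cancellation mechanism as in (a).

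For the Levi-Civita connection (c), no new work is needed beyond invoking the fundamental theorem of (pseudo-)Riemannian geometry. The preceding theorem shows that $g=\undersym{\Omega}{i}\otimes\undersym{\Omega}{i}$ is a genuine metric tensor on $M$, and the right-hand side of (\ref{riemannian}) is precisely the Koszul formula; it therefore defines a unique torsion-free metric connection $\oversetc{\nabla}$, which is the content cited from \cite{Kuhnel}. Overall I expect the main obstacle to be bookkeeping rather than conceptual: in part (a) one must keep careful track of which anomalous term originates from $\nabla$ and which from the bracket, and confirm that the signs line up so that $\widetilde{\nabla}$ is both $\mathfrak{F}(M)$-linear in $Y$ and Leibnizian in $Z$.
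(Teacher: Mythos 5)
Your proposal is correct: the paper itself omits the proof of this proposition, stating only that it is ``straightforward,'' and your verification (the bracket/Leibniz cancellations for $\widetilde{\nabla}$, the affine-combination argument $\widehat{\nabla}=\tfrac12(\nabla+\widetilde{\nabla})$ --- which the paper records separately as Remark 2.2(b) --- and the appeal to the Koszul formula from \cite{Kuhnel} for $\oversetc{\nabla}$) is exactly the standard argument the authors intend. Nothing is missing; your checks of $\mathfrak{F}(M)$-linearity in the first slot and the Leibniz rule in the second are the whole content.
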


The proof is straightforward and we omit it.

\begin{remark}\label{rem1}
One can easily show that:
\begin{description}
  \item[(a)] $\widetilde{\nabla}_YZ=\nabla_YZ-T(Y,\,Z)$.
  \item[(b)] $\widehat{\nabla}_YZ=\nabla_YZ-\frac{1}{2}T(Y,\,Z)=\frac{1}{2}(\nabla_YZ+\widetilde{\nabla}_YZ)$.
  \item[(c)] $\widehat{\nabla}$ and $\oversetc{\nabla}$ are torsionless whereas $\nabla$ and $\widetilde{\nabla}$ have the same torsion up to a sign.
\end{description}
Here $T$ is the torsion tensor of the canonical connection $\nabla$. Since there are no other torsion tensors in the space, we can say that $T$ is the torsion of the space.
\end{remark}

In Reimannian geometry the Levi-Civita connection has no explicit expression. However, in AP-geometry we can have an \emph{explicit expression} for the Levi-Civita connection $\oversetc{\nabla}$ as shown in the following.

\begin{theorem}
Let $(M,\,\;\undersym{X}{i})$ be an AP-space. Then the Levi-Civita connection $\oversetc{\nabla}$ can be written in the form:
\begin{equation} \label{riemannian2}
\oversetc{\nabla}_YZ=\widehat{\nabla}_YZ-\frac{1}{2}(\mathcal{L}_{\;\undersym{X}{i}}g)(Y,\,Z)\;\undersym{X}{i},
\end{equation}
where $\mathcal{L}_Y$ is the Lie derivative with respect to $Y\in\mathfrak{X}(M)$.
\end{theorem}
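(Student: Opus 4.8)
The plan is to pin down $\oversetc{\nabla}$ by its defining property: the Levi-Civita connection is the \emph{unique} torsionless metric linear connection on $(M,g)$. Accordingly, I would \emph{define} a map $D\colon\mathfrak{X}(M)\times\mathfrak{X}(M)\longrightarrow\mathfrak{X}(M)$ by
$$D_YZ:=\widehat{\nabla}_YZ-\tfrac{1}{2}(\mathcal{L}_{\;\undersym{X}{i}}g)(Y,Z)\;\undersym{X}{i},$$
i.e. the right-hand side of (\ref{riemannian2}), and prove that $D$ is a torsionless metric connection. Uniqueness of the Levi-Civita connection then forces $D=\oversetc{\nabla}$, which is exactly the claim.

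First I would check that $D$ is a linear connection. Since $\widehat{\nabla}$ already is one, it suffices to note that the correction term $B(Y,Z):=\tfrac{1}{2}(\mathcal{L}_{\;\undersym{X}{i}}g)(Y,Z)\;\undersym{X}{i}$ is a \emph{tensor}, i.e. $\mathfrak{F}(M)$-bilinear in $(Y,Z)$: each $\mathcal{L}_{\;\undersym{X}{i}}g$ is a fixed $(0,2)$-tensor and its values are multiplied by the fixed vector fields $\;\undersym{X}{i}$. Adding a tensor to a connection gives a connection, so $D$ is a connection. Moreover, since $g$ is symmetric, so is each $\mathcal{L}_{\;\undersym{X}{i}}g$, hence $B$ is symmetric in $(Y,Z)$; therefore $D$ and $\widehat{\nabla}$ share the same torsion, which vanishes by Remark \ref{rem1}(c). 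Thus $D$ is torsionless.

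The main work — and the main obstacle — is to show that $D$ is a \emph{metric} connection, i.e. $(D_Vg)(Y,Z)=0$. The central device is an intrinsic formula for $\mathcal{L}_{\;\undersym{X}{i}}g$. Using that the canonical connection $\nabla$ is metric, together with $[W,U]=\nabla_WU-\nabla_UW-T(W,U)$, I would first derive, for arbitrary $W$,
$$(\mathcal{L}_Wg)(V,Y)=g(\nabla_VW,Y)+g(V,\nabla_YW)+g(T(W,V),Y)+g(T(W,Y),V).$$
Specializing to $W=\;\undersym{X}{i}$ and invoking the AP-condition $\nabla\;\undersym{X}{i}=0$ (\ref{AP-cond}) kills the first two terms, leaving the clean identity
$$(\mathcal{L}_{\;\undersym{X}{i}}g)(V,Y)=g(T(\;\undersym{X}{i},V),Y)+g(T(\;\undersym{X}{i},Y),V).$$

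Finally I would combine the pieces. A short computation using $\widehat{\nabla}_VY=\nabla_VY-\tfrac{1}{2}T(V,Y)$ (Remark \ref{rem1}(b)) and the metricity of $\nabla$ gives
$$(\widehat{\nabla}_Vg)(Y,Z)=\tfrac{1}{2}\big(g(T(V,Y),Z)+g(T(V,Z),Y)\big),$$
while contracting the previous display against $\;\undersym{\Omega}{i}(Z)$ and using $\;\undersym{\Omega}{i}(Z)\;\undersym{X}{i}=Z$ together with $g(\;\undersym{X}{i},Z)=\;\undersym{\Omega}{i}(Z)$ (\ref{base2}) yields the contribution of $B$ to $(D_Vg)$ in the symmetric form $\tfrac{1}{2}\big(g(T(Z,V),Y)+g(T(Y,V),Z)\big)$. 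Adding these two expressions, the antisymmetry $T(V,Y)=-T(Y,V)$ of the torsion makes every term cancel in pairs, so $(D_Vg)(Y,Z)=0$; hence $D$ is metric and, by uniqueness, $D=\oversetc{\nabla}$. I expect the only delicate point to be the bookkeeping of the index $i$, which simultaneously labels the summation in the metric/forms and in the correction term: the tensorial identity $\;\undersym{\Omega}{i}(Z)\;\undersym{X}{i}=Z$ is precisely what lets the $\;\undersym{X}{i}$-sum collapse into an honest torsion expression and lines up the terms for cancellation.
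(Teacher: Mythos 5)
Your proof is correct, but it takes a genuinely different route from the paper's. The paper \emph{derives} the formula directly: it substitutes $V=\;\undersym{X}{i}$ into the Koszul formula (\ref{riemannian}), uses (\ref{base2}) to turn $g(\;\undersym{X}{i},\cdot)$ into $\;\undersym{\Omega}{i}(\cdot)$, multiplies by $\;\undersym{X}{i}$, and then uses (\ref{base}), (\ref{canonical}) and (\ref{symcon}) to recognize $\nabla_YZ+\nabla_ZY+[Y,Z]=2\widehat{\nabla}_YZ$ and the three remaining terms as $-(\mathcal{L}_{\;\undersym{X}{i}}g)(Y,Z)\;\undersym{X}{i}$ --- a short computation in which the Koszul formula does all the work. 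You instead \emph{verify} the formula: you define $D$ as the right-hand side of (\ref{riemannian2}), check that it is a linear connection (the correction term is a $(1,2)$-tensor), torsionless (the correction is symmetric and $\widehat{\nabla}$ is torsionless by Remark \ref{rem1}), and metric, and then invoke uniqueness of the torsionless metric connection. The metricity step is where your extra work lies, and it is sound: the identity $(\mathcal{L}_{\;\undersym{X}{i}}g)(V,Y)=g\big(T(\;\undersym{X}{i},V),Y\big)+g\big(T(\;\undersym{X}{i},Y),V\big)$ does follow from metricity of $\nabla$ and the AP-condition (\ref{AP-cond}); your expression for $(\widehat{\nabla}_Vg)$ is right; and the final cancellation by skew-symmetry of $T$ checks out, the key index collapse being $\;\undersym{\Omega}{i}(Z)\,T(\;\undersym{X}{i},V)=T(Z,V)$ by $\mathfrak{F}(M)$-bilinearity of $T$, exactly the delicate point you flagged. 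As for what each approach buys: the paper's is shorter and constructive --- it produces the formula rather than having to guess it; yours uses only the abstract characterization of $\oversetc{\nabla}$ rather than the Koszul formula as a computational device, and your intermediate identity expressing $\mathcal{L}_{\;\undersym{X}{i}}g$ through the torsion has independent value --- it is essentially the bridge the paper itself crosses later in proving Proposition \ref{rtc}\textbf{(b)}. One cosmetic caution: you denote your correction tensor by $B$, which collides with the paper's basic form $B$ of Section 2; rename it before inserting this argument into the text.
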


\vspace{0pt}

\begin{proof}
By replacing $V$ in (\ref{riemannian}) by $\;\undersym{X}{i}$ and using (\ref{base2}), we get
$$2\;\undersym{\Omega}{i}(\oversetc{\nabla}_YZ)=Y\cdot\;\undersym{\Omega}{i}(Z)+Z\cdot\;\undersym{\Omega}{i}(Y)-\;\undersym{X}{i}\cdot g(Y,\,Z)+g(Y,\,[\;\undersym{X}{i},\,Z])+g(Z,\,[\;\undersym{X}{i},\,Y])+\;\undersym{\Omega}{i}([Y,\,Z]).$$
Taking into account (\ref{base}) and (\ref{canonical}), the above equation reads
\begin{eqnarray*}
2\oversetc{\nabla}_YZ&=&\nabla_YZ+\nabla_ZY-\big(\;\undersym{X}{i}\cdot g(Y,\,Z)\big)\;\undersym{X}{i}+g(Y,\,[\;\undersym{X}{i},\,Z])\;\undersym{X}{i}+g(Z,\,[\;\undersym{X}{i},\,Y])\;\undersym{X}{i}+[Y,\,Z] \\
&=&2\widehat{\nabla}_YZ-\Big(\;\undersym{X}{i}\cdot g(Y,\,Z)-g(Y,\,[\;\undersym{X}{i},\,Z])-g(Z,\,[\;\undersym{X}{i},\,Y])\Big)\;\undersym{X}{i},\;\text{by (\ref{symcon})} \\
&=&2\widehat{\nabla}_YZ-(\mathcal{L}_{\;\undersym{X}{i}}g)(Y,\,Z)\;\undersym{X}{i}.
\end{eqnarray*}
\vspace*{-1.6cm}\[\qedhere\]
\end{proof}

\begin{corollary}
In an AP-space, the Levi-Civita connection and the symmetric connection coincide if, and only if, the parallelization vector fields are Killing vector fields:
$$\oversetc{\nabla}=\widehat{\nabla}\Longleftrightarrow\mathcal{L}_{\;\undersym{X}{i}}g=0\;\,\forall i.$$
\end{corollary}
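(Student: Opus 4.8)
The plan is to read off the result directly from the explicit formula established in the immediately preceding theorem, namely
\begin{equation*}
\oversetc{\nabla}_YZ=\widehat{\nabla}_YZ-\frac{1}{2}(\mathcal{L}_{\;\undersym{X}{i}}g)(Y,\,Z)\;\undersym{X}{i}.
\end{equation*}
This identity expresses the difference of the two connections as a single correction term, so the equivalence $\oversetc{\nabla}=\widehat{\nabla}$ will be equivalent to the vanishing of that term for all $Y,\,Z\in\mathfrak{X}(M)$. First I would rewrite the difference as
\begin{equation*}
\oversetc{\nabla}_YZ-\widehat{\nabla}_YZ=-\frac{1}{2}(\mathcal{L}_{\;\undersym{X}{i}}g)(Y,\,Z)\;\undersym{X}{i},
\end{equation*}
and observe that $\oversetc{\nabla}=\widehat{\nabla}$ holds precisely when the right-hand side is zero for every choice of arguments.

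For the implication $(\Leftarrow)$ I would argue directly: if $\mathcal{L}_{\;\undersym{X}{i}}g=0$ for each $i$, then each scalar $(\mathcal{L}_{\;\undersym{X}{i}}g)(Y,\,Z)$ vanishes identically, so the correction term is zero and the two connections agree. The implication $(\Rightarrow)$ is the only one requiring a genuine step: assuming $\oversetc{\nabla}=\widehat{\nabla}$, the formula forces
\begin{equation*}
(\mathcal{L}_{\;\undersym{X}{i}}g)(Y,\,Z)\;\undersym{X}{i}=0\qquad\text{for all }Y,\,Z\in\mathfrak{X}(M).
\end{equation*}
Here I would invoke the fact that $\{\;\undersym{X}{i}\}$ is a global frame, hence pointwise linearly independent, so a vanishing linear combination $\lambda^{i}\;\undersym{X}{i}=0$ with coefficients $\lambda^{i}=(\mathcal{L}_{\;\undersym{X}{i}}g)(Y,\,Z)\in\mathfrak{F}(M)$ compels every coefficient to be zero. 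Since $Y$ and $Z$ are arbitrary, this yields $(\mathcal{L}_{\;\undersym{X}{i}}g)(Y,\,Z)=0$ for all $Y,\,Z$ and each fixed $i$, i.e.\ $\mathcal{L}_{\;\undersym{X}{i}}g=0$, which is exactly the statement that $\;\undersym{X}{i}$ is a Killing vector field.

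The main (and only) point of subtlety is this last linear-independence step: one must extract the vanishing of each individual coefficient from the vanishing of the full sum, and this is precisely where the parallelizability hypothesis enters in an essential way. The argument would fail for a merely local frame at a degeneracy point, but because the $\;\undersym{X}{i}$ form a genuine parallelization on all of $M$, independence holds at every $x\in M$, so the conclusion is global. Everything else is a direct substitution into the formula from the preceding theorem, and no further computation is needed.
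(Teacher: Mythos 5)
Your proposal is correct and follows essentially the same route as the paper: the corollary is read off directly from the identity $\oversetc{\nabla}_YZ=\widehat{\nabla}_YZ-\frac{1}{2}(\mathcal{L}_{\;\undersym{X}{i}}g)(Y,\,Z)\;\undersym{X}{i}$ of the preceding theorem, with the pointwise linear independence of the parallelization vector fields supplying the only nontrivial step (vanishing of the sum forces vanishing of each coefficient). The paper leaves this argument implicit, so your write-up is simply a careful spelling-out of the intended proof.
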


\begin{definition}
The contortion tensor $C$ is defined by the formula:
\begin{equation} \label{contortion1}
C(Y,\,Z) = \nabla_Y Z - \oversetc{\nabla}_Y Z.
\end{equation}
\end{definition}

The contortion tensor may also be written in the form:
\begin{equation} \label{contortion2}
C(Y,\,Z)=(\oversetc{\nabla}_Y\;\undersym{\Omega}{i})(Z)\;\undersym{X}{i}.
\end{equation}
In fact, using (\ref{base}) and (\ref{canonical}), we have for all $Y,\,Z\in\mathfrak{X}(M)$,
$$C(Y,\,Z)=\nabla_YZ-\oversetc{\nabla}_YZ=\big(Y\,\cdot\;\undersym{\Omega}{i}\;(Z)\big)\;\undersym{X}{i}-\;\undersym{\Omega}{i}(\oversetc{\nabla}_YZ)\;\undersym{X}{i}=(\oversetc{\nabla}_Y\;\undersym{\Omega}{i})(Z)\;\undersym{X}{i}.$$

The identities (\ref{contortion1}) and (\ref{contortion2}) show that the geometry of an AP-space can be built up from the Levi-Civita connection instead of the canonical connection:
$$\nabla_YZ=\oversetc{\nabla}_YZ+(\oversetc{\nabla}_Y\;\undersym{\Omega}{i})(Z)\;\undersym{X}{i}.$$

The next proposition establishes the mutual relations between the torsion and contortion tensors.

\begin{proposition}
The following identities hold:
\begin{description}\label{rtc}
  \item[(a)]$T(Y,\,Z)=C(Y,\,Z)-C(Z,\,Y). $%\hfill\refstepcounter{equation}(\theequation)\label{TC}$
  \item[(b)]$C(Y,\,Z)=\frac{1}{2}\Big(T(Y,\,Z)+T(\;\undersym{X}{i},\,Y,\,Z)\;\undersym{X}{i}+T(\;\undersym{X}{i},\,Z,\,Y)\;\undersym{X}{i}\Big).$ %\hfill\refstepcounter{equation}(\theequation)\label{contortion3}$
\end{description}
From which,
\begin{description}
  \item[(a)$'$]$T(Y,\,Z,\,V)=C(Y,\,Z,\,V)-C(Z,\,Y,\,V).$ % \hfill\refstepcounter{equation}(\theequation)\label{T3}$
  \item[(b)$'$]$C(Y,\,Z,\,V)=\frac{1}{2}\Big(T(Y,\,Z,\,V)+T(V,\,Y,\,Z)+T(V,\,Z,\,Y)\Big),$ %\hfill\refstepcounter{equation}(\theequation)\label{CT}$
\end{description}
where $C(Y,\,Z,\,V)=g\big(C(Y,\,Z),\,V\big)$ and $T(Y,\,Z,\,V)=g\big(T(Y,\,Z),\,V\big)$.\\ Consequently, the torsion tensor vanishes if and only if the contortion tensor vanishes.
\end{proposition}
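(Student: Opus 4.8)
The plan is to prove (a) straight from the two defining relations, read off (a)$'$ by pairing with the metric, isolate the crucial skew-symmetry of $C$ in its last two slots, and then invert the relation through a Koszul-type combination to reach (b)$'$ and hence (b).

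First I would establish (a). Subtracting $C(Y,Z)=\nabla_YZ-\oversetc{\nabla}_YZ$ from its $Y\leftrightarrow Z$ swap gives
$$C(Y,Z)-C(Z,Y)=(\nabla_YZ-\nabla_ZY)-(\oversetc{\nabla}_YZ-\oversetc{\nabla}_ZY).$$
By the definition of torsion, $\nabla_YZ-\nabla_ZY=T(Y,Z)+[Y,Z]$, whereas $\oversetc{\nabla}$ is torsionless, so $\oversetc{\nabla}_YZ-\oversetc{\nabla}_ZY=[Y,Z]$. The brackets cancel and (a) follows. Applying $g(\,\cdot\,,V)$ to (a) yields (a)$'$ verbatim.

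The key step is a skew-symmetry lemma for $C$. Both $\nabla$ and $\oversetc{\nabla}$ are metric connections (the former by the proposition above, the latter by definition), so $(\nabla_Yg)(Z,V)=(\oversetc{\nabla}_Yg)(Z,V)=0$. Writing each as $Y\cdot g(Z,V)-g(\nabla_YZ,V)-g(Z,\nabla_YV)$, respectively with $\oversetc{\nabla}$, and subtracting, the terms $Y\cdot g(Z,V)$ drop out and I obtain $C(Y,Z,V)+C(Y,V,Z)=0$; that is, $C$ is skew in its last two arguments. This single fact is what makes the inversion possible, and I expect it to be the main point to get right.

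With (a)$'$ and this skewness in hand, I would write out the three instances $T(Y,Z,V)$, $T(V,Y,Z)$, $T(V,Z,Y)$ of (a)$'$ and use the skewness to move $V$ into the last slot of every term. Adding the three, all the skew-paired contributions cancel and exactly $2\,C(Y,Z,V)$ survives, which is (b)$'$. Finally, (b) is the de-metrized form of (b)$'$: since the $\undersym{X}{i}$ are $g$-orthonormal, any vector $W$ satisfies $W=g(W,\undersym{X}{i})\undersym{X}{i}$, so $C(Y,Z)=C(Y,Z,\undersym{X}{i})\undersym{X}{i}$; substituting (b)$'$ with $V=\undersym{X}{i}$ and using $T(Y,Z,\undersym{X}{i})\undersym{X}{i}=\undersym{\Omega}{i}(T(Y,Z))\undersym{X}{i}=T(Y,Z)$ delivers (b). The concluding equivalence is then immediate: if $T=0$ then (b) forces $C=0$, and if $C=0$ then (a) forces $T=0$.
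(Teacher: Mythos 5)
Your proof is correct, but for part \textbf{(b)} it follows a genuinely different route from the paper's. For \textbf{(a)} both arguments are identical (cancel the brackets against the torsionless $\oversetc{\nabla}$). For \textbf{(b)}, however, the paper works at the vector level and stays inside its AP-specific machinery: it substitutes the explicit formula $\oversetc{\nabla}_YZ=\widehat{\nabla}_YZ-\frac{1}{2}(\mathcal{L}_{\;\undersym{X}{i}}g)(Y,\,Z)\;\undersym{X}{i}$ into $2C(Y,\,Z)=2\nabla_YZ-2\oversetc{\nabla}_YZ$, expands the Lie derivative on the basis fields $\;\undersym{X}{j},\;\undersym{X}{k}$ in terms of brackets, and recognizes the resulting bracket terms as torsions; it thus obtains \textbf{(b)} directly and then reads off \textbf{(a)$'$}, \textbf{(b)$'$} by pairing with $g$. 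You instead work at the level of the $(0,3)$-tensors: you first isolate the skew-symmetry $C(Y,\,Z,\,V)+C(Y,\,V,\,Z)=0$ from the metricity of \emph{both} $\nabla$ and $\oversetc{\nabla}$ (a fact the paper only records as a remark \emph{after} the proposition, without proof), then invert \textbf{(a)$'$} by the classical cyclic-sum algebra to get \textbf{(b)$'$}, and finally de-metrize via the orthonormal expansion $W=g(W,\;\undersym{X}{i})\;\undersym{X}{i}$ to recover \textbf{(b)}. Your route buys generality and structure: it is the standard torsion--contortion inversion valid for any metric connection relative to its Levi-Civita connection, it makes no use of the AP-specific Lie-derivative formula, and it upgrades the paper's unproved remark on the skew-symmetry of $C$ into a proved lemma that does real work. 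The paper's route buys economy within its own development: it reuses the formula it has just established for $\oversetc{\nabla}$, yields \textbf{(b)} as a vector identity without any raising or lowering, and keeps the parallelization fields visibly in play, which matches the paper's theme that everything is built from the torsion of the canonical connection.
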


\begin{proof}
Let $Y,\,Z,\,V\in\mathfrak{X}(M)$. Then,
\begin{description}
  \item[(a)] The first identity gives the torsion tensor in terms of the contortion tensor.
  \begin{eqnarray*}
  T(Y,\,Z)&=&\nabla_YZ-\nabla_ZY-[Y,\,Z] \\
          &=&(\nabla_Y Z -\nabla_Z Y)-(\oversetc{\nabla}_Y Z-\oversetc{\nabla}_Z Y),\;\text{since $\oversetc{\nabla}$ is torsionless} \\
          &=& (\nabla_Y Z - \oversetc{\nabla}_Y Z)-(\nabla_Z Y - \oversetc{\nabla}_Z Y)=C(Y,\,Z)-C(Z,\,Y).
  \end{eqnarray*}
  \item[(b)] The second identity gives the contortion tensor in terms of the torsion tensor. In the following proof we make use of (\ref{riemannian2}), Remark \ref{rem1}, (\ref{base}), Remark \ref{rem0} and (\ref{torsion}).
  \begin{eqnarray*}
  2C(Y,\,Z)&=&2\nabla_YZ-2\oversetc{\nabla}_YZ=2\nabla_YZ-2\widehat{\nabla}_YZ+(\mathcal{L}_{\;\undersym{X}{i}}g)(Y,\,Z)\;\undersym{X}{i}\\
           &=&2\nabla_YZ-2\nabla_YZ+T(Y,\,Z)+\;\undersym{\Omega}{j}(Y)\;\undersym{\Omega}{k}(Z)(\mathcal{L}_{\;\undersym{X}{i}}g)(\;\undersym{X}{j}, \,\;\undersym{X}{k})\;\undersym{X}{i}\\
           &=&T(Y,\,Z)+\;\undersym{\Omega}{j}(Y)\;\undersym{\Omega}{k}(Z)\Big(\;\undersym{X}{i}\cdot g(\;\undersym{X}{j},\;\undersym{X}{k})-g([\;\undersym{X}{i},\;\undersym{X}{j}],\;\undersym{X}{k}) -g([\;\undersym{X}{i},\;\undersym{X}{k}],\;\undersym{X}{j})\Big)\;\undersym{X}{i} \\
           %&=&T(Y,\,Z)-\;\undersym{\Omega}{j}(Y)\;\undersym{\Omega}{k}(Z)\Big(g\big(T(\;\undersym{X}{j},\;\undersym{X}{i}),\;\undersym{X}{k}\big)+ g\big(T(\;\undersym{X}{k},\;\undersym{X}{i}),\;\undersym{X}{j}\big)\Big)\;\undersym{X}{i} \\
           &=&T(Y,\,Z)-\Big(g\big(T(Y,\;\undersym{X}{i}),\,Z\big)+g\big(T(Z,\;\undersym{X}{i}),\,Y\big)\Big)\;\undersym{X}{i} \\
           &=&T(Y,\,Z)+\Big(T(\;\undersym{X}{i},\,Y,\,Z)+T(\;\undersym{X}{i},\,Z,\,Y)\Big)\;\undersym{X}{i}.
  \end{eqnarray*}
\end{description}
\vspace*{-1.2cm}\[\qedhere\]
\end{proof}
\vspace*{.26cm}
\begin{remark}
$T(Y,\,Z,\,V)$ is skew-symmetric in the first two arguments whereas $C(Y,\,Z,\,V)$ is skew-symmetric in the last two arguments.
\end{remark}

\begin{definition}
Let $(M,\;\undersym{X}{i})$ be an AP-space. The contracted torsion or the basic form $B$ is defined, for every $Y\in\mathfrak{X}(M)$ by
\begin{equation*}
B(Y):={\rm Tr}\{Z\longmapsto T(Z,\,Y)\}.
\end{equation*}
\end{definition}

This $1$-form is known (locally) in the literature as the basic vector. In terms of the metric tensor (\ref{metric}), using (\ref{orthogonal}), the basic form can be written as
\begin{equation} \label{basicvector}
B(Y)=g\big(T(\;\undersym{X}{i},\,Y),\;\undersym{X}{i}\big)=T(\;\undersym{X}{i},\,Y,\;\undersym{X}{i}).
\end{equation}
Using Proposition \ref{rtc}{\bf{(b)$'$}}, $B(Y)$ can also be expressed in the form
\begin{equation*} %\label{basicvector3}
B(Y)=C(\;\undersym{X}{i},\,Y,\;\undersym{X}{i}).
\end{equation*}
Making use of (\ref{basicvector}) and (\ref{torsion}), we have
\begin{equation*} %\label{basicvector2}
B(Y)=\;\undersym{\Omega}{j}(Y)\;\undersym{\Omega}{i}([\;\undersym{X}{j},\;\undersym{X}{i}]).
\end{equation*}

\emph{It should be noted that in the above three expressions and in similar expressions summation is carried out on repeated mesh indices, although they are situated in different argument positions.
}

\begin{proposition} \label{different-connections}
Concerning the four connections of the AP-space, the difference tensors are given by:
\begin{description}
  \item[(a)] $\nabla_YZ-\widetilde{\nabla}_YZ=T(Y,\,Z)$.
  \item[(b)] $\nabla_YZ-\widehat{\nabla}_YZ=\frac{1}{2}T(Y,\,Z)$.
  \item[(c)] $\nabla_YZ-\oversetc{\nabla}_YZ=C(Y,\,Z)$.
  \item[(d)] $\widetilde{\nabla}_YZ-\widehat{\nabla}_YZ=-\frac{1}{2}T(Y,\,Z)$.
  \item[(e)] $\widetilde{\nabla}_YZ-\oversetc{\nabla}_YZ=C(Z,\,Y)$.
  \item[(f)] $\widehat{\nabla}_YZ-\oversetc{\nabla}_YZ=\frac{1}{2}(\mathcal{L}_{\;\undersym{X}{i}}g)(Y,\,Z)\;\undersym{X}{i}.$
\end{description}
\end{proposition}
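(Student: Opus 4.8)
The plan is to verify each of the six difference-tensor formulas by assembling them from the identities already established earlier in the excerpt, rather than computing anything from scratch. The key observation is that all four connections have already been expressed relative to the canonical connection $\nabla$: Remark \ref{rem1}\textbf{(a)} gives $\widetilde{\nabla}_YZ=\nabla_YZ-T(Y,Z)$, Remark \ref{rem1}\textbf{(b)} gives $\widehat{\nabla}_YZ=\nabla_YZ-\frac{1}{2}T(Y,Z)$, and Definition of the contortion tensor in (\ref{contortion1}) gives $\oversetc{\nabla}_YZ=\nabla_YZ-C(Y,Z)$. Thus each of (a), (b), (c) is merely a transcription of these three facts, and the remaining three formulas (d), (e), (f) follow by subtracting the appropriate pairs.

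Concretely, I would first dispatch \textbf{(a)}, \textbf{(b)}, \textbf{(c)} as immediate rephrasings of Remark \ref{rem1}\textbf{(a)}, \textbf{(b)} and equation (\ref{contortion1}) respectively. For \textbf{(d)}, I would subtract \textbf{(b)} from \textbf{(a)}: $(\nabla_YZ-\widetilde{\nabla}_YZ)-(\nabla_YZ-\widehat{\nabla}_YZ)=\widehat{\nabla}_YZ-\widetilde{\nabla}_YZ=T(Y,Z)-\frac{1}{2}T(Y,Z)=\frac{1}{2}T(Y,Z)$, so $\widetilde{\nabla}_YZ-\widehat{\nabla}_YZ=-\frac{1}{2}T(Y,Z)$. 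For \textbf{(f)}, the cleanest route is to quote the explicit Levi-Civita formula (\ref{riemannian2}), which directly states $\oversetc{\nabla}_YZ=\widehat{\nabla}_YZ-\frac{1}{2}(\mathcal{L}_{\;\undersym{X}{i}}g)(Y,Z)\;\undersym{X}{i}$; rearranging gives \textbf{(f)} at once.

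The one formula requiring a genuine (though short) computation is \textbf{(e)}, the difference $\widetilde{\nabla}_YZ-\oversetc{\nabla}_YZ$. Here I would write $\widetilde{\nabla}_YZ-\oversetc{\nabla}_YZ=(\widetilde{\nabla}_YZ-\nabla_YZ)+(\nabla_YZ-\oversetc{\nabla}_YZ)=-T(Y,Z)+C(Y,Z)$ using \textbf{(a)} and \textbf{(c)}. The claim that this equals $C(Z,Y)$ then reduces exactly to the identity $C(Y,Z)-T(Y,Z)=C(Z,Y)$, which is precisely Proposition \ref{rtc}\textbf{(a)} rearranged (namely $T(Y,Z)=C(Y,Z)-C(Z,Y)$). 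So \textbf{(e)} is really a consequence of the torsion–contortion relation already proved.

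The main obstacle, such as it is, is purely bookkeeping: one must keep straight which connection is being subtracted from which and track the signs carefully, since several formulas differ only by a sign or a factor of $\frac{1}{2}$. There is no conceptual difficulty once formulas \textbf{(a)}--\textbf{(c)} and Proposition \ref{rtc}\textbf{(a)} are in hand; every remaining identity is obtained by a single subtraction and an appeal to one prior result. I would therefore organize the writeup as a labeled list mirroring the statement, noting in each case exactly which earlier identity is being invoked, so that the reader can check the signs at a glance.
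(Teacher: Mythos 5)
Your proof is correct and follows essentially the same route as the paper's: parts (a)--(d) are read off from Remark \ref{rem1}, (c) is the definition (\ref{contortion1}), and (f) is a rearrangement of (\ref{riemannian2}). The only cosmetic difference is in (e), where the paper invokes the definition of $\widetilde{\nabla}$ together with the torsionlessness of $\oversetc{\nabla}$ directly, while you route through Proposition \ref{rtc}(a) --- but that identity was itself proved from the same torsionlessness fact, so the content is identical.
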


\begin{proof}
Properties {\bf{(a)}}, {\bf{(b)}}, {\bf{(d)}} follow from Remark \ref{rem1}, {\bf{(c)}} is the definition of the contortion tensor, {\bf{(e)}} follows from (\ref{contortion1}) and the fact that $\oversetc{\nabla}$ is torsionless, and {\bf{(f)}} follows from (\ref{riemannian2}).
\end{proof}

As a consequence of the above proposition, we have the following useful relations.

\begin{corollary} \label{difofconnections}
For every $Y,\,Z,\,V\in\mathfrak{X}(M)$, we have the following relations:
\begin{description}
  \item[(a)] $(\nabla_VT)(Y,\,Z)-(\widetilde{\nabla}_VT)(Y,\,Z)=\underset{Y,\,Z,\,V}{\mathfrak{S}}\;\Big\{T\big(V,\,T(Y,\,Z)\big)\Big\}.$ %\hfill\refstepcounter{equation}(\theequation)\label{canonical-dual}$
  \item[(b)] $(\nabla_VT)(Y,\,Z)-(\widehat{\nabla}_VT)(Y,\,Z)=\frac{1}{2}\;\underset{Y,\,Z,\,V}{\mathfrak{S}}\;\Big\{T\big(V,\,T(Y,\,Z)\big)\Big\}.$ %\hfill\refstepcounter{equation}(\theequation)\label{canonical-sym}$
  \item[(c)] $(\nabla_VT)(Y,\,Z)-(\oversetc{\nabla}_VT)(Y,\,Z)=-T\big(Y,\,C(V,\,Z)\big)+T\big(Z,\,C(V,\,Y)\big)+C\big(V,\,T(Y,\,Z)\big),$ %\hfill\refstepcounter{equation}(\theequation)\label{canonical-riem}$
\end{description}
where $\underset{Y,\,Z,\,V}{\mathfrak{S}}$ denotes the cyclic permutation of $Y,\,Z,\,V$ and summation.
\end{corollary}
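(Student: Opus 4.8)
The plan is to reduce all three identities to a single difference formula that measures how the covariant derivative of the $(1,2)$-torsion tensor changes when the connection is altered by a difference tensor, and then to read off each case from Proposition \ref{different-connections}. Concretely, suppose $D$ is any linear connection written as $D_V W = \nabla_V W - A(V,W)$ for a $(1,2)$-tensor $A$. Since $T$ is itself a $(1,2)$-tensor, its covariant derivative along either connection is computed by the usual rule, e.g. $(\nabla_V T)(Y,Z)=\nabla_V\big(T(Y,Z)\big)-T(\nabla_V Y,Z)-T(Y,\nabla_V Z)$, and similarly for $D$. Subtracting, and applying $\nabla_V W - D_V W = A(V,W)$ to each of the three occurrences of a covariant derivative, I expect to obtain the master formula
\begin{equation*}
(\nabla_V T)(Y,Z)-(D_V T)(Y,Z)=A\big(V,T(Y,Z)\big)-T\big(A(V,Y),Z\big)-T\big(Y,A(V,Z)\big).
\end{equation*}

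The three statements then follow by substituting the three difference tensors supplied by Proposition \ref{different-connections}: $A=T$ for $\widetilde{\nabla}$, $A=\tfrac12 T$ for $\widehat{\nabla}$, and $A=C$ for $\oversetc{\nabla}$, and simplifying using the antisymmetry $T(Y,Z)=-T(Z,Y)$ (immediate from the definition of torsion). For (a) and (b), the last two terms $-T\big(T(V,Y),Z\big)-T\big(Y,T(V,Z)\big)$ turn into $T\big(Z,T(V,Y)\big)+T\big(Y,T(Z,V)\big)$, which together with $T\big(V,T(Y,Z)\big)$ is exactly the cyclic sum $\underset{Y,\,Z,\,V}{\mathfrak{S}}\big\{T\big(V,T(Y,Z)\big)\big\}$; the overall factor $\tfrac12$ in (b) is simply carried through the same manipulation. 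For (c), the same antisymmetry rewrites $-T\big(C(V,Y),Z\big)$ as $T\big(Z,C(V,Y)\big)$, producing precisely the right-hand side $-T\big(Y,C(V,Z)\big)+T\big(Z,C(V,Y)\big)+C\big(V,T(Y,Z)\big)$ of (c).

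The computation is essentially bookkeeping, so the only real care needed is in the two conventional points that govern the sign manipulations: the antisymmetry of $T$ in its first two slots, and the precise meaning of the cyclic operator $\mathfrak{S}$ as a permutation $Y\to Z\to V\to Y$ together with summation. Matching the three surviving terms to the cyclic sum in (a) and (b), and checking that no stray term remains in (c), is where I expect the main (if modest) obstacle to lie; once the master formula is in hand, each case is a one-line substitution.
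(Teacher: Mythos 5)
Your proof is correct and follows essentially the same route as the paper, which states the corollary as an immediate consequence of Proposition \ref{different-connections} (the difference tensors $T$, $\tfrac12 T$, $C$) without writing out the computation; your master formula $(\nabla_V T)(Y,Z)-(D_V T)(Y,Z)=A\big(V,T(Y,Z)\big)-T\big(A(V,Y),Z\big)-T\big(Y,A(V,Z)\big)$ is precisely the bookkeeping the paper leaves implicit, and your sign manipulations via the antisymmetry of $T$ correctly reproduce the cyclic sums in (a), (b) and the three-term expression in (c).
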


%%%%%%%%%%%%%%%%%%%%%%%%%%%%%%%%%%%%%%%%%%%%% Section3: Curvature Tensors and Bianchi Identities %%%%%%%%%%%%%%%%%%%%%%%%%%%%%%%%%%%%%%%%%%%%%%%%%%

\section{Curvature tensors and Bianchi identities}

\vspace{4pt}

\hspace*{.4cm} In an AP-space the curvature $R$ of the canonical connection $\nabla$ vanishes identically. This section is devoted to show that the other three curvature tensors $\widetilde{R}, \widehat{R}$ and $\;\overcirc{R}$, associated with $\widetilde{\nabla}, \widehat{\nabla}$ and $\oversetc{\nabla}$ respectively, do not vanish. Also, we show that the vanishing of $R$ enables us to express these three curvature tensors in terms of the torsion tensor only.

\begin{theorem}
The three curvature tensors $\widetilde{R}, \widehat{R}$ and $\;\overcirc{R}$ of the connections $\widetilde{\nabla}, \widehat{\nabla}$ and $\oversetc{\nabla}$ are given respectively by:
\begin{description}
  \item[(a)] $\;\widetilde{R}(Y,\,Z)V=(\nabla_VT)(Y,\,Z).\hfill\refstepcounter{equation}(\theequation)\label{dual}$
\end{description}
\vspace*{8pt}
{\bf{(b)}}
\vspace*{-0.9cm}
\begin{eqnarray} \label{sym}                                         \hspace*{-25.5cm}\widehat{R}(Y,\,Z)V&=&\frac{1}{2}\Big((\nabla_ZT)(Y,\,V)-(\nabla_YT)(Z,\,V)\Big)-\frac{1}{2}T\big(T(Y,\,Z),\,V\big)\quad\quad\quad\quad\nonumber\\
              & &+\frac{1}{4}\Big(T\big(Y,\,T(Z,\,V)\big)-T\big(Z,\,T(Y,\,V)\big)\Big).
\end{eqnarray}
{\bf{(c)}}
\vspace*{-1.0cm}
\begin{eqnarray} \label{Riemanniancurv}
 \hspace*{-35cm}\overcirc{R}(Y,\,Z)V&=&(\nabla_Z C)(Y,\,V)-(\nabla_Y C)(Z,\,V)-C\big(T(Y,\,Z),\,V\big)\quad\quad\quad\quad\quad\quad\;\; \nonumber \\
                   & &+C\big(Y,\,C(Z,\,V)\big)-C\big(Z,\,C(Y,\,V)\big).
\end{eqnarray}
\end{theorem}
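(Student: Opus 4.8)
The plan is to treat all three cases uniformly through a single change-of-connection formula for curvature, and then to exploit the two special features of the canonical connection $\nabla$: it is flat (Theorem~\ref{flat}) and its torsion is the space's torsion $T$. By Proposition~\ref{different-connections} each of the three connections is of the form $\nabla + A$ for a fixed $(1,2)$-tensor field $A$, namely $A=-T$ for $\widetilde{\nabla}$, $A=-\frac{1}{2}T$ for $\widehat{\nabla}$, and $A=-C$ for $\oversetc{\nabla}$. Hence the first step is to compute the curvature of an arbitrary linear connection $D'=D+A$ in terms of the curvature $R$ and torsion $T^{D}$ of $D$ together with $A$.

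Expanding $R'(Y,Z)V = D'_Y D'_Z V - D'_Z D'_Y V - D'_{[Y,Z]}V$ and applying the Leibniz rule $D_Y\big(A(Z,V)\big)=(D_YA)(Z,V)+A(D_YZ,V)+A(Z,D_YV)$, which is legitimate because $A$ is a tensor, the pure second-derivative terms reassemble into $R(Y,Z)V$, the terms carrying a single covariant derivative of $V$ cancel in pairs, the remaining first-order terms $A(D_YZ,V)-A(D_ZY,V)-A([Y,Z],V)$ collapse to $A\big(T^{D}(Y,Z),V\big)$, and one is left with
\begin{equation*}
R'(Y,Z)V = R(Y,Z)V + (D_YA)(Z,V) - (D_ZA)(Y,V) + A\big(T^{D}(Y,Z),V\big) + A\big(Y,A(Z,V)\big) - A\big(Z,A(Y,V)\big).
\end{equation*}
Specializing to $D=\nabla$ kills the first term by Theorem~\ref{flat} and replaces $T^{D}$ by $T$.

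Parts \textbf{(b)} and \textbf{(c)} then fall out with no further input: substituting $A=-\frac{1}{2}T$ and $A=-C$ respectively, the signs and coefficients of the five surviving terms match (\ref{sym}) and (\ref{Riemanniancurv}) verbatim, using only bilinearity. Part \textbf{(a)} is where the real work lies. Substituting $A=-T$ gives
\begin{equation*}
\widetilde{R}(Y,Z)V = -(\nabla_Y T)(Z,V) + (\nabla_Z T)(Y,V) - T\big(T(Y,Z),V\big) + T\big(Y,T(Z,V)\big) - T\big(Z,T(Y,V)\big),
\end{equation*}
which bears no obvious resemblance to the one-term expression $(\nabla_V T)(Y,Z)$ claimed in (\ref{dual}).

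The mechanism that collapses this is the first Bianchi identity for $\nabla$: for any linear connection one has $\underset{Y,Z,V}{\mathfrak{S}}\,R(Y,Z)V = \underset{Y,Z,V}{\mathfrak{S}}\big\{(\nabla_Y T)(Z,V)+T\big(T(Y,Z),V\big)\big\}$, and since $\nabla$ is flat the left-hand side vanishes. Expanding this cyclic sum and rewriting each summand with the skew-symmetry $T(Y,Z)=-T(Z,Y)$ (immediate from (\ref{torsion})) turns it into precisely the relation $(\nabla_V T)(Y,Z) = -(\nabla_Y T)(Z,V) + (\nabla_Z T)(Y,V) - T\big(T(Y,Z),V\big) + T\big(Y,T(Z,V)\big) - T\big(Z,T(Y,V)\big)$, identifying the displayed five-term expression with $(\nabla_V T)(Y,Z)$. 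I expect this reindexing to be the main obstacle: one must match the three cyclic torsion-derivative terms and the three quadratic torsion terms against the five-term expression with exactly the right signs, which is easy to get wrong. Everything else is routine once the change-of-connection formula and the flatness of $\nabla$ are in place.
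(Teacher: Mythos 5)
Your proof is correct, but it follows a genuinely different route from the paper's, most notably in part \textbf{(a)}. The paper does not use a unified change-of-connection formula: it proves \textbf{(a)} by brute-force expansion of $\widetilde{\nabla}_Y\widetilde{\nabla}_ZV$, $\widetilde{\nabla}_Z\widetilde{\nabla}_YV$ and $\widetilde{\nabla}_{[Y,Z]}V$ directly from the definition $\widetilde{\nabla}_YZ=\nabla_ZY+[Y,Z]$, then uses the Jacobi identity together with flatness of $\nabla$ (in the form $\nabla_{[Y,Z]}V=\nabla_Y\nabla_ZV-\nabla_Z\nabla_YV$) to regroup the terms as $\nabla_VT(Y,Z)-T(\nabla_VY,Z)-T(Y,\nabla_VZ)=(\nabla_VT)(Y,Z)$, landing on the compact one-term form with no appeal to any Bianchi identity; parts \textbf{(b)} and \textbf{(c)} are then left to the reader ``in the same manner,'' which is essentially your specialization $A=-\tfrac{1}{2}T$, $A=-C$. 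Your approach buys uniformity: one correct general identity
$R'(Y,Z)V=R(Y,Z)V+(D_YA)(Z,V)-(D_ZA)(Y,V)+A\big(T^{D}(Y,Z),V\big)+A\big(Y,A(Z,V)\big)-A\big(Z,A(Y,V)\big)$
dispatches \textbf{(b)} and \textbf{(c)} instantly, whereas the paper's route for \textbf{(a)} buys directness, since your substitution $A=-T$ produces a five-term expression that must still be collapsed. Your collapsing step is legitimate and free of circularity: the first Bianchi identity for a connection with torsion is a standard fact (the paper quotes it from Kobayashi--Nomizu before using it in Proposition \ref{1st}), and applying it to the flat connection $\nabla$ does not depend on the theorem being proved; indeed, the relation
$(\nabla_VT)(Y,Z)=-(\nabla_YT)(Z,V)+(\nabla_ZT)(Y,V)-T\big(T(Y,Z),V\big)+T\big(Y,T(Z,V)\big)-T\big(Z,T(Y,V)\big)$
that you extract from $\underset{Y,\,Z,\,V}{\mathfrak{S}}\big\{(\nabla_VT)(Y,Z)+T\big(T(Y,Z),V\big)\big\}=0$ is exactly Proposition \ref{1st}\textbf{(a)} of the paper rearranged via the skew-symmetry of $T$, so your detour is fully consistent with (and in fact re-derives) a result the paper only establishes afterwards.
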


\begin{proof}
We prove {\bf{(a)}} only. The proof of the other parts can be carried out in the same manner. Using (\ref{dualcon}), we get
\begin{eqnarray*}
\widetilde{\nabla}_Y\;\widetilde{\nabla}_ZV&=&\widetilde{\nabla}_Y(\nabla_VZ+[Z,\,V])=\widetilde{\nabla}_Y\nabla_VZ+\widetilde{\nabla}_Y[Z,\,V] \\
                       &=&\nabla_{\nabla_VZ}Y+[Y,\,\nabla_VZ]+\nabla_{[Z,\,V]}Y+[Y,\,[Z,\,V]].
\end{eqnarray*}
Similarly,
$$\widetilde{\nabla}_Z\;\widetilde{\nabla}_YV=\nabla_{\nabla_VY}Z+[Z,\,\nabla_VY]+\nabla_{[Y,\,V]}Z+[Z,\,[Y,\,V]].$$
and
$$\widetilde{\nabla}_{[Y,\,Z]}V=\nabla_V[Y,\,Z]+[[Y,\,Z],\,V].$$
Using the above three identities, together with the Jacobi identity, we get
\begin{eqnarray*}
\widetilde{R}(Y,\,Z)V&=&\widetilde{\nabla}_Y\widetilde{\nabla}_ZV-\widetilde{\nabla}_Z\widetilde{\nabla}_YV-\widetilde{\nabla}_{[Y,\,Z]}V \\
              &=&\nabla_{\nabla_VZ}Y+[Y,\,\nabla_VZ]-\nabla_{\nabla_VY}Z-[Z,\,\nabla_VY] \\
              & &-\nabla_V[Y,\,Z]+\nabla_{[Z,\,V]}Y-\nabla_{[Y,\,V]}Z.
\end{eqnarray*}
Using the fact that the curvature tensor of the canonical connection vanishes (Theorem \ref{flat}), it follows that
$$\nabla_{[Y,\,Z]}V=\nabla_Y\nabla_ZV-\nabla_Z\nabla_YV.$$
Using the above identity, we get
\begin{eqnarray*}
\widetilde{R}(Y,\,Z)V&=&\nabla_{\nabla_VZ}Y+[Y,\,\nabla_VZ]-\nabla_{\nabla_VY}Z-[Z,\,\nabla_VY]-\nabla_V[Y,\,Z] \\
              & &+\nabla_Z\nabla_VY-\nabla_V\nabla_ZY-\nabla_Y\nabla_VZ+\nabla_V\nabla_YZ\\
              &=&(\nabla_V\nabla_YZ-\nabla_V\nabla_ZY-\nabla_V[Y,\,Z]) \\
              & &-(\nabla_{\nabla_VY}Z-\nabla_Z\nabla_VY-[\nabla_VY,\,Z]) \\
              & &-(\nabla_Y\nabla_VZ-\nabla_{\nabla_VZ}Y-[Y,\,\nabla_VZ]) \\
              &=&\nabla_VT(Y,\,Z)-T(\nabla_VY,\,Z)-T(Y,\,\nabla_VZ) \\
              &=&(\nabla_VT)(Y,\,Z).
\end{eqnarray*}
\vspace*{-1.6cm}\[\qedhere\]
\end{proof}

The above theorem shows that the curvature tensors $\widetilde{R},\,\widehat{R}$ and $\;\overcirc{R}$ are expressible in terms of the torsion tensor of the space only. This proves that the geometry of an AP-space depends crucially on the torsion tensor. It is worth mentioning that the vanishing of that tensor implies that the four connections $\nabla,\,\widetilde{\nabla},\,\widehat{\nabla}$ and $\oversetc{\nabla}$ coincide and a trivial flat Riemannian space is achieved. Thus, a sufficient condition for the non-vanishing of the torsion tensor is the non-vanishing of any one of the three curvature tensors $\widetilde{R},\,\widehat{R}$ or $\;\overcirc{R}$.

\vspace*{.26cm}The next result gives the expressions of the Ricci tensor ${\rm\oversetc{R}ic}$ of $\oversetc{\nabla}$ and the Ricci-like tensors ${\rm\widetilde{R}ic}$ and ${\rm\widehat{R}ic}$ of $\widetilde{\nabla}$ and $\widehat{\nabla}$ together with their respective contractions (the scalar curvature ${\rm\oversetc{S}c}$ and the curvature-like scalars ${\rm\widetilde{S}c}$ and ${\rm\widehat{S}c}$). The orthonormality of the parallelization vector fields $\;\undersym{X}{i}$ plays an essential role in the proof.

\begin{theorem}\label{exception}
In an AP-space $(M,\;\undersym{X}{i})$ we have, for every $Y,\,Z\in\mathfrak{X}(M)$,
\begin{description}
  \item[(a)] ${\rm\widetilde{R}ic}(Y,\,Z)=-(\nabla_ZB)(Y).$ %\hfill\refstepcounter{equation}(\theequation)\label{dual-ricci}$
  \item[(b)] ${\rm\widehat{R}ic}(Y,\,Z)=\frac{1}{2}(\mathcal{L}_{\;\undersym{X}{i}}T)(Y,\,Z,\;\undersym{X}{i}) +\frac{1}{4}T\big(Y,\,T(Z,\;\undersym{X}{i}),\;\undersym{X}{i}\big)-\frac{1}{2}(\nabla_YB)(Z)- \frac{1}{4}B\big(T(Y,\,Z)\big).$%\hfill\refstepcounter{equation}(\theequation)\label{sym-ricci}$
  \item[(c)] ${\rm\oversetc{R}ic}(Y,\,Z)=(\mathcal{L}_{\;\undersym{X}{i}}C)(Y,\,Z,\;\undersym{X}{i}) +C(Y,\,C\big(Z,\;\undersym{X}{i}),\;\undersym{X}{i}\big)-(\nabla_YB)(Z)-B(C(Y,\,Z)).$ %\hfill\refstepcounter{equation}(\theequation)\label{riem-ricci}$
  \item[(a)$'$] ${\rm\widetilde{S}c}=-\;\undersym{X}{i}\cdot B(\;\undersym{X}{i}).$
  \item[(b)$'$] ${\rm\widehat{S}c}=-\frac{1}{2}\;\undersym{X}{i}\cdot B(\;\undersym{X}{i})+\frac{1}{4}T(\;\undersym{X}{j},\,[\;\undersym{X}{i},\;\undersym{X}{j}],\;\undersym{X}{i}).$
  \item[(c)$'$] ${\rm\oversetc{S}c}=-2\;\undersym{X}{i}\cdot B(\;\undersym{X}{i})+B(\;\undersym{X}{i})B(\;\undersym{X}{i})+C(T(\;\undersym{X}{i},\;\undersym{X}{j}),\;\undersym{X}{j},\;\undersym{X}{i}) +C\big(\;\undersym{X}{j},\,C(\;\undersym{X}{i},\;\undersym{X}{j}),\;\undersym{X}{i}\big)$.
\end{description}
\end{theorem}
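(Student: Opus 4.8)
The plan is to compute each Ricci tensor as the trace of the corresponding curvature operator, evaluated in the $g$-orthonormal parallelization frame. Adopting the convention $\mathrm{Ric}(Y,Z) = \mathrm{Tr}\{V \mapsto R(Y,V)Z\}$ and using the orthonormality (\ref{orthogonal}) of the $\undersym{X}{i}$, every such trace collapses to a sum over the frame,
$$\mathrm{Ric}(Y,Z) = \sum_i g\big(R(Y,\undersym{X}{i})Z,\,\undersym{X}{i}\big),$$
which is exactly the ``mesh-index'' summation referred to above. The scalar curvatures are then obtained by one further frame contraction, $\mathrm{Sc} = \sum_j \mathrm{Ric}(\undersym{X}{j},\undersym{X}{j})$. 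Throughout I would use freely that $\nabla$ is metric with $\nabla\undersym{X}{i}=0$ (the AP-condition (\ref{AP-cond})), so that $g(\cdot,\undersym{X}{i})$ and the covariant derivative commute inside these sums.

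First I would establish \textbf{(a)} as the model case. Substituting the dual-curvature formula (\ref{dual}) gives $\widetilde{R}(Y,\undersym{X}{i})Z = (\nabla_Z T)(Y,\undersym{X}{i})$; contracting and using the skew-symmetry of $T$ in its first two arguments turns the sum into $-\sum_i g\big((\nabla_Z T)(\undersym{X}{i},Y),\undersym{X}{i}\big)$. Because $\nabla g=0$ and $\nabla\undersym{X}{i}=0$, this sum is precisely the covariant derivative of the contracted torsion, so by the definition (\ref{basicvector}) of $B$ it equals $-(\nabla_Z B)(Y)$, which is \textbf{(a)}.

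For \textbf{(b)} and \textbf{(c)} the same contraction is applied to the longer formulas (\ref{sym}) and (\ref{Riemanniancurv}). The essential device is the frame identity $\mathcal{L}_{\undersym{X}{i}}W = \nabla_{\undersym{X}{i}}W - T(\undersym{X}{i},W)$, which follows from the torsion formula (\ref{torsion}) together with $\nabla\undersym{X}{i}=0$; extended to $T$ and $C$ by the Leibniz rule it rewrites the contracted terms $(\nabla_{\undersym{X}{i}}T)$ and $(\nabla_{\undersym{X}{i}}C)$ as the Lie-derivative terms $\mathcal{L}_{\undersym{X}{i}}T$ and $\mathcal{L}_{\undersym{X}{i}}C$ plus quadratic corrections in $T$ (resp. in $T$ and $C$). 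Those corrections then merge with the genuinely quadratic terms $T(T(\cdot,\cdot),\cdot)$, $C(C(\cdot,\cdot),\cdot)$ and $C(\cdot, T(\cdot,\cdot))$ present in the curvature formulas, while the remaining linear pieces contract to $(\nabla_Y B)(Z)$ and the terms $B\big(T(Y,Z)\big)$, $B\big(C(Y,Z)\big)$ via (\ref{basicvector}). I expect the main obstacle to be exactly this bookkeeping: keeping track of the many sign changes forced by the skew-symmetry of $T$ in its first two slots and of $C$ in its last two slots, and correctly matching each contracted quadratic term against the one produced by the $\mathcal{L}_{\undersym{X}{i}}$-rewriting so that only the stated residual terms survive.

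Finally, for the scalar curvatures \textbf{(a)$'$}--\textbf{(c)$'$} I would set $Y=Z=\undersym{X}{j}$ in the Ricci formulas and sum over $j$. The derivative terms $(\nabla_{\undersym{X}{j}}B)(\undersym{X}{j})$ reduce to $\undersym{X}{j}\cdot B(\undersym{X}{j})$ because $\nabla_{\undersym{X}{j}}\undersym{X}{j}=0$, yielding the leading $-\undersym{X}{i}\cdot B(\undersym{X}{i})$ terms; the quadratic terms produce the residual scalars $T(\undersym{X}{j},[\undersym{X}{i},\undersym{X}{j}],\undersym{X}{i})$, $C(T(\undersym{X}{i},\undersym{X}{j}),\undersym{X}{j},\undersym{X}{i})$ and $C(\undersym{X}{j},C(\undersym{X}{i},\undersym{X}{j}),\undersym{X}{i})$, while an iterated contraction of $C$ collapses, through the identity $B(Y)=C(\undersym{X}{i},Y,\undersym{X}{i})$, into the quadratic basic-form term $B(\undersym{X}{i})B(\undersym{X}{i})$ appearing in \textbf{(c)$'$}.
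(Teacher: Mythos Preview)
Your proposal is correct and follows essentially the same approach as the paper: both compute the Ricci tensors as frame traces $g(R(Y,\undersym{X}{i})Z,\undersym{X}{i})$ using orthonormality, and both use the relation $\nabla_{\undersym{X}{i}}W=[\undersym{X}{i},W]+T(\undersym{X}{i},W)$ (a consequence of the AP-condition) to convert the $\nabla_{\undersym{X}{i}}$-terms into $\mathcal{L}_{\undersym{X}{i}}$-terms plus torsion corrections, with remaining pieces collapsing via the definition of $B$. The only minor divergence is in \textbf{(c)$'$}: where you would combine $C(\undersym{X}{j},T(\undersym{X}{i},\undersym{X}{j}),\undersym{X}{i})$ with $C(\undersym{X}{j},C(\undersym{X}{j},\undersym{X}{i}),\undersym{X}{i})$ via the identity $T=C-C^{\mathrm{swap}}$ (Proposition~\ref{rtc}(a)), the paper instead invokes the torsionlessness of $\oversetc{\nabla}$ to write $[\undersym{X}{i},\undersym{X}{j}]+\oversetc{\nabla}_{\undersym{X}{j}}\undersym{X}{i}=\oversetc{\nabla}_{\undersym{X}{i}}\undersym{X}{j}=-C(\undersym{X}{i},\undersym{X}{j})$; both routes give the same residual term.
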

\begin{proof} We prove {\bf{(b)}} and {\bf{(c)$'$}} only. The other
identities can be proved similarly.
\begin{description}
  \item [(b)] Using (\ref{orthogonal}), (\ref{sym}), (\ref{AP-cond}) and (\ref{basicvector}), we have
  \begin{eqnarray*}
  {\rm\widehat{R}ic}(Y,\,Z)&=&g\big(\widehat{R}(Y,\;\undersym{X}{i})Z,\;\undersym{X}{i}\big)\\
                 &=&\frac{1}{2}\Big((\nabla_{\;\undersym{X}{i}}T)(Y,\,Z,\;\undersym{X}{i})-(\nabla_YT)(\;\undersym{X}{i},\,Z,\;\undersym{X}{i})- T\big(T(Y,\;\undersym{X}{i}),\,Z,\;\undersym{X}{i}\big)\Big)\\
                 & &+\frac{1}{4}\Big(T\big(Y,\,T(\;\undersym{X}{i},\,Z),\;\undersym{X}{i}\big) -T\big(\;\undersym{X}{i},\,T(Y,\,Z),\;\undersym{X}{i}\big)\Big)\\
                 &=&\frac{1}{4}\Big(2\;\undersym{X}{i}\cdot
                     T(Y,\,Z,\;\undersym{X}{i})-2T(\nabla_{\;\undersym{X}{i}}Y,\,Z,\;\undersym{X}{i}) -2T(Y,\,\nabla_{\;\undersym{X}{i}}Z,\;\undersym{X}{i})-2(\nabla_YB)(Z)\\
                 & &-2T\big(T(Y,\;\undersym{X}{i}),\,Z,\;\undersym{X}{i}\big)+T\big(Y,\,T(\;\undersym{X}{i},\,Z),\;\undersym{X}{i}\big) -B\big(T(Y,\,Z)\big)\Big)\\
                 &=&\frac{1}{4}\Big(2\;\undersym{X}{i}\cdot T(Y,\,Z,\;\undersym{X}{i})-T(Y,\,\nabla_{\;\undersym{X}{i}}Z,\;\undersym{X}{i})-2(\nabla_YB)(Z)\\
                 & &-2T([\;\undersym{X}{i},\,Y],\,Z,\;\undersym{X}{i})-T(Y,\,[\;\undersym{X}{i},\,Z],\;\undersym{X}{i})-B\big(T(Y,\,Z)\big)\Big)\\
                 &=&\frac{1}{4}\Big(2(\mathcal{L}_{\;\undersym{X}{i}}T)(Y,\,Z,\;\undersym{X}{i})-T(Y,\,\nabla_{\;\undersym{X}{i}}Z,\;\undersym{X}{i}) -T(Y,\,[Z,\;\undersym{X}{i}],\;\undersym{X}{i})\\
                 & &-2(\nabla_YB)(Z)-B\big(T(Y,\,Z)\big)\Big)\\
                 &=&\frac{1}{2}(\mathcal{L}_{\;\undersym{X}{i}}T)(Y,\,Z,\;\undersym{X}{i}) +\frac{1}{4}T\big(Y,\,T(Z,\;\undersym{X}{i}),\;\undersym{X}{i}\big)-\frac{1}{2}(\nabla_YB)(Z)- \frac{1}{4}B\big(T(Y,\,Z)\big).
  \end{eqnarray*}
  \item [(c)$'$] Using (\ref{orthogonal}), {\bf{(c)}}, (\ref{AP-cond}), (\ref{contortion1}), Proposition \ref{rtc}{\bf{(b)}} and the torsionless property of $\oversetc{\nabla}$, we get
  \begin{eqnarray*}
  {\rm\oversetc{S}c}&=&{\rm\oversetc{R}ic}(\;\undersym{X}{j},\;\undersym{X}{j})\\
         &=&(\mathcal{L}_{\;\undersym{X}{i}}C)(\;\undersym{X}{j},\;\undersym{X}{j},\;\undersym{X}{i}) +C\big(\;\undersym{X}{j},\,C(\;\undersym{X}{j},\;\undersym{X}{i}),\;\undersym{X}{i}\big)- (\nabla_{\;\undersym{X}{j}}B)(\;\undersym{X}{j})-B\big(C(\;\undersym{X}{j},\;\undersym{X}{j})\big)\\
         &=&\;\undersym{X}{i}\cdot C(\;\undersym{X}{j},\;\undersym{X}{j},\;\undersym{X}{i})-C([\;\undersym{X}{i},\;\undersym{X}{j}],\;\undersym{X}{j},\;\undersym{X}{i}) -C(\;\undersym{X}{j},\,[\;\undersym{X}{i},\;\undersym{X}{j}],\;\undersym{X}{i})- \;\undersym{X}{j}\cdot B(\;\undersym{X}{j})\\
         & &-C(\;\undersym{X}{j},\,\oversetc{\nabla}_{\;\undersym{X}{j}}\;\undersym{X}{i},\;\undersym{X}{i})+ B(\;\undersym{X}{i})B(\;\undersym{X}{i})\\
&=&-2\;\undersym{X}{i}\cdot B(\;\undersym{X}{i})+B(\;\undersym{X}{i})B(\;\undersym{X}{i})-C([\;\undersym{X}{i},\;\undersym{X}{j}],\;\undersym{X}{j},\;\undersym{X}{i})\\
         & &-\Big(C(\;\undersym{X}{j},\,\oversetc{\nabla}_{\;\undersym{X}{j}}\;\undersym{X}{i},\;\undersym{X}{i}) +C(\;\undersym{X}{j},\,[\;\undersym{X}{i},\;\undersym{X}{j}],\;\undersym{X}{i})\Big)\\
         &=&-2\;\undersym{X}{i}\cdot B(\;\undersym{X}{i})+B(\;\undersym{X}{i})B(\;\undersym{X}{i})+C\big(T(\;\undersym{X}{i},\;\undersym{X}{j}),\;\undersym{X}{j},\;\undersym{X}{i}\big)- C\big(\;\undersym{X}{j},\,\oversetc{\nabla}_{\;\undersym{X}{i}}\;\undersym{X}{j},\;\undersym{X}{i}\big)\\
         &=&-2\;\undersym{X}{i}\cdot B(\;\undersym{X}{i})+B(\;\undersym{X}{i})B(\;\undersym{X}{i})+C\big(T(\;\undersym{X}{i},\;\undersym{X}{j}),\;\undersym{X}{j},\;\undersym{X}{i}\big)+ C\big(\;\undersym{X}{j},\,C(\;\undersym{X}{i},\;\undersym{X}{j}),\;\undersym{X}{i}\big).
  \end{eqnarray*}
\end{description}
\vspace*{-1.2cm}\[\qedhere\]
\end{proof}

\vspace*{.26cm}

\begin{center}
    \large{Table1: Linear connections in AP-geometry}
\end{center}
\begin{center}
    \begin{tabular}{|c|c|c|c|c|}
       \hline
       % after \\: \hline or \cline{col1-col2} \cline{col3-col4} ...
       \multirow{2}{*}{Connection} & \multirow{2}{*}{Symbol}     & \multirow{2}{*}{Torsion} & \multirow{2}{*}{Curvature} & \multirow{2}{*}{Metricity} \\
       &&&&\\ \hline
       \multirow{2}{*}{Canonical}  &     \multirow{2}{*}{$\nabla$}  &   \multirow{2}{*}{$T$}   &    \multirow{2}{*}{$0$}    & \multirow{2}{*}{metric}     \\
       &&&&\\ \hline
       \multirow{2}{*}{Dual}       & \multirow{2}{*}{$\widetilde{\nabla}$} &   \multirow{2}{*}{$-T$}  & \multirow{2}{*}{$\widetilde{R}$}  & \multirow{2}{*}{nonmetric}  \\
       &&&&\\ \hline
       \multirow{2}{*}{Symmetric}  & \multirow{2}{*}{$\widehat{\nabla}$} &    \multirow{2}{*}{$0$}  & \multirow{2}{*}{$\widehat{R}$}  & \multirow{2}{*}{nonmetric}  \\
       &&&&\\ \hline
       \multirow{2}{*}{Levi-Civita}& \multirow{2}{*}{$\oversetc{\nabla}$} &    \multirow{2}{*}{$0$}  & \multirow{2}{*}{$\oversetc{R}$}  & \multirow{2}{*}{metric}     \\
       &&&&\\ \hline
     \end{tabular}
\end{center}

\vspace*{1cm}

Let $D$ be an arbitrary linear connection on $M$ with torsion ${\bf{T}}$ and curvature ${\bf{R}}$. Then the Bianchi identities are given, for all $Y,\,Z,\,V,\,U\in\mathfrak{X}(M)$, by \cite{Kobayashi}:
\begin{description}
  \item[First Bianchi identity:] $\underset{Y,\,Z,\,V}{\;\mathfrak{S}}\Big\{\;{\bf{R}}(Y,\,Z)V\Big\}=\underset{Y,\,Z,\,V}{\mathfrak{S}}\;\Big\{(D_V{\bf{T}})(Y,\,Z) +{\bf{T}}\big({\bf{T}}(Y,\,Z),\,V\big)\Big\}.$
  \item[Second Bianchi identity:] $\underset{Y,\,Z,\,V}{\mathfrak{S}}\;\Big\{(D_V {\bf{R}})(Y,\,Z)U-{\bf{R}}\big(V,\,{\bf{T}}(Y,\,Z)\big)U\Big\}=0$.
\end{description}

In what follows, we derive some identities using the above Bianchi identities. Some of the derived identities will be used to simplify other formulae thus obtained.

\begin{proposition} \label{1st}
The first Bianchi identity for the connections $\nabla,\,\widetilde{\nabla},\,\widehat{\nabla}$ and $\oversetc{\nabla}$ reads:
\begin{description}
  \item[(a)] $\underset{Y,\,Z,\,V}{\mathfrak{S}}\;\Big\{(\nabla_VT)(Y,\,Z)+T\big(T(Y,\,Z),\,V\big)\Big\}=0.$ %\hfill\refstepcounter{equation}(\theequation)\label{1st-canonical}$
  \item[(b)] $\underset{Y,\,Z,\,V}{\mathfrak{S}}\;\Big\{\widetilde{R}(Y,\,Z)V\Big\}=\underset{Y,\,Z,\,V}{\mathfrak{S}}\;\Big\{T\big(T(Y,\,Z),\,V\big) -(\widetilde{\nabla}_VT)(Y,\,Z)\Big\}.$ %\hfill\refstepcounter{equation}(\theequation)\label{1st-dual}$
  \item[(c)] $\underset{Y,\,Z,\,V}{\mathfrak{S}}\;\Big\{\widehat{R}(Y,\,Z)V\Big\}=0.$%\hfill\refstepcounter{equation}(\theequation)\label{1st-sym}$
  \item[(d)] $\underset{Y,\,Z,\,V}{\mathfrak{S}}\;\Big\{\;\overcirc{R}(Y,\,Z)V\Big\}=0.$%\hfill\refstepcounter{equation}(\theequation)\label{1st-riemannian}$
\end{description}
\end{proposition}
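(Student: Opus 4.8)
The plan is to obtain all four identities as immediate specializations of the general first Bianchi identity
$$\underset{Y,\,Z,\,V}{\mathfrak{S}}\Big\{{\bf{R}}(Y,\,Z)V\Big\}=\underset{Y,\,Z,\,V}{\mathfrak{S}}\Big\{(D_V{\bf{T}})(Y,\,Z)+{\bf{T}}\big({\bf{T}}(Y,\,Z),\,V\big)\Big\}$$
recalled just above the statement, applied in turn to the four connections $\nabla,\,\widetilde{\nabla},\,\widehat{\nabla},\,\oversetc{\nabla}$, feeding in for each one the torsion and curvature data recorded in Table 1. For part \textbf{(a)} I would take $D=\nabla$, so that $\mathbf{T}=T$ and, by Theorem \ref{flat}, $\mathbf{R}=0$. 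The cyclic sum on the left-hand side of the general identity then vanishes identically, and what survives on the right is exactly the expression in \textbf{(a)}.

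For parts \textbf{(c)} and \textbf{(d)} the governing observation is that both $\widehat{\nabla}$ and $\oversetc{\nabla}$ are torsionless (Remark \ref{rem1}\textbf{(c)}, Table 1). Taking $D=\widehat{\nabla}$ (respectively $D=\oversetc{\nabla}$), every occurrence of $\mathbf{T}$ on the right-hand side of the general identity is zero, so the entire right-hand side drops out, leaving $\underset{Y,\,Z,\,V}{\mathfrak{S}}\{\widehat{R}(Y,\,Z)V\}=0$ (respectively $\underset{Y,\,Z,\,V}{\mathfrak{S}}\{\overcirc{R}(Y,\,Z)V\}=0$). These are simply the classical first Bianchi identity specialized to a symmetric connection, and no further computation is needed.

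The one part requiring a little attention is \textbf{(b)}. Here I take $D=\widetilde{\nabla}$, whose torsion is $-T$ (Remark \ref{rem1}\textbf{(c)}, Table 1) and whose curvature is $\widetilde{R}$. The linear term of the general identity becomes $\big(\widetilde{\nabla}_V(-T)\big)(Y,\,Z)=-(\widetilde{\nabla}_VT)(Y,\,Z)$, while the quadratic term, using the bilinearity of $T$, becomes $(-T)\big((-T)(Y,\,Z),\,V\big)=-T\big(-T(Y,\,Z),\,V\big)=T\big(T(Y,\,Z),\,V\big)$, the two sign reversals cancelling. Substituting both into the general identity yields precisely \textbf{(b)}. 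This sign bookkeeping in the quadratic torsion term is essentially the only place where a slip could occur; once it is handled, the whole proposition is a routine table lookup built on the flatness of $\nabla$ (Theorem \ref{flat}) and the torsion values of the four connections.
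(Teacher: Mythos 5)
Your proposal is correct and is exactly the paper's intended argument: the authors state that the proof is straightforward using the relations $R=0$, $\widetilde{T}=-T$ and $\widehat{T}=\oversetc{T}=0$, which is precisely the table-lookup substitution into the general first Bianchi identity that you carry out. Your sign bookkeeping in part \textbf{(b)} \big(the cancellation in $(-T)\big((-T)(Y,\,Z),\,V\big)=T\big(T(Y,\,Z),\,V\big)$\big) is also correct.
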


The proof is straightforward. We have to use the relations $R=0,\,\widetilde{T}=-T$ and $\widehat{T}=\oversetc{T}=0$.

\begin{corollary}\label{corrbian}
The following identities hold:
\begin{description}
  \item[(a)] $\underset{Y,\,Z,\,V}{\mathfrak{S}}\;\Big\{(\widetilde{\nabla}_VT)(Y,\,Z)\Big\} =2\;\underset{Y,\,Z,\,V}{\mathfrak{S}}\Big\{T\big(T(Y,\,Z),\,V\big)\Big\}.$%\hfill\refstepcounter{equation}(\theequation)\label{1st1st}$
  \item[(b)] $\underset{Y,\,Z,\,V}{\mathfrak{S}}\;\Big\{(\widehat{\nabla}_VT)(Y,\,Z)\Big\} =\frac{1}{2}\;\underset{Y,\,Z,\,V}{\mathfrak{S}}\Big\{T\big(T(Y,\,Z),\,V\big)\Big\}.$%\hfill\refstepcounter{equation}(\theequation)\label{2nd2nd}$
  \item[(c)] $\underset{Y,\,Z,\,V}{\mathfrak{S}}\;\Big\{\widetilde{R}(Y,\,Z)V\Big\} =-\underset{Y,\,Z,\,V}{\mathfrak{S}}\Big\{T\big(T(Y,\,Z),\,V\big)\Big\}.$%\hfill\refstepcounter{equation}(\theequation)\label{3rd3rd}$
\end{description}
\end{corollary}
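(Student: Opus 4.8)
The plan is to read all three identities off the first Bianchi identities already recorded in Proposition \ref{1st}, combined with the relation $\widetilde{R}(Y,\,Z)V=(\nabla_V T)(Y,\,Z)$ of (\ref{dual}) and the connection-difference relations of Remark \ref{rem1}. No bracket computation is needed; the whole argument is bookkeeping with the cyclic operator $\underset{Y,\,Z,\,V}{\mathfrak{S}}$, and I would prove the parts in the order \textbf{(c)}, \textbf{(a)}, \textbf{(b)}.

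First I would dispose of \textbf{(c)}, which is essentially immediate. By (\ref{dual}) we have $\underset{Y,\,Z,\,V}{\mathfrak{S}}\{\widetilde{R}(Y,\,Z)V\}=\underset{Y,\,Z,\,V}{\mathfrak{S}}\{(\nabla_V T)(Y,\,Z)\}$, and Proposition \ref{1st}\textbf{(a)} rewrites the right-hand side as $-\underset{Y,\,Z,\,V}{\mathfrak{S}}\{T(T(Y,\,Z),\,V)\}$, which is exactly \textbf{(c)}. Next, \textbf{(a)} follows by comparing two expressions for the same cyclic sum. Proposition \ref{1st}\textbf{(b)} gives $\underset{Y,\,Z,\,V}{\mathfrak{S}}\{\widetilde{R}(Y,\,Z)V\}=\underset{Y,\,Z,\,V}{\mathfrak{S}}\{T(T(Y,\,Z),\,V)\}-\underset{Y,\,Z,\,V}{\mathfrak{S}}\{(\widetilde{\nabla}_V T)(Y,\,Z)\}$; substituting the value of the left-hand side just found in \textbf{(c)} and solving for $\underset{Y,\,Z,\,V}{\mathfrak{S}}\{(\widetilde{\nabla}_V T)(Y,\,Z)\}$ produces $2\,\underset{Y,\,Z,\,V}{\mathfrak{S}}\{T(T(Y,\,Z),\,V)\}$, which is \textbf{(a)}.

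For \textbf{(b)} I would exploit that the symmetric connection is the average of the canonical and dual connections, $\widehat{\nabla}=\frac{1}{2}(\nabla+\widetilde{\nabla})$ by Remark \ref{rem1}\textbf{(b)}. Since covariant differentiation of a fixed tensor depends affinely on the connection (the connection enters linearly in each slot of the Leibniz expansion of $(D_V T)(Y,\,Z)$), this average descends to the induced derivations of the torsion: $(\widehat{\nabla}_V T)(Y,\,Z)=\frac{1}{2}\big((\nabla_V T)(Y,\,Z)+(\widetilde{\nabla}_V T)(Y,\,Z)\big)$. Taking the cyclic sum and inserting Proposition \ref{1st}\textbf{(a)} for the canonical term and part \textbf{(a)} for the dual term yields $\frac{1}{2}\big(-\underset{Y,\,Z,\,V}{\mathfrak{S}}\{T(T(Y,\,Z),\,V)\}+2\,\underset{Y,\,Z,\,V}{\mathfrak{S}}\{T(T(Y,\,Z),\,V)\}\big)=\frac{1}{2}\,\underset{Y,\,Z,\,V}{\mathfrak{S}}\{T(T(Y,\,Z),\,V)\}$, i.e.\ \textbf{(b)}.

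The only genuinely nonroutine point is the decomposition $\widehat{\nabla}_V T=\frac{1}{2}(\nabla_V T+\widetilde{\nabla}_V T)$ used in \textbf{(b)}; it rests on the elementary affineness just noted, which I would state as a one-line remark. If one prefers to sidestep it, \textbf{(a)} and \textbf{(b)} can instead be obtained by applying $\underset{Y,\,Z,\,V}{\mathfrak{S}}$ directly to the connection-difference formulas of Corollary \ref{difofconnections}\textbf{(a)},\textbf{(b)}; there the care-point is that the right-hand sides of those formulas are themselves cyclic sums, so the outer operator $\mathfrak{S}$ multiplies them by three, after which one substitutes Proposition \ref{1st}\textbf{(a)} for $\underset{Y,\,Z,\,V}{\mathfrak{S}}\{(\nabla_V T)(Y,\,Z)\}$ and uses the skew-symmetry $T(T(Y,\,Z),\,V)=-T(V,\,T(Y,\,Z))$. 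Either way the difficulty is purely clerical: keeping track of which expressions are already cyclically symmetric so that no stray factor is dropped.
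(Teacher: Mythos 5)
Your proposal is correct: all three identities are established, and every step checks out. It is, however, not quite the paper's route. The paper's (one-line) proof is exactly the fallback you sketch at the end: apply $\underset{Y,\,Z,\,V}{\mathfrak{S}}$ to Corollary \ref{difofconnections}\textbf{(a)},\textbf{(b)}, observe that their right-hand sides are already cyclic sums (hence get tripled) and carry the sign flip $T\big(V,\,T(Y,\,Z)\big)=-T\big(T(Y,\,Z),\,V\big)$, then eliminate $\underset{Y,\,Z,\,V}{\mathfrak{S}}\{(\nabla_VT)(Y,\,Z)\}$ via Proposition \ref{1st}\textbf{(a)}; part \textbf{(c)} comes from (\ref{dual}) and Proposition \ref{1st}\textbf{(a)}, just as you do it. Your primary route differs in two places: you obtain \textbf{(a)} by solving the pair of Bianchi-type identities \big(part \textbf{(c)} together with Proposition \ref{1st}\textbf{(b)}\big) for $\underset{Y,\,Z,\,V}{\mathfrak{S}}\{(\widetilde{\nabla}_VT)(Y,\,Z)\}$, and you obtain \textbf{(b)} from the averaging identity $(\widehat{\nabla}_VT)(Y,\,Z)=\frac{1}{2}\big((\nabla_VT)(Y,\,Z)+(\widetilde{\nabla}_VT)(Y,\,Z)\big)$, which is indeed valid: by Remark \ref{rem1}\textbf{(b)} the connection $\widehat{\nabla}$ is the average of $\nabla$ and $\widetilde{\nabla}$, and the Leibniz expansion of $(D_VT)(Y,\,Z)$ is affine in $D$ because $T$ is $\mathfrak{F}(M)$-multilinear. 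What your route buys is cleaner bookkeeping: it entirely avoids the triple-counting subtlety (the cyclic sum of an already cyclically symmetric expression) and the torsion sign flip, which are precisely the two spots where the paper's route invites clerical error. What it costs is that the averaging identity for induced covariant derivatives is nowhere stated in the paper, so, as you yourself note, it must be recorded as a short auxiliary remark, whereas the paper's argument stays entirely within statements it has already proved.
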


The proof follows from the above proposition together with Corollary \ref{difofconnections} and (\ref{dual}).

\begin{proposition} \label{2nd}
The second Bianchi identity for the connections $\widetilde{\nabla},\,\widehat{\nabla}$ and $\oversetc{\nabla}$ reads:
\begin{description}
  \item[(a)] $\underset{Y,\,Z,\,V}{\mathfrak{S}}\;\Big\{(\widetilde{\nabla}_V\widetilde{R})(Y,\,Z)U\Big\} =\underset{Y,\,Z,\,V}{\mathfrak{S}}\;\Big\{(\nabla_UT)\big(T(Y,\,Z),\,V\big)\Big\}.$ %\hfill\refstepcounter{equation}(\theequation)\label{2nd-dual}$
  \item[(b)] $\underset{Y,\,Z,\,V}{\mathfrak{S}}\;\Big\{(\widehat{\nabla}_V\widehat{R})(Y,\,Z)U\Big\} =0.$%\hfill\refstepcounter{equation}(\theequation)\label{2nd-sym}$
  \item[(c)] $\underset{Y,\,Z,\,V}{\mathfrak{S}}\;\Big\{(\;\overcirc{\nabla}_V\oversetc{R})(Y,\,Z)U\Big\} =0.$%\hfill\refstepcounter{equation}(\theequation)\label{2nd-riemannian}$
\end{description}
\end{proposition}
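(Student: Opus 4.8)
The plan is to obtain all three statements as direct specializations of the general second Bianchi identity displayed above, applied in turn to the connections $\widetilde{\nabla}$, $\widehat{\nabla}$ and $\oversetc{\nabla}$, using only the torsions recorded in Remark \ref{rem1}{\bf(c)} (equivalently, Table 1): $\widetilde{T}=-T$ for the dual connection and $\widehat{T}=\oversetc{T}=0$ for the symmetric and Levi-Civita connections.

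Parts {\bf(b)} and {\bf(c)} I expect to be immediate. Since $\widehat{\nabla}$ and $\oversetc{\nabla}$ are torsionless, the correction term ${\bf R}(V,{\bf T}(Y,Z))U$ in the second Bianchi identity vanishes identically, and the identity collapses to
$$\underset{Y,\,Z,\,V}{\mathfrak{S}}\;\Big\{(\widehat{\nabla}_V\widehat{R})(Y,\,Z)U\Big\}=0,\qquad \underset{Y,\,Z,\,V}{\mathfrak{S}}\;\Big\{(\oversetc{\nabla}_V\oversetc{R})(Y,\,Z)U\Big\}=0,$$
which are exactly {\bf(b)} and {\bf(c)}.

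The substance is in {\bf(a)}, where the torsion $\widetilde{T}=-T$ survives. I would start from the second Bianchi identity for $\widetilde{\nabla}$,
$$\underset{Y,\,Z,\,V}{\mathfrak{S}}\;\Big\{(\widetilde{\nabla}_V\widetilde{R})(Y,\,Z)U-\widetilde{R}\big(V,\,\widetilde{T}(Y,\,Z)\big)U\Big\}=0,$$
move the correction term to the right, and substitute $\widetilde{T}(Y,Z)=-T(Y,Z)$ to obtain $\underset{Y,\,Z,\,V}{\mathfrak{S}}\{(\widetilde{\nabla}_V\widetilde{R})(Y,Z)U\}=-\underset{Y,\,Z,\,V}{\mathfrak{S}}\{\widetilde{R}(V,T(Y,Z))U\}$. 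I would then apply (\ref{dual}) in the form $\widetilde{R}(A,B)C=(\nabla_C T)(A,B)$, with $A=V$, $B=T(Y,Z)$, $C=U$, to rewrite the right-hand side as $-\underset{Y,\,Z,\,V}{\mathfrak{S}}\{(\nabla_U T)(V,T(Y,Z))\}$.

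The last step is to invoke the skew-symmetry of $T$ in its first two arguments, which passes to $\nabla_U T$, so that $(\nabla_U T)(V,T(Y,Z))=-(\nabla_U T)(T(Y,Z),V)$; this second sign reversal cancels the one already present and delivers the stated right-hand side $\underset{Y,\,Z,\,V}{\mathfrak{S}}\{(\nabla_U T)(T(Y,Z),V)\}$. I foresee no analytic obstacle here---the whole proposition being a routine unwinding of the general Bianchi identity---so the only point requiring genuine care is the bookkeeping of the two sign reversals (one from $\widetilde{T}=-T$, one from the skew-symmetry of the torsion), which must combine to reproduce precisely the sign appearing in {\bf(a)}.
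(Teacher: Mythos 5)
Your proposal is correct and is precisely the argument the paper intends: the paper's proof consists of the single remark that the result is ``straightforward making use of (\ref{dual})'', i.e.\ specialize the general second Bianchi identity to each connection using $\widehat{T}=\oversetc{T}=0$ and $\widetilde{T}=-T$, then rewrite $\widetilde{R}\big(V,T(Y,Z)\big)U$ via (\ref{dual}) and the skew-symmetry of $\nabla_U T$ in its two arguments. Your bookkeeping of the two sign reversals is exactly right, so you have simply supplied the details the paper leaves implicit.
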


The proof is straightforward making use of (\ref{dual}).

\vspace*{.26cm}Now, we will give another formula for the curvature tensor $\widehat{R}$ of the symmetric connection $\widehat{\nabla}$ which is more compact than (\ref{sym}).

\begin{theorem}
The curvature tensor $\widehat{R}$ can be written in the form:
\begin{equation*} %\label{sym2}
\widehat{R}(Y,\,Z)V=\frac{1}{2}(\nabla_VT)(Y,\,Z)-\frac{1}{4}\Big(T\big(Y,\,T(Z,\,V)\big)+T\big(Z,\,T(V,\,Y)\big)\Big).
\end{equation*}
\end{theorem}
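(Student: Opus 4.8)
The plan is to derive the compact formula directly from the already-established expression (\ref{sym}) by invoking the first Bianchi identity for the canonical connection, Proposition~\ref{1st}\textbf{(a)}. I would not recompute $\widehat{R}$ from scratch; instead I would subtract the claimed compact expression from the right-hand side of (\ref{sym}) and show that the difference vanishes identically. The only structural facts needed are the skew-symmetry of the torsion, $T(A,B)=-T(B,A)$ (immediate from (\ref{torsion}) or from the definition of $T$), and Proposition~\ref{1st}\textbf{(a)}.

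The key preparatory step is to write out the cyclic sum in Proposition~\ref{1st}\textbf{(a)} explicitly over $(Y,Z,V)$,
$$(\nabla_VT)(Y,Z)+(\nabla_YT)(Z,V)+(\nabla_ZT)(V,Y)+T\big(T(Y,Z),V\big)+T\big(T(Z,V),Y\big)+T\big(T(V,Y),Z\big)=0,$$
and then recast each term via skew-symmetry of $T$ so that its arguments line up with those appearing in (\ref{sym}); for instance $(\nabla_ZT)(V,Y)=-(\nabla_ZT)(Y,V)$, $T\big(T(Z,V),Y\big)=-T\big(Y,T(Z,V)\big)$ and $T\big(T(V,Y),Z\big)=T\big(Z,T(Y,V)\big)$. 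This turns the Bianchi identity into
$$(\nabla_VT)(Y,Z)+(\nabla_YT)(Z,V)-(\nabla_ZT)(Y,V)+T\big(T(Y,Z),V\big)-T\big(Y,T(Z,V)\big)+T\big(Z,T(Y,V)\big)=0.$$

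I would then form the difference of (\ref{sym}) and the target formula. The first-order (covariant-derivative) terms collect into $\tfrac12\big[(\nabla_ZT)(Y,V)-(\nabla_YT)(Z,V)-(\nabla_VT)(Y,Z)\big]$. The quadratic torsion terms require the skew-symmetry step once more: the two $T(Y,T(Z,V))$ contributions merge to $\tfrac12 T\big(Y,T(Z,V)\big)$, while $-\tfrac14 T(Z,T(Y,V))+\tfrac14 T(Z,T(V,Y))=-\tfrac12 T\big(Z,T(Y,V)\big)$ because $T(V,Y)=-T(Y,V)$; together with the $-\tfrac12 T(T(Y,Z),V)$ term they give $\tfrac12\big[T(Y,T(Z,V))-T(Z,T(Y,V))-T(T(Y,Z),V)\big]$. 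Comparing with the reorganized identity above, the whole difference is exactly $-\tfrac12$ times the vanishing left-hand side of the first Bianchi identity, hence $0$, which proves the two expressions for $\widehat{R}(Y,Z)V$ agree.

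The routine but error-prone part — and the main obstacle — is the sign and coefficient bookkeeping: every term carries a factor $\tfrac12$ or $\tfrac14$, and the merging of the quadratic terms repeatedly exploits the skew-symmetry of $T$ in both its outer and its inner slot, so one must track carefully which argument has been flipped. Once the difference is identified as a scalar multiple of the first Bianchi sum, the conclusion follows at once.
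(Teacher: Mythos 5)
Your proposal is correct and is essentially the paper's own proof: both start from (\ref{sym}) and use Proposition \ref{1st}\textbf{(a)} together with the skew-symmetry of $T$, the only difference being organizational (the paper rewrites (\ref{sym}) via cyclic sums $\mathfrak{S}$ and substitutes, while you verify that the difference of the two expressions is $-\tfrac{1}{2}$ times the Bianchi sum, hence zero). Your sign and coefficient bookkeeping checks out, so the argument goes through as written.
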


\begin{proof}
Taking into account (\ref{sym}) and Proposition \ref{1st}{\bf{(a)}}, one has
\begin{eqnarray*}
\widehat{R}(Y,\,Z)V&=&-\frac{1}{2}\;\underset{Y,\,Z,\,V}{\mathfrak{S}}\;\Big\{(\nabla_VT)(Y,\,Z)\Big\}+\frac{1}{2}(\nabla_VT)(Y,\,Z) \\
              & &+\frac{1}{4}\;\underset{Y,\,Z,\,V}{\mathfrak{S}}\;\Big\{T\big(V,\,T(Y,\,Z)\big)\Big\}+\frac{1}{4}T\big(V,\,T(Y,\,Z)\big)\\
              &=&-\frac{1}{4}\;\underset{Y,\,Z,\,V}{\mathfrak{S}}\;\Big\{T\big(V,\,T(Y,\,Z)\big)\Big\}+\frac{1}{4}T\big(V,\,T(Y,\,Z)\big) +\frac{1}{2}(\nabla_VT)(Y,\,Z)\\
              &=&\frac{1}{2}(\nabla_VT)(Y,\,Z)-\frac{1}{4}\Big(T\big(Y,\,T(Z,\,V)\big)+T\big(Z,\,T(V,\,Y)\big)\Big).
\end{eqnarray*}
\vspace*{-1.6cm}\[\qedhere\]
\end{proof}

\begin{corollary}\label{exception1}
On an AP-space $(M,\;\undersym{X}{i})$ the Ricci-like tensor ${\rm\widehat{R}ic}$ with respect to the symmetric connection $\widehat{\nabla}$ can be written as:
\begin{equation*} %\label{sym--ricci}
{\rm\widehat{R}ic}(Y,\,Z)=-\frac{1}{2}(\nabla_ZB)(Y)+\frac{1}{4}B\big(T(Y,\,Z)\big)-\frac{1}{4}T\big(Y,\,T(\;\undersym{X}{i},\,Z),\;\undersym{X}{i}\big).
\end{equation*}
\end{corollary}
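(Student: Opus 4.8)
The plan is to contract the compact expression for $\widehat{R}$ established in the previous theorem, using the contraction recipe for the Ricci-like tensor employed in the proof of Theorem~\ref{exception}, namely ${\rm\widehat{R}ic}(Y,\,Z)=g\big(\widehat{R}(Y,\;\undersym{X}{i})Z,\;\undersym{X}{i}\big)$. Placing $\;\undersym{X}{i}$ in the second slot and $Z$ in the third slot of that formula and then contracting against $\;\undersym{X}{i}$ produces three pieces: a derivative term $\frac{1}{2}\,g\big((\nabla_Z T)(Y,\;\undersym{X}{i}),\;\undersym{X}{i}\big)$ and two quadratic torsion terms, $-\frac{1}{4}\,g\big(T(Y,\,T(\;\undersym{X}{i},\,Z)),\;\undersym{X}{i}\big)$ and $-\frac{1}{4}\,g\big(T(\;\undersym{X}{i},\,T(Z,\,Y)),\;\undersym{X}{i}\big)$.

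The two quadratic terms are essentially immediate in the notation $T(\cdot,\,\cdot,\,\cdot)=g(T(\cdot,\,\cdot),\,\cdot)$. The middle term is already $-\frac{1}{4}\,T\big(Y,\,T(\;\undersym{X}{i},\,Z),\;\undersym{X}{i}\big)$, matching the last term of the claim verbatim. For the third term I would read off $g\big(T(\;\undersym{X}{i},\,W),\;\undersym{X}{i}\big)=B(W)$ from (\ref{basicvector}) with $W=T(Z,\,Y)$, and then use the skew-symmetry $T(Z,\,Y)=-T(Y,\,Z)$ of the torsion to turn $-\frac{1}{4}B\big(T(Z,\,Y)\big)$ into $+\frac{1}{4}B\big(T(Y,\,Z)\big)$.

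The only delicate step is the derivative term. Since $T$ is skew in its first two arguments --- a property preserved under $\nabla$ --- I would first write $g\big((\nabla_Z T)(Y,\;\undersym{X}{i}),\;\undersym{X}{i}\big)=-(\nabla_Z T)(\;\undersym{X}{i},\,Y,\;\undersym{X}{i})$. The key identity is then $(\nabla_Z T)(\;\undersym{X}{i},\,Y,\;\undersym{X}{i})=(\nabla_Z B)(Y)$, which I would prove by expanding $(\nabla_Z B)(Y)=Z\cdot B(Y)-B(\nabla_Z Y)$, writing $B(Y)=g\big(T(\;\undersym{X}{i},\,Y),\;\undersym{X}{i}\big)$, and invoking the metricity of $\nabla$ together with the AP-condition (\ref{AP-cond}) $\nabla\;\undersym{X}{i}=0$: every term carrying a factor $\nabla_Z\;\undersym{X}{i}$ drops out, and the surviving $g\big(T(\;\undersym{X}{i},\,\nabla_Z Y),\;\undersym{X}{i}\big)$ cancels against $B(\nabla_Z Y)$, leaving exactly $(\nabla_Z T)(\;\undersym{X}{i},\,Y,\;\undersym{X}{i})$. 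Thus the derivative term contributes $-\frac{1}{2}(\nabla_Z B)(Y)$.

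Summing the three contributions yields the stated formula. The main obstacle is purely bookkeeping: tracking the repeated index $\;\undersym{X}{i}$ as it migrates between argument slots (exactly the subtlety flagged in the paper after (\ref{basicvector})). Once the identity $(\nabla_Z T)(\;\undersym{X}{i},\,Y,\;\undersym{X}{i})=(\nabla_Z B)(Y)$ is secured, the remaining manipulations are formal.
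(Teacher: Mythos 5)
Your proof is correct and takes essentially the route the paper intends: the corollary is precisely the contraction of the compact formula for $\widehat{R}$ in the preceding theorem via the recipe ${\rm\widehat{R}ic}(Y,\,Z)=g\big(\widehat{R}(Y,\;\undersym{X}{i})Z,\;\undersym{X}{i}\big)$ used in the proof of Theorem \ref{exception}, which is why the paper states it without proof. Your handling of the three terms --- including the key identity $(\nabla_Z T)(\;\undersym{X}{i},\,Y,\;\undersym{X}{i})=(\nabla_Z B)(Y)$, justified by the metricity of $\nabla$ and the AP-condition (\ref{AP-cond}) --- correctly supplies the details the paper leaves implicit.
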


\vspace*{.26cm}

It is to be noted that the expression $\underset{Y,\,Z,\,V}{\mathfrak{S}}\Big\{T\big(T(Y,\,Z),\,V\big)\Big\}$ appears in many of the identities obtained above. We discuss now the case in which this expression vanishes.

Let us write $[\;\undersym{X}{i},\;\undersym{X}{j}]=:C^h_{ij}\;\undersym{X}{h}$. The functions $C^h_{ij}\in\mathfrak{F}(M)$ are global functions on $M$ and will be referred to as the global structure coefficients of the AP-space. They can be written explicitly in the form $C^h_{ij}=\;\undersym{\Omega}{h}([\;\undersym{X}{i},\;\undersym{X}{j}])$. The last expression may be considered as a definition of the global structure coefficients.

\begin{theorem}\label{Golden}
On an AP-space $(M,\;\undersym{X}{i})$ the expression $\underset{Y,\,Z,\,V}{\mathfrak{S}}\Big\{T\big(T(Y,\,Z),\,V\big)\Big\}$ vanishes if and only if, for all $h$, the expression $\underset{i,\,j,\,k}{\mathfrak{S}}\Big\{\;\undersym{X}{k}\cdot C^h_{ij}\Big\}$ vanishes.\\
Consequently, if the global structure coefficients of the AP-space are constant functions on $M$, then $\underset{Y,\,Z,\,V}{\mathfrak{S}}\Big\{T\big(T(Y,\,Z),\,V\big)\Big\}=0$.
\end{theorem}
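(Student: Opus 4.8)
The plan is to exploit the tensorial character of $\underset{Y,\,Z,\,V}{\mathfrak{S}}\{T(T(Y,\,Z),\,V)\}$ together with the Jacobi identity for the Lie bracket, thereby transferring the whole question to the parallelization basis. Since $T$ is a tensor field of type $(1,2)$, the map $(Y,\,Z,\,V)\mapsto T(T(Y,\,Z),\,V)$ is $\mathfrak{F}(M)$-trilinear, and so is its cyclic sum; hence the expression vanishes identically on $\mathfrak{X}(M)$ if and only if it vanishes when $Y,\,Z,\,V$ are replaced by $\;\undersym{X}{i},\,\undersym{X}{j},\,\undersym{X}{k}$ for all $i,\,j,\,k$. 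First I would reduce to the basis and compute the iterated torsion explicitly. Evaluating the torsion formula (\ref{torsion}) on basis fields gives $T(\;\undersym{X}{i},\,\undersym{X}{j})=[\;\undersym{X}{j},\,\undersym{X}{i}]=-C^h_{ij}\,\undersym{X}{h}$, and applying $T$ once more, using bilinearity and the antisymmetry $C^l_{kh}=-C^l_{hk}$, yields $T(T(\;\undersym{X}{i},\,\undersym{X}{j}),\,\undersym{X}{k})=C^h_{ij}C^l_{hk}\,\undersym{X}{l}$.

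Next I would bring in the Jacobi identity $\underset{i,\,j,\,k}{\mathfrak{S}}\{[[\;\undersym{X}{i},\,\undersym{X}{j}],\,\undersym{X}{k}]\}=0$. Expanding a single summand with the Leibniz rule $[fW,\,\undersym{X}{k}]=f[W,\,\undersym{X}{k}]-(\;\undersym{X}{k}\cdot f)W$ applied to $[\;\undersym{X}{i},\,\undersym{X}{j}]=C^h_{ij}\,\undersym{X}{h}$ produces $[[\;\undersym{X}{i},\,\undersym{X}{j}],\,\undersym{X}{k}]=C^h_{ij}C^l_{hk}\,\undersym{X}{l}-(\;\undersym{X}{k}\cdot C^h_{ij})\,\undersym{X}{h}$. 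The cyclic sum of its first term is precisely $\underset{i,\,j,\,k}{\mathfrak{S}}\{T(T(\;\undersym{X}{i},\,\undersym{X}{j}),\,\undersym{X}{k})\}$ by the previous step, so the Jacobi identity rearranges into
\begin{equation*}
\underset{i,\,j,\,k}{\mathfrak{S}}\Big\{T(T(\;\undersym{X}{i},\,\undersym{X}{j}),\,\undersym{X}{k})\Big\}=\sum_{h}\Big(\underset{i,\,j,\,k}{\mathfrak{S}}\{\;\undersym{X}{k}\cdot C^h_{ij}\}\Big)\undersym{X}{h}.
\end{equation*}

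Finally, since the fields $\;\undersym{X}{h}$ are linearly independent at every point of $M$, the left-hand side vanishes exactly when each coefficient $\underset{i,\,j,\,k}{\mathfrak{S}}\{\;\undersym{X}{k}\cdot C^h_{ij}\}$ vanishes for every $h$; combined with the tensorial reduction of the first step this is the asserted equivalence. The stated consequence is then immediate: if the global structure coefficients $C^h_{ij}$ are constant functions on $M$, then $\;\undersym{X}{k}\cdot C^h_{ij}=0$, the cyclic sum vanishes, and therefore $\underset{Y,\,Z,\,V}{\mathfrak{S}}\{T(T(Y,\,Z),\,V)\}=0$. I expect the only delicate point to be the sign bookkeeping forced by the antisymmetry of the $C^h_{ij}$ and, conceptually, the recognition that it is precisely the \emph{non-tensorial} Leibniz term $(\;\undersym{X}{k}\cdot C^h_{ij})\,\undersym{X}{h}$ that carries the derivatives of the structure coefficients; everything else is determined once the Jacobi identity is written out on the parallelization basis.
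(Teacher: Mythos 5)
Your proof is correct and is essentially the paper's own argument: both reduce the question to the parallelization basis (your explicit tensoriality remark is merely left implicit in the paper), both convert the iterated torsion into bracket/structure-coefficient data via (\ref{torsion}), both invoke the Jacobi identity to isolate the term carrying the derivatives of the $C^h_{ij}$, and both conclude by the pointwise linear independence of the $\;\undersym{X}{h}$. The only presentational difference is where the derivative term comes from: you extract $(\;\undersym{X}{k}\cdot C^h_{ij})\;\undersym{X}{h}$ from the Leibniz rule for the Lie bracket when expanding $[[\;\undersym{X}{i},\;\undersym{X}{j}],\;\undersym{X}{k}]$, whereas the paper extracts it as $\nabla_{\;\undersym{X}{k}}[\;\undersym{X}{i},\;\undersym{X}{j}]$ using the AP-condition (\ref{AP-cond}) and formula (\ref{canonical}) --- the same quantity, since $\nabla_{\;\undersym{X}{k}}\big(C^h_{ij}\;\undersym{X}{h}\big)=(\;\undersym{X}{k}\cdot C^h_{ij})\;\undersym{X}{h}$.
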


\begin{proof}
Using the parallelization vector fields instead of $Y,\,Z$ and $V$, we have:
\begin{eqnarray*}
\underset{i,\,j,\,k}{\mathfrak{S}}\Big\{T\big(T(\;\undersym{X}{i},\;\undersym{X}{j}),\;\undersym{X}{k}\big)\Big\}=0
&\Longleftrightarrow&-\underset{i,\,j,\,k}{\mathfrak{S}}\Big\{T([\;\undersym{X}{i},\;\undersym{X}{j}],\;\undersym{X}{k})\Big\}=0,\;\text{by (\ref{torsion})}\\
&\Longleftrightarrow&\underset{i,\,j,\,k}{\mathfrak{S}}\Big\{\nabla_{\;\undersym{X}{k}}[\;\undersym{X}{i},\;\undersym{X}{j}] +\big[[\;\undersym{X}{i},\;\undersym{X}{j}],\;\undersym{X}{k}\big]\Big\}=0,\;\text{by (\ref{AP-cond})}\\
&\Longleftrightarrow&\underset{i,\,j,\,k}{\mathfrak{S}}\Big\{\nabla_{\;\undersym{X}{k}}[\;\undersym{X}{i},\;\undersym{X}{j}]\Big\}=0,\;\text{by Jacobi identity}\\
&\Longleftrightarrow&\underset{i,\,j,\,k}{\mathfrak{S}}\Big\{\big(\;\undersym{X}{k}\cdot \;\undersym{\Omega}{h}([\;\undersym{X}{i},\;\undersym{X}{j}])\big)\;\undersym{X}{h}\Big\}=0,\;\text{by (\ref{canonical})}\\
&\Longleftrightarrow&\underset{i,\,j,\,k}{\mathfrak{S}}\Big\{\big(\;\undersym{X}{k}\cdot \;\undersym{\Omega}{h}([\;\undersym{X}{i},\;\undersym{X}{j}])\big)\Big\}\;\undersym{X}{h}=0\\
&\Longleftrightarrow&\underset{i,\,j,\,k}{\mathfrak{S}}\Big\{\big(\;\undersym{X}{k}\cdot \;\undersym{\Omega}{h}([\;\undersym{X}{i},\;\undersym{X}{j}])\big)\Big\}=0\;\,\forall h,\;\text{by the independence of $\;\undersym{X}{i}$}\\
&\Longleftrightarrow&\underset{i,\,j,\,k}{\mathfrak{S}}\Big\{\;\undersym{X}{k}\cdot C^h_{ij}\Big\}=0\;\,\forall h,\;\text{by (\ref{base})}.
\end{eqnarray*}
\vspace*{-1.6cm}\[\qedhere\]
\end{proof}

\vspace*{.1cm} It should be noted that for the natural basis
$\{\frac{\partial}{\partial x^{\alpha}}:\alpha=1,\,...,\,n\}$, the bracket
$[\frac{\partial}{\partial x^{\alpha}},\,\frac{\partial}{\partial x^{\beta}}]=0$ and so the
structure coefficients associated with $(\frac{\partial}{\partial x^{\alpha}})$
vanish. For this reason the \textit{local} expression (in the
natural basis) of the identity
$\underset{Y,\,Z,\,V}{\mathfrak{S}}\Big\{T\big(T(Y,\,Z),\,V\big)\Big\}=0$ is
valid as is established in \cite{local}.

\vspace*{.1cm}The last proposition gives rise to the following interesting formulae.

\begin{corollary}\label{golden}
In an AP-space $(M,\;\undersym{X}{i})$, if the global structure coefficient of the AP-space are constant functions on $M$, then the next formulae hold:
\begin{description}
  \item[(a)] $(\nabla_VT)(Y,\,Z)=(\widetilde{\nabla}_VT)(Y,\,Z)=(\widehat{\nabla}_VT)(Y,\,Z)$.
  \item[(b)] $\underset{Y,\,Z,\,V}{\mathfrak{S}}\Big\{(\nabla_VT)(Y,\,Z)\Big\}=0$.
  \item[(c)] $\underset{Y,\,Z,\,V}{\mathfrak{S}}\Big\{(\widetilde{\nabla}_VT)(Y,\,Z)\Big\}=0$.
  \item[(d)] $\underset{Y,\,Z,\,V}{\mathfrak{S}}\Big\{(\widehat{\nabla}_VT)(Y,\,Z)\Big\}=0$.
  \item[(e)] $\underset{Y,\,Z,\,V}{\mathfrak{S}}\Big\{\widetilde{R}(Y,\,Z)V\Big\}=0$.
  \item[(f)] $\widehat{R}(Y,\,Z)V=\frac{1}{2}(\nabla_VT)(Y,\,Z)-\frac{1}{4}T\big(T(Y,\,Z),\,V\big)$.
  \item[(h)] $\underset{Y,\,Z,\,V}{\mathfrak{S}}\Big\{(\nabla_VC)(Y,\,Z)\Big\}=\underset{Y,\,Z,\,V}{\mathfrak{S}}\Big\{(\nabla_VC)(Z,\,Y)\Big\}$.
\end{description}
\end{corollary}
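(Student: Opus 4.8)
The plan is to reduce the entire corollary to the single fact supplied by Theorem \ref{Golden}: under the constancy hypothesis, the cyclic sum $\underset{Y,\,Z,\,V}{\mathfrak{S}}\{T(T(Y,\,Z),\,V)\}$ vanishes. First I would record the companion observation that, by the skew-symmetry of $T$ in its two arguments, $T(V,\,T(Y,\,Z)) = -T(T(Y,\,Z),\,V)$, so the cyclic sum $\underset{Y,\,Z,\,V}{\mathfrak{S}}\{T(V,\,T(Y,\,Z))\}$ likewise vanishes. These two vanishing cyclic sums are the only new inputs needed beyond identities already established.

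For \textbf{(a)} I would feed the vanishing of $\underset{Y,\,Z,\,V}{\mathfrak{S}}\{T(V,\,T(Y,\,Z))\}$ into Corollary \ref{difofconnections}\textbf{(a)} and \textbf{(b)}, whose right-hand sides are precisely this cyclic sum (up to the factor $\frac{1}{2}$); both differences then collapse, forcing $\nabla_VT$, $\widetilde{\nabla}_VT$ and $\widehat{\nabla}_VT$ to coincide. Part \textbf{(b)} drops out of the first Bianchi identity Proposition \ref{1st}\textbf{(a)}, which splits as $\underset{Y,\,Z,\,V}{\mathfrak{S}}\{(\nabla_VT)(Y,\,Z)\} + \underset{Y,\,Z,\,V}{\mathfrak{S}}\{T(T(Y,\,Z),\,V)\} = 0$, the second summand being zero. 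Parts \textbf{(c)} and \textbf{(d)} are then immediate: by \textbf{(a)} the cyclic sums of $\widetilde{\nabla}_VT$ and $\widehat{\nabla}_VT$ equal that of $\nabla_VT$, which is zero by \textbf{(b)}. For \textbf{(e)} I would substitute $\widetilde{R}(Y,\,Z)V = (\nabla_VT)(Y,\,Z)$ from (\ref{dual}) and invoke \textbf{(b)} (equivalently, read off Corollary \ref{corrbian}\textbf{(c)}).

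Part \textbf{(f)} is the one step requiring genuine manipulation rather than substitution, and I expect it to be the main obstacle. Starting from the compact form $\widehat{R}(Y,\,Z)V = \frac{1}{2}(\nabla_VT)(Y,\,Z) - \frac{1}{4}\big(T(Y,\,T(Z,\,V)) + T(Z,\,T(V,\,Y))\big)$, I must show the quadratic piece equals $\frac{1}{4}T(T(Y,\,Z),\,V)$. The device is to expand the vanishing cyclic sum $T(T(Y,\,Z),\,V) + T(T(Z,\,V),\,Y) + T(T(V,\,Y),\,Z) = 0$ and then apply skew-symmetry of $T$ to rewrite its last two terms as $-T(Y,\,T(Z,\,V))$ and $-T(Z,\,T(V,\,Y))$; this yields exactly $T(Y,\,T(Z,\,V)) + T(Z,\,T(V,\,Y)) = T(T(Y,\,Z),\,V)$, and substituting closes \textbf{(f)}.

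Finally, for \textbf{(h)} I would covariantly differentiate the relation $T(Y,\,Z) = C(Y,\,Z) - C(Z,\,Y)$ of Proposition \ref{rtc}\textbf{(a)} by $\nabla_V$, take the cyclic sum, and apply \textbf{(b)} to annihilate $\underset{Y,\,Z,\,V}{\mathfrak{S}}\{(\nabla_VT)(Y,\,Z)\}$, leaving the asserted equality between the two cyclic sums of $\nabla_VC$. Throughout, the only care required is keeping the argument slots and signs of $T$ straight when passing between $T(V,\,T(\cdot,\,\cdot))$ and $T(T(\cdot,\,\cdot),\,V)$, which is exactly where the skew-symmetry of $T$ does all the work.
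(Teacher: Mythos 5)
Your proposal is correct and follows exactly the route the paper intends: the paper leaves this corollary as an unproved consequence of Theorem \ref{Golden}, and you reduce every part to the vanishing of $\underset{Y,\,Z,\,V}{\mathfrak{S}}\big\{T\big(T(Y,\,Z),\,V\big)\big\}$ (and, via skew-symmetry of $T$, of $\underset{Y,\,Z,\,V}{\mathfrak{S}}\big\{T\big(V,\,T(Y,\,Z)\big)\big\}$), substituted into the previously established identities (Corollary \ref{difofconnections}, Proposition \ref{1st}, equation (\ref{dual}), the compact form of $\widehat{R}$, and Proposition \ref{rtc}). In particular, your handling of part \textbf{(f)} — expanding the vanishing cyclic sum and flipping two terms by skew-symmetry to get $T\big(Y,\,T(Z,\,V)\big)+T\big(Z,\,T(V,\,Y)\big)=T\big(T(Y,\,Z),\,V\big)$ — and of part \textbf{(h)} by differentiating $T(Y,\,Z)=C(Y,\,Z)-C(Z,\,Y)$ and invoking \textbf{(b)}, are precisely the intended arguments.
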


%%%%%%%%%%%%%%%%%%%%%%%%%%%%%%%%%%%%%%%%%%%%%%%%%%%%%%%%%%%%%%% Section 4: Wanas Tensor %%%%%%%%%%%%%%%%%%%%%%%%%%%%%%%%%%%%%%%%%%%%%%%%%%%%%%%%%%%%%%%

\section{Wanas Tensor}

\hspace*{.4cm} The Wanas tensor was first defined locally by M. I. Wanas in 1975
\cite{unificationT}. It has been used by F. Mikhail and M. Wanas
\cite{unificationP} to construct a pure geometric theory unifying
gravity and electromagnetism. In this section, we introduce the
global definition of the Wanas tensor and investigate it.

\begin{definition}
Let $(M,\;\undersym{X}{i})$ be an AP-space. The tensor field $W$ of type (1,\,3) on $M$ defined by the formula
\begin{equation*} %\label{wanas0}
W(Y,\,Z)\;\undersym{X}{i}=\widetilde{\nabla}^2_{\;Y,\,Z}\;\undersym{X}{i}-\widetilde{\nabla}^2_{\;Z,\,Y}\;\undersym{X}{i},
\end{equation*}
where $\,\widetilde{\nabla}^2_{\;Y,\,Z}=\widetilde{\nabla}_Y\widetilde{\nabla}_Z-\widetilde{\nabla}_{\widetilde{\nabla}_YZ}$, is called the Wanas tensor, or simply the W-tensor, of $(M,\;\undersym{X}{i})$.
\end{definition}

Using (\ref{base}), for every $Y,\,Z,\,V\in\mathfrak{X}(M)$, we get
\begin{equation}\label{wanas1}
W(Y,\,Z)V=\big(\widetilde{\nabla}^2_{\;Y,\,Z}\;\undersym{X}{i}-\widetilde{\nabla}^2_{\;Z,\,Y}\;\undersym{X}{i}\big)\;\undersym{\Omega}{i}(V).
\end{equation}

The next result gives a quite simple expression for a such tensor.

\begin{theorem}
The W-tensor satisfies the following identity
\begin{equation} \label{impw}
W(Y,\,Z)V=\widetilde{R}(Y,\,Z)V-T\big(T(Y,\,Z),\,V\big).
\end{equation}
\end{theorem}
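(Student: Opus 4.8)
The plan is to start from the contracted expression (\ref{wanas1}) and reduce the antisymmetrised second covariant derivative of each $\;\undersym{X}{i}$ to the curvature of $\widetilde{\nabla}$ plus a torsion correction. First I would expand, for a fixed index $i$,
$$\widetilde{\nabla}^2_{\;Y,\,Z}\;\undersym{X}{i}-\widetilde{\nabla}^2_{\;Z,\,Y}\;\undersym{X}{i}=\widetilde{\nabla}_Y\widetilde{\nabla}_Z\;\undersym{X}{i}-\widetilde{\nabla}_Z\widetilde{\nabla}_Y\;\undersym{X}{i}-\widetilde{\nabla}_{\widetilde{\nabla}_YZ-\widetilde{\nabla}_ZY}\;\undersym{X}{i}.$$
The first two terms assemble into $\widetilde{R}(Y,\,Z)\;\undersym{X}{i}+\widetilde{\nabla}_{[Y,\,Z]}\;\undersym{X}{i}$ by the very definition of the curvature of $\widetilde{\nabla}$. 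For the last term I would use that the torsion of $\widetilde{\nabla}$ is $-T$ (Remark \ref{rem1}{\bf{(c)}}, or directly from (\ref{dualcon})), which gives $\widetilde{\nabla}_YZ-\widetilde{\nabla}_ZY=[Y,\,Z]-T(Y,\,Z)$; hence that term equals $-\widetilde{\nabla}_{[Y,\,Z]}\;\undersym{X}{i}+\widetilde{\nabla}_{T(Y,\,Z)}\;\undersym{X}{i}$. The bracket contributions cancel, leaving
$$\widetilde{\nabla}^2_{\;Y,\,Z}\;\undersym{X}{i}-\widetilde{\nabla}^2_{\;Z,\,Y}\;\undersym{X}{i}=\widetilde{R}(Y,\,Z)\;\undersym{X}{i}+\widetilde{\nabla}_{T(Y,\,Z)}\;\undersym{X}{i}.$$

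Next I would substitute this into (\ref{wanas1}) and contract with $\;\undersym{\Omega}{i}(V)$. The curvature piece is immediate: since $\widetilde{R}(Y,\,Z)$ is $\mathfrak{F}(M)$-linear in its last argument, one has $\;\undersym{\Omega}{i}(V)\,\widetilde{R}(Y,\,Z)\;\undersym{X}{i}=\widetilde{R}(Y,\,Z)\big(\;\undersym{\Omega}{i}(V)\;\undersym{X}{i}\big)=\widetilde{R}(Y,\,Z)V$ by (\ref{base}). This accounts for the first term on the right-hand side of (\ref{impw}).

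The delicate step — the only place the computation is not purely formal — is the remaining piece $\;\undersym{\Omega}{i}(V)\,\widetilde{\nabla}_{T(Y,\,Z)}\;\undersym{X}{i}$: one cannot simply pull the function $\;\undersym{\Omega}{i}(V)$ through $\widetilde{\nabla}$, because $\widetilde{\nabla}$ is a derivation and produces a spurious directional-derivative term. Writing $W:=T(Y,\,Z)$ and applying the Leibniz rule to $\widetilde{\nabla}_WV=\widetilde{\nabla}_W\big(\;\undersym{\Omega}{i}(V)\;\undersym{X}{i}\big)$ gives $\;\undersym{\Omega}{i}(V)\,\widetilde{\nabla}_W\;\undersym{X}{i}=\widetilde{\nabla}_WV-\big(W\cdot\;\undersym{\Omega}{i}(V)\big)\;\undersym{X}{i}$. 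The extra term is exactly $\nabla_WV$ by the defining formula (\ref{canonical}) of the canonical connection, so $\;\undersym{\Omega}{i}(V)\,\widetilde{\nabla}_W\;\undersym{X}{i}=\widetilde{\nabla}_WV-\nabla_WV$, which equals $-T(W,\,V)$ by Remark \ref{rem1}{\bf{(a)}} (equivalently Proposition \ref{different-connections}{\bf{(a)}}). Hence this piece is $-T\big(T(Y,\,Z),\,V\big)$, and adding the two contributions yields precisely (\ref{impw}). I expect this Leibniz-rule bookkeeping, together with keeping the sign of the dual torsion straight, to be the only genuine obstacle; the rest is a routine unwinding of the definitions of $\widetilde{\nabla}$ and $\widetilde{\nabla}^2$.
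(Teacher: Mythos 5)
Your proof is correct and follows essentially the same route as the paper: both invoke the commutation formula for $\widetilde{\nabla}$ applied to $\;\undersym{X}{i}$ (you derive it explicitly, the paper cites it), contract with $\;\undersym{\Omega}{i}(V)$, handle the term $\;\undersym{\Omega}{i}(V)\widetilde{\nabla}_{T(Y,Z)}\;\undersym{X}{i}$ via the Leibniz rule together with the defining formula (\ref{canonical}) of the canonical connection, and finish with $\widetilde{\nabla}-\nabla=-T$ from Proposition \ref{different-connections}{\bf{(a)}}. The only cosmetic issue is your reuse of the letter $W$ for $T(Y,Z)$, which clashes with the Wanas tensor's symbol; otherwise nothing to add.
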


\begin{proof}
Consider the commutation formula for the parallelization vector field $\;\undersym{X}{i}$ with respect to $\widetilde{\nabla}$:
$$\widetilde{\nabla}^2_{\;Y,\,Z}\;\undersym{X}{i}-\widetilde{\nabla}^2_{\;Z,\,Y}\;\undersym{X}{i}=\widetilde{R}(Y,\,Z)\;\undersym{X}{i} -\widetilde{\nabla}_{\widetilde{T}(Y,\,Z)}\;\undersym{X}{i}$$
Consequently,
\begin{eqnarray*}
W(Y,\,Z)V&\overset{(\ref{wanas1})}{=}&\;\undersym{\Omega}{i}(V)\widetilde{R}(Y,\,Z)\;\undersym{X}{i} -\;\undersym{\Omega}{i}(V)\widetilde{\nabla}_{\widetilde{T}(Y,\,Z)}\;\undersym{X}{i} \\
         &=&\widetilde{R}(Y,\,Z)V+\;\undersym{\Omega}{i}(V)\widetilde{\nabla}_{T(Y,\,Z)}\;\undersym{X}{i},\;\text{by (\ref{base})} \\
         &=&\widetilde{R}(Y,\,Z)V+\widetilde{\nabla}_{T(Y,\,Z)}\;\undersym{\Omega}{i}(V)\;\undersym{X}{i} -\big(T(Y,\,Z)\cdot\;\undersym{\Omega}{i}(V)\big)\;\undersym{X}{i} \\
         &=&\widetilde{R}(Y,\,Z)V+\widetilde{\nabla}_{T(Y,\,Z)}V-\nabla_{T(Y,\,Z)}V,\;\text{by (\ref{base}) and (\ref{canonical})} \\
         &=&\widetilde{R}(Y,\,Z)V-T\big(T(Y,\,Z),\,V\big),\;\text{by Proposition \ref{different-connections}}.
\end{eqnarray*}
\vspace*{-1.6cm}\[\qedhere\]
\end{proof}

\begin{corollary}
The W-tensor can be expressed in the form:
\begin{equation}\label{wanas2}
W(Y,\,Z)V=(\nabla_VT)(Y,\,Z)-T\big(T(Y,\,Z),\,V\big).
\end{equation}
\end{corollary}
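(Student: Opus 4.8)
The final statement to prove is the Corollary giving
\[
W(Y,\,Z)V=(\nabla_VT)(Y,\,Z)-T\big(T(Y,\,Z),\,V\big).
\]
The plan is to derive this directly from the Theorem that immediately precedes it, namely the identity (\ref{impw}), together with the formula (\ref{dual}) for the curvature tensor $\widetilde{R}$ of the dual connection.

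First I would recall from (\ref{impw}) that the W-tensor has already been expressed as
\[
W(Y,\,Z)V=\widetilde{R}(Y,\,Z)V-T\big(T(Y,\,Z),\,V\big).
\]
The only gap between this and the desired form is the first term $\widetilde{R}(Y,\,Z)V$, which must be rewritten in terms of the torsion tensor of the canonical connection. But this is exactly the content of part \textbf{(a)} of the Theorem in Section~3, equation (\ref{dual}), which states
\[
\widetilde{R}(Y,\,Z)V=(\nabla_VT)(Y,\,Z).
\]
Substituting this into (\ref{impw}) yields the claim in a single step.

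Thus the entire proof reduces to one substitution, and there is essentially no obstacle to overcome: the heavy lifting was done in proving (\ref{impw}) (the commutation formula for $\;\undersym{X}{i}$ under $\widetilde{\nabla}$ and the reduction of the extra term via Proposition \ref{different-connections}) and in proving (\ref{dual}) (which used the flatness of $\nabla$ from Theorem \ref{flat} and the Jacobi identity). The corollary is purely a matter of assembling these two already-established results. If I wanted to be fully self-contained I could instead absorb (\ref{dual}) back into the computation, but since both feeder results are available earlier in the excerpt, the cleanest presentation is simply to chain them. I would therefore write the proof as a short two-line display: start from (\ref{impw}), invoke (\ref{dual}) to replace $\widetilde{R}(Y,\,Z)V$ by $(\nabla_VT)(Y,\,Z)$, and conclude.
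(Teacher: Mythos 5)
Your proof is correct and is exactly the paper's own argument: the paper likewise obtains (\ref{wanas2}) by substituting (\ref{dual}), i.e.\ $\widetilde{R}(Y,\,Z)V=(\nabla_VT)(Y,\,Z)$, into the identity (\ref{impw}). Nothing further is needed.
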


In fact, this expression follows from (\ref{dual}). This shows that the W-tensor is expressed in terms of the torsion tensor of the AP-space only.

\begin{proposition} \label{w1stt}
The Wanas tensor has the following properties:
\begin{description}
  \item[(a)] $W(Y,\,Z)V$ is skew symmetric in the first two arguments $Y,\,Z$.
  \item[(b)] $\underset{Y,\,Z,\,V}{\mathfrak{S}}\;\Big\{W(Y,\,Z)\,V\Big\} =-2\;\underset{Y,\,Z,\,V}{\mathfrak{S}}\;\Big\{T\big(T(Y,\,Z),\,V\big)\Big\}.$%\hfill\refstepcounter{equation}(\theequation)\label{w1st}$
  \item[(c)] $\underset{Y,\,Z,\,V}{\mathfrak{S}}\;\Big\{W(Y,\,Z)\,V\Big\} =-\;\underset{Y,\,Z,\,V}{\mathfrak{S}}\;\Big\{(\widetilde{\nabla}_VT)(Y,\,Z)\Big\}.$%\hfill\refstepcounter{equation}(\theequation)\label{w1st1st}$
\end{description}
\end{proposition}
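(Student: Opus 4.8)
The plan is to build everything on the closed-form expression \eqref{wanas2}, namely $W(Y,\,Z)V=(\nabla_VT)(Y,\,Z)-T\big(T(Y,\,Z),\,V\big)$, which already displays the W-tensor entirely in terms of the torsion of the space. Once this is taken for granted, each of the three assertions reduces to a short manipulation resting on the skew-symmetry of $T$ in its two arguments and on the first Bianchi identity. So I would not return to the defining second-order operator $\widetilde{\nabla}^2$ at all, but work exclusively from \eqref{wanas2}.

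For \textbf{(a)} I would check that each of the two summands on the right of \eqref{wanas2} changes sign under the interchange $Y\leftrightarrow Z$. The inner torsion is skew, so $T\big(T(Z,\,Y),\,V\big)=-T\big(T(Y,\,Z),\,V\big)$ immediately. For the first summand I would expand $(\nabla_VT)(Y,\,Z)=\nabla_V\big(T(Y,\,Z)\big)-T(\nabla_VY,\,Z)-T(Y,\,\nabla_VZ)$ and observe that swapping $Y$ and $Z$ negates every term, again by the skew-symmetry of $T$. Hence $W(Z,\,Y)V=-W(Y,\,Z)V$.

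For \textbf{(b)} I would apply the cyclic-sum operator $\underset{Y,\,Z,\,V}{\mathfrak{S}}$ to \eqref{wanas2}, obtaining $\underset{Y,\,Z,\,V}{\mathfrak{S}}\{W(Y,\,Z)V\}=\underset{Y,\,Z,\,V}{\mathfrak{S}}\{(\nabla_VT)(Y,\,Z)\}-\underset{Y,\,Z,\,V}{\mathfrak{S}}\{T(T(Y,\,Z),\,V)\}$. The first Bianchi identity for the canonical connection, Proposition \ref{1st}\textbf{(a)}, gives $\underset{Y,\,Z,\,V}{\mathfrak{S}}\{(\nabla_VT)(Y,\,Z)\}=-\underset{Y,\,Z,\,V}{\mathfrak{S}}\{T(T(Y,\,Z),\,V)\}$, and substituting this collapses the right-hand side to $-2\,\underset{Y,\,Z,\,V}{\mathfrak{S}}\{T(T(Y,\,Z),\,V)\}$, which is the claim. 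Finally, for \textbf{(c)} I would start from \textbf{(b)} and trade the cyclic sum of double torsions for a cyclic sum of dual covariant derivatives via Corollary \ref{corrbian}\textbf{(a)}, which reads $\underset{Y,\,Z,\,V}{\mathfrak{S}}\{(\widetilde{\nabla}_VT)(Y,\,Z)\}=2\,\underset{Y,\,Z,\,V}{\mathfrak{S}}\{T(T(Y,\,Z),\,V)\}$; inserting $\underset{Y,\,Z,\,V}{\mathfrak{S}}\{T(T(Y,\,Z),\,V)\}=\frac{1}{2}\underset{Y,\,Z,\,V}{\mathfrak{S}}\{(\widetilde{\nabla}_VT)(Y,\,Z)\}$ into \textbf{(b)} yields exactly $-\underset{Y,\,Z,\,V}{\mathfrak{S}}\{(\widetilde{\nabla}_VT)(Y,\,Z)\}$.

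I do not expect a genuine obstacle here, since the substantive work was already front-loaded into \eqref{wanas2} and into the Bianchi machinery of Section 3, making the proposition essentially a corollary of those. The only point demanding care is bookkeeping: the cyclic operator must be distributed over the whole of \eqref{wanas2}, and the single un-summed torsion term in the Bianchi identity of Proposition \ref{1st}\textbf{(a)} must not be accidentally discarded when rearranging. Keeping the signs and the factor of $2$ straight through the two substitutions is the entire content of the argument.
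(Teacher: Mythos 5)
Your proposal is correct and follows essentially the same route as the paper: the paper proves \textbf{(b)} from (\ref{wanas2}) together with Proposition \ref{1st}\textbf{(a)}, and \textbf{(c)} from \textbf{(b)} together with Corollary \ref{corrbian}\textbf{(a)}, exactly as you do, with \textbf{(a)} dismissed as trivial (it is immediate from the defining commutator, and your term-by-term check via (\ref{wanas2}) is an equally valid verification). Your write-up simply makes explicit the sign and factor-of-two bookkeeping that the paper leaves to the reader.
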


\begin{proof}
Property {\bf{(a)}} is trivial, {\bf{(b)}} follows from Proposition \ref{1st}{\bf{(a)}} and (\ref{wanas2}), {\bf{(c)}} follows from {\bf{(b)}} and Corollary \ref{corrbian}{\bf{(a)}}.
\end{proof}

{The identity satisfied by the W-tensor in Proposition \ref{w1stt}{\bf{(b)}} is the same as the first Bianchi identity \big(Corollary \ref{corrbian}{\bf{(c)}}\big) of the dual curvature tensor up to a constant. The
identity corresponding to the second Bianchi identity is given by:}
\begin{proposition}
The W-tensor satisfies the following identity:
\begin{eqnarray} \label{w2nd}
& & \underset{V,\,Y,\,Z}{\mathfrak{S}}\;\Big\{(\widetilde{\nabla}_VW)(Y,\,Z)U\Big\} \nonumber \\ &=&-\underset{V,\,Y,\,Z}{\mathfrak{S}}\;\Big\{T\Big(T\big(T(Y,\,Z),\,V\big),\,U\Big)+T\Big(T\big(T(Y,\,Z),\,U\big),\,V\Big) +T\big(T(U,\,V),\,T(Y,\,Z)\big)\Big\} \nonumber \\
& &+\underset{V,\,Y,\,Z}{\mathfrak{S}}\;\Big\{(\nabla_UT)\big(T(Y,\,Z),\,V\big)-(\nabla_VT)\big(T(Y,\,Z),\,U\big)\Big\}
\end{eqnarray}
\end{proposition}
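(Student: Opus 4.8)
The plan is to exploit the structural identity (\ref{impw}), which exhibits the W-tensor as the difference $W(Y,\,Z)U=\widetilde{R}(Y,\,Z)U-S(Y,\,Z,\,U)$, where $S$ denotes the $(1,3)$-tensor $(Y,\,Z,\,U)\mapsto T\big(T(Y,\,Z),\,U\big)$. Applying the $\widetilde{\nabla}$-covariant derivative and then the cyclic sum $\underset{V,\,Y,\,Z}{\mathfrak{S}}$, one gets
$$\underset{V,\,Y,\,Z}{\mathfrak{S}}\big\{(\widetilde{\nabla}_VW)(Y,\,Z)U\big\}=\underset{V,\,Y,\,Z}{\mathfrak{S}}\big\{(\widetilde{\nabla}_V\widetilde{R})(Y,\,Z)U\big\}-\underset{V,\,Y,\,Z}{\mathfrak{S}}\big\{(\widetilde{\nabla}_VS)(Y,\,Z,\,U)\big\}.$$
The first sum on the right is handled immediately by the second Bianchi identity for $\widetilde{\nabla}$, namely Proposition \ref{2nd}{\bf{(a)}}, which rewrites it as $\underset{V,\,Y,\,Z}{\mathfrak{S}}\big\{(\nabla_UT)\big(T(Y,\,Z),\,V\big)\big\}$. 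This already reproduces one of the two derivative terms on the right-hand side of (\ref{w2nd}), so the whole task reduces to evaluating $\underset{V,\,Y,\,Z}{\mathfrak{S}}\big\{(\widetilde{\nabla}_VS)(Y,\,Z,\,U)\big\}$.

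For that, first I would convert $\widetilde{\nabla}_VS$ into $\nabla_VS$ plus torsion corrections, using the fact that $\widetilde{\nabla}$ and $\nabla$ differ by the tensor $-T$ (Proposition \ref{different-connections}{\bf{(a)}}); for a $(1,3)$-tensor this yields
$$(\widetilde{\nabla}_VS)(Y,\,Z,\,U)=(\nabla_VS)(Y,\,Z,\,U)-T\big(V,\,S(Y,\,Z,\,U)\big)+S\big(T(V,\,Y),\,Z,\,U\big)+S\big(Y,\,T(V,\,Z),\,U\big)+S\big(Y,\,Z,\,T(V,\,U)\big),$$
while the Leibniz rule for $\nabla$ gives $(\nabla_VS)(Y,\,Z,\,U)=(\nabla_VT)\big(T(Y,\,Z),\,U\big)+T\big((\nabla_VT)(Y,\,Z),\,U\big)$. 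Substituting the definition of $S$ turns every summand into either a single $\nabla$-derivative of $T$ contracted with $T$, or a cubic torsion expression of the form $T\big(T(T(Y,\,Z),\,V),\,U\big)$. At this stage the summand $T\big(V,\,S(Y,\,Z,\,U)\big)$ and one copy of $S\big(T(V,\,Y),\,Z,\,U\big)$ cancel outright, which is a convenient first simplification.

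The concluding step is the combinatorial reduction of the resulting cyclic sum. The summand $(\nabla_VT)\big(T(Y,\,Z),\,U\big)$ directly supplies the second derivative term of (\ref{w2nd}). The summand $T\big((\nabla_VT)(Y,\,Z),\,U\big)$ is dispatched by pulling the cyclic sum through the first slot of $T$ (which is $\mathfrak{F}(M)$-linear) and invoking the first Bianchi identity, Proposition \ref{1st}{\bf{(a)}}, which replaces $\underset{V,\,Y,\,Z}{\mathfrak{S}}\big\{(\nabla_VT)(Y,\,Z)\big\}$ by $-\underset{V,\,Y,\,Z}{\mathfrak{S}}\big\{T\big(T(Y,\,Z),\,V\big)\big\}$, thereby converting it into a cubic torsion term but with the \emph{wrong} sign. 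The remaining pure-torsion summands are then matched to the three cubic terms of (\ref{w2nd}) using only the skew-symmetry of $T$ in its first two arguments, together with the elementary observation that, under $\underset{V,\,Y,\,Z}{\mathfrak{S}}$, the three expressions $T\big(T(V,\,Y),\,Z\big)$, $T\big(Y,\,T(V,\,Z)\big)$ and $T\big(T(Y,\,Z),\,V\big)$ all share the \emph{same} cyclic sum. \emph{The main obstacle is precisely this last bookkeeping}: the argument closes only because the spurious sign produced by the first-Bianchi step is exactly compensated by the factor two that appears when the two inner-torsion summands $S\big(T(V,\,Y),\,Z,\,U\big)$ and $S\big(Y,\,T(V,\,Z),\,U\big)$ collapse onto that common cyclic sum, while the two quadratic-in-$T$ summands $T\big(V,\,S(Y,\,Z,\,U)\big)$ and $S\big(Y,\,Z,\,T(V,\,U)\big)$ supply the other two cubic terms of (\ref{w2nd}) after reordering slots.
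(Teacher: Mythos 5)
Your skeleton---decomposing $W(Y,\,Z)U=\widetilde{R}(Y,\,Z)U-S(Y,\,Z,\,U)$ with $S(Y,\,Z,\,U)=T\big(T(Y,\,Z),\,U\big)$, disposing of the $\widetilde{R}$-part via Proposition \ref{2nd}\textbf{(a)}, and reducing $\underset{V,\,Y,\,Z}{\mathfrak{S}}\big\{(\widetilde{\nabla}_VS)(Y,\,Z,\,U)\big\}$ to torsion terms---is the same as the paper's. Where you genuinely differ is in the reduction: the paper expands $\widetilde{\nabla}_VS$ by the $\widetilde{\nabla}$-Leibniz rule and then invokes Corollary \ref{corrbian}\textbf{(a)} and Corollary \ref{difofconnections}\textbf{(a)}, whereas you first convert to $\nabla$ through the difference tensor of Proposition \ref{different-connections}\textbf{(a)} and only then use the first Bianchi identity, Proposition \ref{1st}\textbf{(a)}. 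That variant is legitimate, and your \emph{concluding} tally does close the identity: under the cyclic sum, the first-Bianchi term contributes $-\underset{V,\,Y,\,Z}{\mathfrak{S}}\big\{T\big(T(T(Y,\,Z),\,V),\,U\big)\big\}$, the two inner-torsion summands contribute $+2\,\underset{V,\,Y,\,Z}{\mathfrak{S}}\big\{T\big(T(T(Y,\,Z),\,V),\,U\big)\big\}$, and the summands $-T\big(V,\,S(Y,\,Z,\,U)\big)$ and $S\big(Y,\,Z,\,T(V,\,U)\big)$ contribute $\underset{V,\,Y,\,Z}{\mathfrak{S}}\big\{T\big(T(T(Y,\,Z),\,U),\,V\big)\big\}$ and $\underset{V,\,Y,\,Z}{\mathfrak{S}}\big\{T\big(T(U,\,V),\,T(Y,\,Z)\big)\big\}$ respectively, which is exactly what (\ref{w2nd}) requires.

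However, the middle of your argument contains a false step that flatly contradicts this tally: the claim that ``$T\big(V,\,S(Y,\,Z,\,U)\big)$ and one copy of $S\big(T(V,\,Y),\,Z,\,U\big)$ cancel outright.'' They do not, even under the cyclic sum. By skew-symmetry of $T$,
\begin{equation*}
-T\big(V,\,S(Y,\,Z,\,U)\big)=T\Big(T\big(T(Y,\,Z),\,U\big),\,V\Big),
\qquad
S\big(T(V,\,Y),\,Z,\,U\big)=T\Big(T\big(T(V,\,Y),\,Z\big),\,U\Big),
\end{equation*}
so $U$ sits in the \emph{inner} torsion slot of the first expression but in the \emph{outer} slot of the second; their cyclic sums, $\underset{V,\,Y,\,Z}{\mathfrak{S}}\big\{T\big(T(T(Y,\,Z),\,U),\,V\big)\big\}$ and $\underset{V,\,Y,\,Z}{\mathfrak{S}}\big\{T\big(T(T(Y,\,Z),\,V),\,U\big)\big\}$, are distinct tensors, and indeed both survive as separate cubic terms in (\ref{w2nd}). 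Had you actually carried out this ``cancellation,'' the inner-torsion contribution would drop from $+2$ to $+1$, so the $-1$ from the first-Bianchi step would be exactly annihilated instead of over-compensated (deleting the cubic term $T\big(T(T(Y,\,Z),\,V),\,U\big)$ altogether), and the term $T\big(T(T(Y,\,Z),\,U),\,V\big)$ would be missing as well; the resulting identity would be false. The repair is simply to delete that sentence: the expansion formulae you state and the bookkeeping in your final paragraph are correct and constitute a valid proof.
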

\begin{proof}
Taking into account (\ref{impw}) together with Proposition \ref{2nd}{\bf{(a)}}, Corollary \ref{corrbian}{\bf{(a)}} and Corollary \ref{difofconnections}{\bf{(a)}}, we get
\begin{eqnarray*}
& &\underset{V,\,Y,\,Z}{\mathfrak{S}}\;\Big\{(\widetilde{\nabla}_VW)(Y,\,Z)U\Big\} \\
&=&\underset{V,\,Y,\,Z}{\mathfrak{S}}\;\Big\{\Big((\widetilde{\nabla}_V\widetilde{R})(Y,\,Z)U -(\widetilde{\nabla}_VT)\big(T(Y,\,Z),\,U\big)-T\big((\widetilde{\nabla}_VT)(Y,\,Z),\,U\big)\Big)\Big\}\\
&=&\underset{V,\,Y,\,Z}{\mathfrak{S}}\;\Big\{(\nabla_UT)\big(T(Y,\,Z),\,V\big)-(\widetilde{\nabla}_VT)\big(T(Y,\,Z),\,U\big) -2T\Big(T\big(T(Y,\,Z),\,V\big),\,U\Big)\Big\} \\
&=&\underset{V,\,Y,\,Z}{\mathfrak{S}}\;\Big\{(\nabla_UT)\big(T(Y,\,Z),\,V\big)-(\nabla_VT)\big(T(Y,\,Z),\,U\big)\Big\} \\
& &-\underset{V,\,Y,\,Z}{\mathfrak{S}}\;\Big\{2T\Big(T\big(T(Y,\,Z),\,V\big),\,U\Big) +\underset{V,\,T(Y,\,Z),\,U}{\mathfrak{S}}\;T\Big(T\big(T(Y,\,Z),\,U\big),\,V\Big)\Big\} \\
&=&\underset{V,\,Y,\,Z}{\mathfrak{S}}\;\Big\{(\nabla_UT)\big(T(Y,\,Z),\,V\big)-(\nabla_VT)\big(T(Y,\,Z),\,U\big)\Big\} \\
& &-\underset{V,\,Y,\,Z}{\mathfrak{S}}\;\Big\{T\Big(T\big(T(Y,\,Z),\,V\big),\,U\Big)+T\Big(T\big(T(Y,\,Z),\,U\big),\,V\Big) +T\big(T(U,\,V),\,T(Y,\,Z)\big)\Big\}.
\end{eqnarray*}
\vspace*{-1.6cm}\[\qedhere\]
\end{proof}

\begin{corollary}
In an AP-space $(M,\;\undersym{X}{i})$, if the global structure coefficient of the AP-space are constant, we have
\begin{description}
  \item[(a)] $\underset{V,\,Y,\,Z}{\mathfrak{S}}\;\Big\{W(Y,\,Z)V\Big\}=0$.
  \item[(b)] $\underset{V,\,Y,\,Z}{\mathfrak{S}}\;\Big\{(\widetilde{\nabla}_VW)(Y,\,Z)U\Big\} =\underset{V,\,Y,\,Z}{\mathfrak{S}}\;\Big\{(\nabla_UT)\big(T(Y,\,Z),\,V\big)- (\nabla_VT)\big(T(Y,\,Z),\,U\big)\Big\}$
\end{description}
\end{corollary}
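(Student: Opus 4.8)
The plan is to obtain both identities almost directly from results already in hand, with the only genuine computation confined to the cubic torsion terms appearing in part \textbf{(b)}.

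For part \textbf{(a)}, I would combine Proposition \ref{w1stt}\textbf{(b)} with Theorem \ref{Golden}. The former gives $\underset{Y,\,Z,\,V}{\mathfrak{S}}\{W(Y,\,Z)V\}=-2\,\underset{Y,\,Z,\,V}{\mathfrak{S}}\{T(T(Y,\,Z),\,V)\}$, while the latter guarantees that $\underset{Y,\,Z,\,V}{\mathfrak{S}}\{T(T(Y,\,Z),\,V)\}$ vanishes as soon as the global structure coefficients are constant (constancy forces $\underset{i,\,j,\,k}{\mathfrak{S}}\{\;\undersym{X}{k}\cdot C^h_{ij}\}=0$ for all $h$). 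Since $\underset{V,\,Y,\,Z}{\mathfrak{S}}$ and $\underset{Y,\,Z,\,V}{\mathfrak{S}}$ denote the same cyclic sum, part \textbf{(a)} follows at once.

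For part \textbf{(b)}, the starting point is identity (\ref{w2nd}). Its right-hand side is the sum of a \emph{cubic} part
\[
-\underset{V,\,Y,\,Z}{\mathfrak{S}}\Big\{T\big(T(T(Y,\,Z),\,V),\,U\big)+T\big(T(T(Y,\,Z),\,U),\,V\big)+T\big(T(U,\,V),\,T(Y,\,Z)\big)\Big\}
\]
and the term $\underset{V,\,Y,\,Z}{\mathfrak{S}}\{(\nabla_UT)(T(Y,\,Z),\,V)-(\nabla_VT)(T(Y,\,Z),\,U)\}$, which is exactly the asserted right-hand side of \textbf{(b)}. Thus the whole task reduces to showing that, under the constant-coefficient hypothesis, the cubic part vanishes. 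To kill it I would use that, by Theorem \ref{Golden}, the constant-coefficient hypothesis makes the skew-symmetric bracket $T(\cdot,\cdot)$ satisfy the Jacobi-type identity $\underset{Y,\,Z,\,V}{\mathfrak{S}}\{T(T(Y,\,Z),\,V)\}=0$. Labelling the three cyclic terms $A$, $B$, $C$ in the order displayed above, I expect two moves: first, $A=\underset{V,\,Y,\,Z}{\mathfrak{S}}\{T(T(T(Y,\,Z),\,V),\,U)\}$ vanishes outright, since pulling the outer factor $T(\cdot,\,U)$ through the cyclic sum (by $\mathfrak{F}(M)$-linearity of $T$ in each slot) leaves precisely the Jacobi expression, so $A=T(0,\,U)=0$; second, applying the Jacobi identity to the outer bracket of each summand $T(T(U,\,V),\,T(Y,\,Z))$ of $C$ — expanding over the triple $U$, $V$, $T(Y,\,Z)$ and using the skew-symmetry of $T$ — rewrites $C$ as $A-B$. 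Hence $A+B+C=2A=0$, and substituting back into (\ref{w2nd}) leaves exactly the right-hand side of \textbf{(b)}.

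The main obstacle is the bookkeeping in this last step: one must apply Jacobi and the antisymmetry of $T$ to each cyclic summand of $C$ and track carefully which resulting terms reassemble into $A$ and which into $B$, so that the cancellation $C=A-B$ emerges cleanly. Everything else is a direct appeal to the quoted results.
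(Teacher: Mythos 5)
Your proposal is correct and is essentially the paper's own route: the paper disposes of this corollary with a one-line appeal to Theorem \ref{Golden} and Corollary \ref{golden}, and your argument --- part \textbf{(a)} from Proposition \ref{w1stt}\textbf{(b)} together with the vanishing of $\underset{Y,\,Z,\,V}{\mathfrak{S}}\big\{T\big(T(Y,\,Z),\,V\big)\big\}$ guaranteed by Theorem \ref{Golden}, and part \textbf{(b)} from (\ref{w2nd}) after annihilating the cubic torsion terms via the Jacobi-type identity that Theorem \ref{Golden} supplies --- is precisely the ``straightforward'' detail the paper leaves out. Your bookkeeping for the cubic part ($A=0$ by $\mathfrak{F}(M)$-linearity of $T$ in its outer slot, $C=A-B$ by applying Jacobi to the triple $U,\,V,\,T(Y,\,Z)$ and using skew-symmetry of $T$, hence $A+B+C=2A=0$) checks out.
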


The proof is straightforward from Theorem \ref{Golden} and Corollary \ref{golden}.

\vspace{8pt} We end this section by the following comments and
remarks on Wanas tensor.
\begin{itemize}
      \item The W-tensor is defined by using the commutation formula with respect to the dual connection $\widetilde{\nabla}$.
      Nothing new arose from the same definition if we use the three other connections ($\nabla,\,\widehat{\nabla}$ and $\oversetc{\nabla}$).
      \item Using the commutation formula for the parallelization form $\;\undersym{\Omega}{i}$ instead of the parallelization
      vector field $\;\undersym{X}{i}$ in the definition of the
      W-tensor:
      \begin{equation*} %\label{wanas3}
      W(Y,\,Z)V=\Big((\widetilde{\nabla}^2_{\;Z,\,Y}\;\undersym{\Omega}{i})(V) -(\widetilde{\nabla}^2_{\;Y,\,Z}\;\undersym{\Omega}{i})(V)
      \Big)\;\undersym{X}{i}
      \end{equation*}
      gives the same formula (\ref{impw}) for the W-tensor and consequently the same properties.
      \item Being defined by using the parallelization vector fields $\;\undersym{X}{i}$, the Wanas tensor is defined
      only in AP-geometry. It has no analogue in other geometries.
      \item Although the W-tensor and the dual curvature tensor have some common properties (for example,
      Proposition \ref{w1stt}{\bf{(b)}}), there are significantly different properties (for example, (\ref{w2nd})).
      In the case of constant global structure coefficients, the W-tensor has some properties common with the Riemannian curvature $\;\overcirc{R}$.
      \item For a physical discussion concerning the W-tensor we refer to \cite{local}.
    \end{itemize}

%%%%%%%%%%%%%%%%%%%%%%%%%%%%%%%%%%%%%%%%%%%%%%%%%%%%% Section 4: Parallelization versus natural basis %%%%%%%%%%%%%%%%%%%%%%%%%%%%%%%%%%%%%%%%%%%%%%%%%%%

\section{Parallelization basis versus natural basis}

\hspace*{.4cm} This section is devoted to a double-view for the fundamental
geometric objects of AP-geometry. On one hand, we consider the local
expressions of these geometric objects in the natural basis
\cite{local} and, on the other hand, we compute their expressions in
the parallelization basis, giving rise to a concise table expressing
this double-view.

\vspace*{.26cm}Let $\big(U,\,(x^{\alpha})\big)$ be a local
coordinate system of $M$. At each point $x\in U$, we have two
distinguished bases of $T_xM$, namely, the natural basis
$\{\partial_{\mu}:=\frac{\partial}{\partial x^{\mu}}:
\mu=1,\,...,\,n\}$ and the parallelization basis
$\{\;\undersym{X}{i}(x):i=1,\,...,\,n\}$. These two bases are
fundamentally different. The parallelization vector fields
$\;\undersym{X}{i}$ are defined globally on the manifold $M$ whereas
the natural basis vector fields $\partial_{\mu}$ are defined only on
the coordinate neighborhood $U$. Consequently, the natural basis
vector fields depend crucially on coordinate systems whereas the
parallelization vector fields do not.

\vspace*{.26cm} Greek (world) indices are related to the natural
basis and Latin (mesh) indices are related to the parallelization
basis. Einstein summation convention will be applied as usual on
Greek indices. It will also be applied on Latin indices whatever
their position is (even if the two repeated indices are upward or
downward).

\vspace*{.26cm}A tensor field $H$ of type $(r,\,s)$ on $M$ is
written in the natural basis in the form:
$$\;H=H^{\alpha_1\,...\,\alpha_r\,}_{{\mu_1\,...\,\mu_s\,}}\partial_{\alpha_1}\otimes...
\otimes\partial_{\alpha_r}\otimes{dx^{\mu_1}}\otimes...\otimes{dx^{\mu_s}},\,\,
\text{on $U$}$$ and in the parallelization basis in the form:
$$\;H=H^{i_1\,...\,i_r\,}_{{j_1\,...\,j_s\,}}\;\underset{i_1}{X}
\otimes...\otimes\;\underset{i_r}{X}\otimes\;\underset{j_1}{\Omega}\otimes...\otimes\;\underset{j_s}{\Omega},
\,\, \text{on}\,\, M,$$ where\,
$H^{\alpha_1\,...\,\alpha_r}_{\mu_1\,...\,\mu_s}\in\mathfrak{F}(U)$
and $H^{i_1\,...\,i_r}_{j_1\,...\,j_s}\in\mathfrak{F}(M)$.

\vspace*{.26cm}A vector field $Y\in\mathfrak{X}(M)$ is written in the natural basis
in the form $Y=Y^{\alpha}\partial_{\alpha}$ and in the
parallelization basis in the form $Y=Y^{i}\;\undersym{X}{i}$. In
particular,
$\;\undersym{X}{i}=\;\undersym{X}{i}^{\alpha}\partial_{\alpha}$\,
and \,
$\partial_{\alpha}=\;\undersym{\Omega}{i}(\partial_{\alpha})\;\undersym{X}{i}=\;\undersym{\Omega}{i}_{\alpha}\;\undersym{X}{i}$.
Hence\, $(\;\undersym{X}{i}^{\alpha})_{1\leq\alpha,i\leq n}$\, is
the matrix of change of bases and\,
$(\;\undersym{\Omega}{i}_{\alpha})_{1\leq\alpha,i\leq n}$\, is the
inverse matrix.

\vspace*{.26cm}We use the following notations (with similar notations with respect to mesh indices):\\
$\Gamma^{\alpha}_{\mu\nu},\,\widetilde{\Gamma}^{\alpha}_{\mu\nu},\,\widehat{\Gamma}^{\alpha}_{\mu\nu},\;\overcirc{\Gamma}^{\alpha}_{\mu\nu}$: the coefficients
of the linear connections $\nabla,\,\widetilde{\nabla},\,\widehat{\nabla},\,\oversetc{\nabla}$ respectively,\\
$\widetilde{|}$: the covariant derivative with respect to the dual connection $\widetilde{\nabla}$,\\
$g_{\mu\nu}$ (resp. $g^{\mu\nu}$): the covariant (resp. contravariant) components of the metric tensor $g$,\\
$\Lambda^{\alpha}_{\mu\nu}$: the components of the torsion tensor $T$,\\
$B_{\alpha}$: the components of the basic form $B$,\\
$\gamma^{\alpha}_{\mu\nu}$: the components of the contortion tensor $C$,\\
$W^{\alpha}_{\sigma\mu\nu}$: the components of the Wanas tensor $W$.

\vspace*{.26cm}Let $D$ be an arbitrary connection on $M$ with
torsion tensor $T$ and curvature tensor $R$. We use the following
conventions:
$D_{\partial_{\mu}}\partial_{\nu}=D^{\alpha}_{\nu\mu}\,\partial_{\alpha}$,\,
$T(\partial_{\mu},\,\partial_{\nu})=T^{\alpha}_{\nu\mu}\,\partial_{\alpha}$,\,
$R(\partial_{\mu},\,\partial_{\nu})\partial_{\sigma}=R^{\alpha}_{\sigma\mu\nu}\partial_{\alpha}$,
with similar conventions with respect to Latin indices.

\vspace*{.26cm} The next table gives a comparison between the most
important geometric objects of AP-geometry expressed in the natural
basis and in the parallelization basis. Geometric objects, equations
or identities having the same form in the two bases are not
included in that table. However, if a
geometric object has the same form in the two bases, that is, if its
expressions in world indices and mesh indices are similar, this
does not mean that the geometric meaning of these two expressions is
the same.

\clearpage

\begin{center}
    \large{Table2: Parallelization basis versus natural basis}
\end{center}

\begin{center}
\begin{tabular}{|c|c|c|}
      \hline
      % after \\: \hline or \cline{col1-col2} \cline{col3-co-l4} ...
   \multirow{3}{*}{Geometric object}& Local form             & Global form                      \\
                                    & In the natural basis    & In the parallelization basis      \\
                                    & (world indices)        & (mesh indices)                  \\ \hline
   &\multirow{3}{*}{$\;\undersym{X}{i}^{\alpha}\;\undersym{\Omega}{j}_{\alpha}=\delta_{ij}, \quad\;\undersym{X}{i}^{\alpha}\;\undersym{\Omega}{i}_{\mu}=\delta^{\alpha}_{\mu}$}&\multirow{3}{*}{$\;\undersym{X}{j}^{k}=\delta_{j}^{k}, \quad\;\undersym{\Omega}{j}_{k}=\delta_{jk}$} \\ Parallelization vector fields, & & \\
   parallelization forms & & \\[8pt] \hline
   \multirow{2}{*}{Metric tensor}&\multirow{2}{*}{$g_{\mu\nu}=\;\undersym{\Omega}{i}_{\mu}\;\undersym{\Omega}{i}_{\nu}$}&\multirow{2}{*}{$g_{jk}=\delta_{jk}$}\\
                                 &                                                                    &                                  \\ \hline
   \multirow{4}{*}{Canonical connection} & \multirow{2}{*}{$\Gamma^{\alpha}_{\nu\mu}=\;\undersym{X}{i}^{\alpha}\;\undersym{\Omega}{i}_{\nu,\mu}$} & \multirow{4}{*}{$\Gamma_{jk}^{h}=0$} \\ & & \\ & \multirow{2}{*}{where $,_{\mu}$ denotes $\partial_{\mu}$} & \\ & & \\ \hline
   \multirow{2}{*}{Dual connection}&\multirow{2}{*}{$\widetilde{\Gamma}^{\alpha}_{\nu\mu}=\Gamma^{\alpha}_{\mu\nu}$}&\multirow{2}{*}{$\widetilde{\Gamma}^h_{jk}=C^h_{kj}$}\\
                                   &                                                             &                                              \\ \hline
   \multirow{2}{*}{Symmetric connection}&\multirow{2}{*}{$\widehat{\Gamma}^{\alpha}_{\nu\mu}=\frac{1}{2}(\Gamma^{\alpha}_{\nu\mu}+\Gamma^{\alpha}_{\mu\nu})$}& \multirow{2}{*}{$\widehat{\Gamma}^h_{jk}= \frac{1}{2}C^h_{kj}$}\\ && \\ \hline
   \multirow{2}{*}{Levi-Civita connection}&\multirow{2}{*}{$\;\overcirc{\,\Gamma}^{\alpha}_{\nu\mu}=\frac{1}{2}g^{\alpha\sigma}(g_{\sigma\nu,\mu}+g_{\mu\sigma,\nu}- g_{\nu\mu,\sigma})$}&\multirow{2}{*}{$\;\overcirc{\,\Gamma}^h_{jk}=\frac{1}{2}\Big(C^h_{kj}+C^j_{hk}+C^k_{hj}\Big)$}\\ && \\ \hline
   \multirow{2}{*}{Torsion tensor}&\multirow{2}{*}{$\Lambda^{\alpha}_{\nu\mu}=\Gamma^{\alpha}_{\nu\mu}-\Gamma^{\alpha}_{\mu\nu}$}&\multirow{2}{*}{$\Lambda_{jk}^h=C^h_{jk}$} \\
                                  &                                                                           &                                       \\ \hline
  \multirow{2}{*}{Contortion tensor}&\multirow{2}{*}{$\gamma^{\alpha}_{\nu\mu}=\Gamma^{\alpha}_{\nu\mu}-\;\overcirc{\,\Gamma}^{\alpha}_{\nu\mu}$}& \multirow{2}{*}{$\gamma^h_{jk}=-\;\overcirc{\,\Gamma}^h_{jk}$}\\
         &                                                                                     &                                                   \\ \hline
   &\multirow{2}{*}{$\Lambda^{\alpha}_{\nu\mu}=\gamma^{\alpha}_{\nu\mu}-\gamma^{\alpha}_{\mu\nu}$}&\multirow{6}{*}{$\Lambda^{h}_{jk}=\gamma^{h}_{jk} -\gamma^{h}_{kj}$} \\ [-.15cm] & & \\
   Torsion in terms &\multirow{2}{*}{$\Lambda_{\sigma\nu\mu}=\gamma_{\sigma\nu\mu}-\gamma_{\sigma\mu\nu}$}& \\ of contortion & & \\[-3pt]
   & \multirow{2}{*}{where $\Lambda_{\mu\nu\sigma}=g_{\epsilon\mu}\Lambda^{\epsilon}_{\nu\sigma}$ and $\gamma_{\mu\nu\sigma}=g_{\epsilon\mu}\gamma^{\epsilon}_{\nu\sigma}$} & \\  & & \\ \hline
   &\multirow{2}{*}{$\gamma^{\alpha}_{\nu\mu}=\frac{1}{2}\Big(\Lambda^{\alpha}_{\nu\mu}+(\Lambda_{\mu\nu\epsilon} +\Lambda_{\nu\mu\epsilon})g^{\alpha\epsilon}\Big)$}&\multirow{4}{*}{${\gamma^h_{jk}}= \frac{1}{2}(C^h_{jk}+C^k_{jh}+C^j_{kh})$}\\
   Contortion in terms& &\\
   of torsion &\multirow{2}{*}{$\gamma_{\mu\nu\sigma}=\frac{1}{2}(\Lambda_{\sigma\nu\mu}+\Lambda_{\mu\nu\sigma}+\Lambda_{\nu\sigma\mu})$}&\\
   & &\\ \hline
   \multirow{2}{*}{Basic form}&\multirow{2}{*}{$B_{\mu}=\Lambda^{\alpha}_{\mu\alpha}=\gamma^{\alpha}_{\mu\alpha}$}&\multirow{2}{*}{$B_j=\Lambda^k_{jk}=\gamma^k_{jk}=C^k_{jk}$}\\ && \\ \hline
   \multirow{4}{*}{Wanas tensor}&\multirow{2}{*}{$W^{\alpha}_{\sigma\mu\nu}=(\;\undersym{X}{i}^{\alpha}_{\;\;\widetilde{|}\nu\mu} -\;\undersym{X}{i}^{\alpha}_{\;\;\widetilde{|}\mu\nu})\;\undersym{\Omega}{i}_{\sigma}$} &\multirow{2}{*}{$W^h_{kij}=\;\undersym{X}{k}^h_{\;\;\widetilde{|}ji}-\;\undersym{X}{k}^h_{\;\;\widetilde{|}ij}$}\\
   & & \\
   &\multirow{2}{*}{$W^{\alpha}_{\sigma\mu\nu}=\Lambda^{\epsilon}_{\mu\nu}\Lambda^{\alpha}_{\sigma\epsilon}-\Lambda^{\alpha}_{\mu\nu|\sigma}$}
   &\multirow{2}{*}{$W^h_{kij}=C^l_{ij}C^h_{kl}-\;\undersym{X}{k}\cdot C^h_{ij}$}\\
   & & \\ \hline
\end{tabular}
\end{center}

%\clearpage

The above table merits some comments. We conclude this section and
the paper by the following remarks and comments.
\begin{itemize}
\item The third column of the above table is obtained by computing
      the expression of the geometric objects in the parallelization
      basis. For example, to compute the coefficients of the Levi-Civita connection $\;\overcirc{\Gamma}^{h}_{jk}$, set\, $Y=\;\undersym{X}{k},\,
      Z=\;\undersym{X}{j},\, V=\;\undersym{X}{h}$ in (\ref{riemannian}).
      Then, we get
      \begin{eqnarray*}
      2g(\nabla_{\;\undersym{X}{k}}\;\undersym{X}{j},\;\undersym{X}{h})&=&\;\undersym{X}{k}\cdot g(\;\undersym{X}{j},\;\undersym{X}{h})
      +\;\undersym{X}{j}\cdot g(\;\undersym{X}{h},\;\undersym{X}{k})-\;\undersym{X}{h}\cdot g(\;\undersym{X}{k},\;\undersym{X}{j})\\
      & &-g(\;\undersym{X}{k},[\;\undersym{X}{j},\;\undersym{X}{h}])+g(\;\undersym{X}{j},[\;\undersym{X}{h},\;\undersym{X}{k}])
      +g(\;\undersym{X}{h},[\;\undersym{X}{k},\;\undersym{X}{j}]).
      \end{eqnarray*}
      For the left-hand side (LHS),
      $$LHS=2\, g(\;\overcirc{\Gamma}^l_{jk}\;\undersym{X}{l},\;\undersym{X}{h})=2\, g_{lh}\; \overcirc{\Gamma}^l_{jk}=
      2\, \delta_{lh}\; \overcirc{\Gamma}^l_{jk}=2\; \overcirc{\Gamma}^h_{jk}.$$
      As $\;\undersym{X}{k}\cdot g(\;\undersym{X}{j},\;\undersym{X}{h})=\;\undersym{X}{k}\cdot g_{jh}=\;\undersym{X}{k}\cdot \delta_{jh}=0$,
      the first three terms of the right-hand side (RHS) vanish.
      Hence,
      \vspace{-6pt}
      \begin{eqnarray*}
      % \nonumber to remove numbering (before each equation)
        RHS &=& -g(\;\undersym{X}{k},C^l_{jh}\;\undersym{X}{l})+g(\;\undersym{X}{j},C^l_{hk}\;\undersym{X}{l})+g(\;\undersym{X}{h},C^l_{kj}\; \undersym{X}{l})\\
            &=& -g_{kl}\, C^l_{jh}+g_{jl}\, C^l_{hk}+g_{hl}\, C^l_{kj} \\
            &=& -C^k_{jh}+C^j_{hk}+C^h_{kj}.
      \end{eqnarray*}
      Accordingly, $$\;\overcirc{\Gamma}^h_{jk}=\frac{1}{2}(C^j_{hk}+C^h_{kj}-C^k_{jh}).$$
\item It is clear from the third column that almost all geometric objects of AP-geometry are expressed in terms of the
      global structure coefficients $C^h_{jk}$.
      The global structure coefficients thus play a dominant role in AP-geometry
      formulated in mesh indices. Its role is similar to, and even more
      important than, the role played by the torsion tensor
      $\Lambda^h_{jk}$ in AP-geometry formulated in world indices.
\item The structure coefficients $C^\alpha_{\mu\nu}$ with respect to an arbitrary
      basis $(e_\alpha)$ are not the components of a $(1,2)$-tensor field. In
      fact, let $e_{\alpha'}=A^\alpha_{\alpha'}\,e_\alpha$ under a change of local
      coordinates from $(x^\alpha)$ to $(x^{\alpha'})$ and let
      $[e_\mu,e_\nu]=C^\alpha_{\mu\nu}e_\alpha$ and
      $[e_{\mu'},\,e_{\nu'}]=C^{\alpha'}_{\mu'\nu'}\, e_{\alpha'}$. Then, one can easily show that the transformation formula
      for $C^\alpha_{\mu\nu}$ has the form:
      $$C^{\alpha'}_{\mu'\nu'}=A^{\alpha'}_{\alpha}\, A^{\mu}_{\mu'}\,
      A^{\nu}_{\nu'}\,
      C^\alpha_{\mu\nu}+K^{\alpha'}_{\mu'\nu'}-K^{\alpha'}_{\nu'\mu'},$$
      where $K^\alpha_{\mu\nu}=A^{\mu'}_{\mu}\, A^{\alpha}_{\alpha'}\,(e_{\mu'}\cdot A^{\alpha'}_\nu)$.
      Thus, $C^\alpha_{\mu\nu}$ are not the components of a tensor field of
      type $(1,2)$ unless $e_{\mu'}\cdot A^{\alpha'}_{\nu}=0$ (that is, the matrix of
      change of bases ${A^{\alpha'}}_{\alpha}$ is a constant matrix) or $K^\alpha_{\mu\nu}$ is
      symmetric with respect to $\mu$ and $\nu$. Also, the global
      structure coefficients $C^h_{jk}$ are not the components of a
      $(1,2)$-tensor field (they are $n^3$ functions defined globally on $M$
      and having certain properties). Nevertheless, for fixed $j$ and
      $k$, $C^h_{jk}$ are the components of the $(1,0)$-tensor field $[\;\undersym{X}{j},\;\undersym{X}{k}]$.
\item In the parallelization basis, although the coefficients of the canonical connection $\nabla$ vanish: $\Gamma^h_{jk}=0$
      ($\nabla_{\;\undersym{X}{j}}\;\undersym{X}{k}=0$ because of the AP-condition), its
      torsion tensor $T$ does not vanish: $\Lambda^{h}_{jk}=C^h_{jk}$; a phenomenon that
      never exist in natural local coordinates. This is due to the
      non-vanishing of the bracket $[\;\undersym{X}{j},\;\undersym{X}{k}]$
      in the expression of the torsion tensor:
      $T(\;\undersym{X}{j},\;\undersym{X}{k})=\nabla_{\;\undersym{X}{j}}\;\undersym{X}{k}-\nabla_{\;\undersym{X}{k}}\;\undersym{X}{j}-
      [\;\undersym{X}{j},\;\undersym{X}{k}]$.
      For the same reason the dual connection $\widetilde{\nabla}$ has also non-vanishing
      coefficients: $\widetilde{\Gamma}^h_{jk}=C^h_{jk}$.
\item From the table we have $\widetilde{\Gamma}^h_{jk}=2
      \widehat{\Gamma}^h_{jk}=-C^h_{jk}$ and
      $\;\overcirc{\,\Gamma}^{h}_{jk}=-\gamma^h_{jk}=\frac{1}{2}(C^h_{kj}+C^j_{hk}+C^k_{hj})$. This means that
      the dual connection coefficients and the symmetric connection coefficients coincide (up to a constant) and are
      both equal to the global structure
      coefficients (up to a constant). On the other hand, the Levi-Civita connection
      coefficients coincide with the contortion coefficients (up to a
      sign). This shows again that everything in AP-geometry is expressible in terms of
      the global structure coefficients. Also, all surviving connections
      in the space may be represented by only one of them, say the
      Levi-Civita connection.
\item A quick look at the third column of the above table may deceive and lead to erroneous conclusions: the symmetric connection is skew-symmetric and the
      Levi-Civita connection is non-symmetric. This is by no means true. The formulation of the notion of symmetry of connections using indices
      is not applied any more in this context. In fact, a linear connection is symmetric if and only if it coincides with its dual connection,
      and this is the case for both the symmetric and Levi-Civita connections.
      Another example: although the symmetric and dual connections coincide (up to a constant) in the parallelization basis, the symmetric
      connection has no torsion while the dual connection has a surviving torsion. This is, once more, due to the fact that the torsion
      expression has a bracket term which does not depend on the connection.
\item The torsion and contortion tensors of type $(0,3)$ are present in the natural basis while they are not in the parallelization
      basis. This is because the metric matrix is the identity matrix $(\delta_{jk})$. Consequently, mesh indices can not be
      raised or lowered using the metric $g_{jk}$.
\item In local coordinates the structure coefficients vanish:
      $[\partial_{\mu},\,\partial_{\nu}]=0$ (while in the
      parallelization basis the global structure coefficients are alive:
      $[\;\undersym{X}{j},\;\undersym{X}{k}]=C^h_{jk}\; \undersym{X}{h}$). For
      this reason the structure coefficients, in local coordinates, have no
      effect and the second column of the above table give thus the usual expressions we are  accustomed to
      \cite{local}. As an example, as the connection coefficients depend on coordinate systems, the canonical
      connection coefficients do not vanish in the natural basis (while they vanish in the parallelization basis).
\item For physical applications, especially in general relativity and gravitation, one can assign a signature to the positive definite metric $g$ defined by (\ref{metric}). This can be achieved, for $n=4$, by writing \,$ g=\eta_{ij}\, \undersym{\Omega}{i}\otimes\undersym{\Omega}{j},$
    where $\eta_{ij}=0 \text{ for } i\neq j, \,\, \eta_{ij}=-1 \text{ for } i=j=0, \,\, \eta_{ij}=+1 \text{ for } i=j=1, 2, 3.$
    The metric $g$ is thus nomore positive definite but rather nondegenerate.
\end{itemize}

%%%%%%%%%%%%%%%%%%%%%%%%%%%%%%%%%%%%%%%%%%%%%%%%%%%%%%%%%%%%%%%Ref%%%%%%%%%%%%%%%%%%%%%%%%%%%%%%%%%%%%%%%%%%%%%%%%%%%%%%%%%%%%%%%

\bibliographystyle{plain}

\end{document}